\renewcommand\footnotetextcopyrightpermission[1]{}
\newcommand{\punt}[1]{}
\newcommand{\cmnt}[1]{}
\definecolor{xxxcolor}{rgb}{0.8,0,0}
\newcommand{\ang}[1]{\langle #1 \rangle}
\newtheorem*{corollarynonum}{Corollary}
\newtheorem*{theoremnonum}{Theorem}
 \newtheorem{theorem}{Theorem}
\newcounter{history}
\newtheorem{observation}[theorem]{Observation}
\newtheorem{invariant}[theorem]{Invariant}
\newcommand{\obsref}[1]{Observation~\ref{obs:#1}}
\newcommand{\invref}[1]{Invariant~\ref{inv:#1}}
\newcommand{\Lineref}[1]{Line~\ref{lin:#1}}
\newcommand{\ignore}[1]{}
\newcommand{\myparagraph}[1]{\noindent\textbf{#1}}
\newcommand{\abt}{BAT}
\newcommand{\LComment}[1]{\State $\triangleright$ #1}
\newcommand{\true}{\mbox{\textsf{true}}}
\newcommand{\false}{\mbox{\textsf{false}}}
\newcommand{\nil}{\mbox{\textsf{nil}}}
\newcommand{\CAS}{\op{CAS}}
\newcommand{\f}[1]{\mbox{\textit{#1}}} 
\newcommand{\x}[1]{\mbox{\textit{#1}}} 
\newcommand{\remove}[1]{}  
\newcommand{\myedit}[2]{{\color{#1}{#2}}\normalcolor}
\newcommand{\y}[1]{\myedit{blue}{#1}}
\newcommand{\op}[1]{\mbox{\textsf{#1}}}
\newcommand{\delegateTwoF}{BAT-Del}
\newcommand{\delegateOneF}{BAT-EagerDel}
\newcommand{\PropStatus}{PropStatus}
\newcommand{\crchanges}[1]{{#1}}
\newcommand{\ajcomment}[1]{%
  \ifthenelse{\boolean{showcomments}}%
    {\textcolor{red}{\scriptsize \textbf{[AjComment:} #1\textbf{\scriptsize ]}}}%
    {}%
}
\newcommand{\hao}[1]{%
  \ifthenelse{\boolean{showcomments}}%
    {\textcolor{green}{\scriptsize \textbf{[HaoComment:} #1\textbf{\scriptsize ]}}}%
    {}%
}
\newcommand{\er}[1]{%
  \ifthenelse{\boolean{showcomments}}%
    {\textcolor{magenta}{\scriptsize \textbf{[EricComment:} #1\textbf{\scriptsize ]}}}%
    {}%
}
\newcommand{\pf}[1]{%
  \ifthenelse{\boolean{showcomments}}%
    {\textcolor{orange}{\scriptsize \textbf{[YoulaComment:} #1\textbf{\scriptsize ]}}}%
    {}%
}
\newcommand{\sjsuggest}[1]{%
  \ifthenelse{\boolean{showcomments}}%
    {{\color{red} #1}}%
    {}%
}
\newcommand{\sj}[1]{%
  \ifthenelse{\boolean{showcomments}}%
    {\textcolor{red}{\scriptsize \textbf{[SiddharthaComment:} #1\textbf{\scriptsize ]}}}%
    {}%
}
\newcommand{\ew}[1]{%
  \ifthenelse{\boolean{showcomments}}%
    {\textcolor{brown}{\scriptsize \textbf{[EvanComment:} #1\textbf{\scriptsize ]}}}%
    {}%
}
\begin{document}

\title{Concurrent Balanced Augmented Trees}
\settopmatter{authorsperrow=4}
\author{Evan Wrench}
\orcid{0009-0006-2849-8252}
\affiliation{%
  \institution{University of British Columbia}
  \city{Vancouver}
  \country{Canada}
}
\email{ewrench@student.ubc.ca}

\author{Ajay Singh}
\orcid{0000-0001-6534-8137}
\affiliation{%
  \institution{ICS-FORTH}
  \city{Heraklion}
  \country{Greece}
}
\email{ajay.singh1@uwaterloo.ca}

\author{Younghun Roh}
\orcid{0009-0001-2268-8921}
\affiliation{%
  \institution{Massachusetts Institute of Technology}
  \city{Cambridge}
  \country{USA}
}
\email{yhunroh@mit.edu}

\author{Panagiota Fatourou}
\orcid{0000-0002-6265-6895}
\affiliation{%
  \institution{ICS-FORTH}
  \city{}
  \country{}
}
\affiliation{%
  \institution{University of Crete}
  \city{Heraklion}
  \country{Greece}
}
\email{faturu@ics.forth.gr}

\author{Siddhartha Jayanti}
\orcid{0000-0002-2681-1632}
\affiliation{%
  \institution{Dartmouth College}
  \city{Hanover}
  \country{USA}
}
\email{svj@dartmouth.edu}

\author{Eric Ruppert}
\orcid{0000-0001-5613-8701}
\affiliation{%
  \institution{York University}
  \city{Toronto}
  \country{Canada}
}
\email{ruppert@eecs.yorku.ca}

\author{Yuanhao Wei}
\orcid{0000-0002-5176-0961}
\affiliation{%
  \institution{\mbox{University of British Columbia}}
  \city{Vancouver}
  \country{Canada}
}
\email{yuanhaow@cs.ubc.ca}

\renewcommand{\shortauthors}{E. Wrench, A. Singh, Y. Roh, P. Fatourou, S. Jayanti, E. Ruppert, Y. Wei}


\begin{abstract}
Augmentation makes search trees tremendously more versatile, allowing them to support efficient aggregation queries, order-statistic queries, and range queries in addition to insertion, deletion, and lookup.
In this paper, we present the first lock-free augmented \emph{balanced} search tree supporting generic augmentation functions.
Our algorithmic ideas build upon a recent augmented \emph{unbalanced} search tree presented by Fatourou and Ruppert [DISC, 2024].
We implement both data structures,
solving some memory reclamation challenges in the process, 
and provide an experimental performance analysis of them.
We also present
optimized versions of our balanced tree that use
delegation to achieve better scalability and performance (by more than 2x in \crchanges{most} workloads).
\crchanges{Our experiments show that our augmented balanced tree completes updates 2.2 to 30 times faster than the unbalanced augmented tree, and outperforms unaugmented trees by up to several orders of magnitude on 120 threads.}
\end{abstract}

\begin{CCSXML}
<ccs2012>
<concept>
<concept_id>10010147.10011777.10011778</concept_id>
<concept_desc>Computing methodologies~Concurrent algorithms</concept_desc>
<concept_significance>500</concept_significance>
</concept>
</ccs2012>
\end{CCSXML}

\ccsdesc[500]{Computing methodologies~Concurrent algorithms}

\keywords{Lock-free, Balanced BST, Augmentation, Ordered Statistics Queries, Range Query, Delegation}

\maketitle

\section{Introduction}


Sets and Key-Value Stores---which support insertions, deletions, and queries---are among the most fundamental and widely used data objects.
Sequentially, two efficient classical data structures are used for these objects: Hash Tables and Balanced Search Trees.
Hash Tables are fast;
but balanced trees are 
 \crchanges{significantly} more versatile, since they preserve ordering of keys and support {\em augmentation} of the nodes with extra information, to enable many more  operations efficiently, including
    {\bf aggregation queries},
    such as {\em size}, {\em sum of values}, {\em maximum}, {\em minimum}, and {\em average};
    {\bf order statistics},
    such as finding the $i$th smallest (or largest) key in the set, or the rank of a given key; and
    {\bf range queries} that
    list or aggregate keys 
    in a given range. 
While augmentation of sequential balanced trees is an indispensable and widely-used technique that is discussed in standard undergraduate algorithms textbooks 
\cite{CLRS4-17,GoodrichTamassia,SMDD19-7,SW11},
the technique has evaded concurrent implementation until now.
We provide the first lock-free {\em augmented} balanced search tree 
with unrestricted augmentation and demonstrate its efficiency empirically.

\subsection{Approach and Challenges}
In the single-process setting, a binary search tree (BST) is often augmented by adding  information to each node
to support additional operations.
For example, in an order-statistic tree,
each node is augmented with a \f{size} field that stores the number of keys in the subtree
rooted at that node.  This facilitates the order-statistic queries mentioned above. 
More generally, augmentation adds \emph{supplementary} fields to each node, whose values can be computed using information in the node and its children.
Augmented search trees are building blocks for
many other data structures, including measure trees \cite{GMW83}, priority search trees \cite{McC85} and link/cut trees \cite{ST83}.

An update to an augmented search tree must often modify the supplementary fields of many
nodes.  For example, in an order-statistic tree, an insertion or deletion
must update the \f{size} field of all ancestors of the inserted or deleted node.
This gives rise to two  key challenges in designing a concurrent augmented tree:
all changes to the tree required by an update must appear to take place
atomically, and nodes close to the root become hot spots of contention since
many operations must modify their supplementary fields.

Recently, Fatourou and Ruppert \cite{FR24} described a scheme
for augmenting concurrent search trees.
In particular, they applied the technique to a
lock-free unbalanced leaf-oriented BST~\cite{EFHR14}.
Their technique stores multiple versions of the supplementary fields. Operations that update the tree propagate information about the update
to each node along the path from the location of the update to the root, step by step.  
To ensure all changes appear atomic, the changes only become 
visible to the operations that use the supplementary fields when 
this propagation reaches the root.
Propagation is done \emph{cooperatively}:  if several processes
try to update a node's supplementary fields, they
need not all succeed, because one update can propagate information about many others.
As a bonus, the augmentation scheme's multiversioning provides simple snapshots of the set of keys stored in the search tree.

Most BST operations take time proportional 
to the tree's height, which can be linear in the number 
of keys  in the tree.
Hence,
\emph{balanced} BSTs, which guarantee the height is logarithmic in the number of keys, are often preferable.
We show how to extend Fatourou and Ruppert's augmentation technique
to get a lock-free augmented \emph{balanced} BST.
This requires 
coping with
rotations (rebalancing operations), which can change the structure of the tree at
any location, whereas the original paper
dealt only with insertions and deletions of leaves.
We apply our extension of the augmentation technique to Brown, Ellen and Ruppert's lock-free
implementation \cite{BER14} of a chromatic BST ~\cite{NS96}, which provides balancing guarantees.
We call our data structure {\em \abt\ (Balanced Augmented Tree)}.
\er{I think we said we would de-emphasize this}
\hao{I removed the discussion of memory reclamation challenges}
We also show how to add memory reclamation to both the augmented unbalanced BST \cite{FR24} and the \abt. 


Using the above ideas, we provide the first C++ implementation of both the augmented unbalanced BST \cite{FR24} and of \abt.
We provide an empirical performance analysis 
of the augmented BSTs. These experiments 
show that our \abt\ scales well
and provides order-statistic
queries that are, in some cases, orders of magnitude faster than previous
concurrent set data structures, which could only handle
them by a brute-force traversal of large portions of \crchanges{a snapshot of} the data structure.

Naturally, updating supplementary fields in augmented BSTs adds overhead to insertions and deletions.
We describe a novel mechanism to significantly reduce this overhead
by having processes delegate the work of propagating
information about updates up the tree to one another.
Roughly speaking, when several updates are trying to 
propagate information along the same path up the tree,
one update propagates information about all of the updates to the root, while the other updates wait for it to complete.
In our experiments, this  mechanism improves \abt's performance by up to a factor of~3. 
It can also be applied to speed up the original augmented (unbalanced) BST of Fatourou and Ruppert~\cite{FR24}.


\subsection{Our Contributions} 
\begin{itemize}[noitemsep, leftmargin=*]
    \item 
    {\bf Lock-Free \abt.}
    The first lock-free balanced augmented search tree supporting generic augmentation function.

    \item
    {\bf Implementation.}
    We implement our algorithm in C++ and provide a lightweight memory reclamation scheme. 

    \item
    {\bf Optimization.}
    We design two delegation schemes that reduce contention between processes that propagate augmenting values along intersecting paths up the tree.
    
    \item
    {\bf Performance.}
    Our experiments show that our \abt\ \crchanges{performs updates} between 2.2 and 30 times faster than the augmented unbalanced tree~\cite{FR24} \crchanges{while maintaining slightly better range query performance} across all our experiments.
    Compared to the fastest unaugmented concurrent tree~\cite{blelloch2024verlib}, BAT is up to several orders of magnitude faster in workloads with order-statistic queries or large range queries.    
\end{itemize}

\section{Related Work}
Three papers on augmenting concurrent search trees appeared in 2024: 
Fatourou and Ruppert (FR) \cite{FR24}, 
Sela and Petrank (SP) \cite{SP24}, and
Kokorin, Yudov, Aksenov and Alistarh (KYAA) \cite{KYAA24}.
FR and SP focus on unbalanced concurrent search trees, whereas KYAA supports balanced search trees.
Our approach extends that of FR (detailed in \Cref{augmentation-basic}), which is lock-free and the most general of the three in terms of the augmentation functions it supports.



SP gave a lock-based augmented tree that supports aggregating functions formed using abelian group operators (i.e., a generalization of augmenting nodes with the sizes of their subtrees).
In contrast, our approach imposes no such restriction on the augmentation.
In their approach, update operations announce themselves with timestamps, and each query must gather information from ongoing updates with smaller timestamps than the query using multiversioning.

KYAA use a wait-free hand-over-hand helping approach in which each node has an associated FIFO queue.  Before accessing a node, an operation must join the corresponding queue and help all operations ahead of it before reading or writing to the node.  Their approach applies to balanced trees and generalizes to handle a similar class of aggregation functions as SP's approach.



Our augmentation scheme, like FR's, has the bonus property of providing simple atomic snapshots of the set of keys in the BST.  
Taking snapshots of shared data structures has received much attention recently \cite{FPR19,NHP22,PBBO12,WBB+21,JJJ24, BCW24, JJ25, JJ25-2, blelloch2024verlib}.  
Naïve but inefficient algorithms for order-statistic queries can use such snapshots.
For example, one can count the keys in a given range by
taking a snapshot and traversing all keys in the range. 
This takes time linear in the number of keys in the range.
Our augmented trees can answer these queries much
more efficiently by traversing just two paths of the BST in time proportional to the tree's height.
In our experimental analysis we compare the performance
of these two approaches to answering such queries,
where the snapshots are provided by the general technique
of Wei et al.~\cite{WBB+21}.
Our \abt\ data structure builds on  
Nurmi and Soisalon-Soininen's chromatic tree \cite{NS96},
which was implemented in a lock-free manner by Brown, Ellen and Ruppert~\cite{BER14}.  

\crchanges{Manohar, Wei and Blelloch~\cite{MWB25} presented two techniques for designing lock-free augmented trees that support low-dimensional k-nearest neighbor queries. Their techniques are optimized for the \emph{shallow-effect augmentation} setting where each update operation typically only changes a few augmented values near the updated leaf rather than changing all of the augmented values up to the root. }







\newcommand{\finalized}{\textsc{Finalized}}
\newcommand{\fail}{\textsc{Fail}}

\section{Background}

Chromatic BSTs \cite{NS96} are a variant of red-black trees \cite{GS78} that separate the steps that balance the tree from the updates that insert or delete nodes,
which makes them more suitable for concurrent implementations.
In \Cref{ber}, we discuss a lock-free implementation of chromatic trees \cite{BER14}, which  our \abt\ data structure builds upon.
\Cref{augmentation-basic} describes Fatourou and Ruppert's augmentation technique  \cite{FR24}, which we  extend so that it can be applied to chromatic trees.

\subsection{Lock-free Chromatic Trees}
\label{ber}

Brown, Ellen and Ruppert \cite{BER14} gave a lock-free implementation of a chromatic BST.
It uses \op{LLX} and \op{SCX} operations, which are an extension of load-link and store-conditional operations.
\op{LLX} and \op{SCX} can be implemented from single-word \op{CAS} instructions \cite{BER13} and provide a simpler
way of synchronizing concurrent updates to a data structure.

\op{LLX} and \op{SCX} operate on a collection of \emph{records},  
each consisting of several words of memory, called the record's \emph{fields}.
Records can be \emph{finalized} to prevent further changes to their fields.
A process $p$ can perform \op{LLX}s on a set $V$ of records and then
do an \op{SCX} that atomically updates one field of a record in~$V$ and finalizes records in a specified subset of~$V$.
The \op{SCX} succeeds \emph{only if} no  \op{SCX} has modified
any record in $V$ between $p$'s \op{LLX} of that record and $p$'s \op{SCX}. 


\setlength{\belowcaptionskip}{-10pt}
\begin{figure}
    \includegraphics[width=7cm]{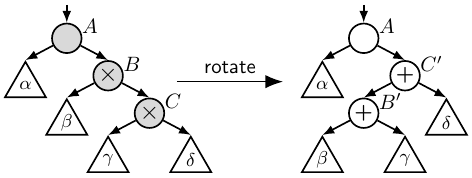}
    \caption{A rotation implemented using \op{LLX} and \op{SCX}.\label{rotate-patch-fig}}
\end{figure}
\begin{figure*}
    \includegraphics[width=0.7\textwidth]{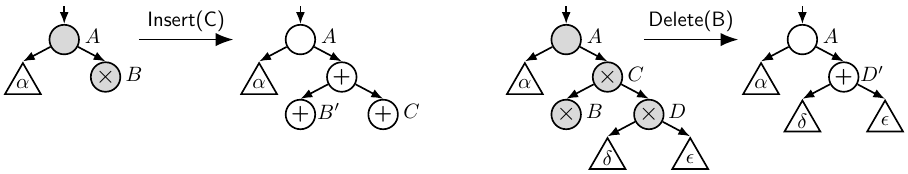}
    \vspace{-5pt}
    \caption{An insertion of $C$ and a deletion of $B$.  In both cases, $B$ is a leaf.  $B'$ and $D'$ are new copies of $B$ and $D$.
\label{insert-delete-fig}}
\end{figure*}
\setlength{\belowcaptionskip}{0pt}

\pf{In the left part of Figure 1, the labels of nodes seem to be on the edges; it would be nice if they are placed closer to the nodes. The same is true for some of the subsequent figures.}

Brown, Ellen and Ruppert \cite{BER14}  described a general technique for implementing
lock-free tree data structures using \op{LLX} and \op{SCX}.
The trees have child pointers, but no parent pointers.
Each tree node is a record.
The record for a node is finalized when the node is removed from the tree.
Starting from the tree's root, an update reads child pointers to arrive
at the location in the tree where it must perform the update.
\crchanges{An update can either be an insertion, a deletion or a rotation.}
Each update to the tree can be thought of as replacing a small group of neighbouring nodes 
in the tree (which we call a \emph{patch}) by a new patch, containing newly created nodes.

For example, \Cref{rotate-patch-fig} depicts \crchanges{one of the 22 possible rotations in the chromatic tree from \cite{BER14}, called RB1}; the nodes of the old patch, marked with $\times$'s, are removed from the tree and the new patch consists of two new nodes, marked with~$+$'s.
The rotation does not change the 
subtrees labelled by Greek letters (some may be empty subtrees).
To perform this rotation, a process first does an \op{LLX} on and reads the shaded nodes
$A, B, C$. 
(If any node of these nodes is \finalized, the rotation is aborted, since some other concurrent process has updated this portion of the tree.)
The process creates new nodes $B'$ and~$C'$, with the same keys as $B$ and~$C$, using 
information returned by the \op{LLX} operations on $B$ and $C$.
Finally, the process uses \op{SCX} to atomically update the child pointer of $A$
to point to~$C'$ and simultaneously finalize nodes $B$ and $C$, which have
been removed from the tree.  If this \op{SCX} succeeds,  no other
updates have modified $A$, $B$ or $C$ after the \op{LLX} operations on them,
ensuring the modification makes the atomic change shown in \Cref{rotate-patch-fig}.  
\Cref{insert-delete-fig} shows other examples of tree updates.

The lock-free chromatic tree~\cite{BER14} is kept balanced by maintaining balance properties that generalize the properties of red-black trees \cite{GS78}. \pf{I think it would be better if this sentence is moved earlier.}
The tree is a \emph{leaf-oriented} BST, meaning keys of the set being represented are stored in the leaves of the tree; internal nodes serve only to direct searches to a leaf.
\crchanges{A few sentinel nodes, each with key $\infty$ are included at the top of} the tree to simplify updates and ensure that the root node never changes.


Our \abt\ operations use the insert and delete operations of \cite{BER14} for the chromatic BST, denoted \op{CTInsert}($k$), \op{CTDelete}($k$).
They use an \op{SCX} to insert or delete a leaf with key $k$, as shown in \Cref{insert-delete-fig}.
If this creates a violation of a chromatic tree balance property,
the update operation is responsible for fixing it before it terminates by applying
rebalancing steps (like the one shown in \Cref{rotate-patch-fig}), again using \op{SCX}. \pf{We may want to say something about the case that the thread executing the update crashes before it applies the rebalancing steps.}
There is at most one balance violation per pending update operation,
and it follows that the height of a tree containing $n$ keys with $c$ pending operations is  $O(\log n+c)$.
\op{CTInsert} returns \true\ if $k$ was not already present, and \op{CTDelete} returns \true\ if it succeeds in deleting $k$; otherwise they return \false.

\subsection{Augmenting Search Trees}
\label{augmentation-basic}

Next, we describe Fatourou and Ruppert's technique for augmenting search trees \cite{FR24}, in particular, their augmentation
of a lock-free (unbalanced) leaf-oriented BST~\cite{EFHR14}.

Each node of the BST has a pointer to a \emph{version} object, which stores a version of the node's supplementary fields.
A node $x$'s version is updated by performing a \emph{refresh} on $x$.
The refresh reads
$x$'s children's versions $\x{v}_\ell$ and $\x{v}_r$, computes the value of $x$'s
supplementary fields, creates a new version object $\x{v}'$ constructed using this
information and finally performs a \op{CAS} to swing $x$'s version pointer to~$\x{v}'$.
In addition to the supplementary fields, $\x{v}'$ 
\crchanges{stores the key of $x$ and} 
pointers to $\x{v}_\ell$ and $\x{v}_r$.  Thus, the version objects
themselves form a BST (called the \emph{version tree}) that mirrors the structure of the original tree (called the \emph{node tree}).  See \Cref{fig-del1}.

After inserting or deleting a leaf of the BST,
an update  must modify the supplementary fields of nodes along the path from the leaf
to the root.  To do so, it performs a refresh (at most) twice at each node along the path.
If a refresh successfully updates the node's version,
then information about the operation has propagated to the node.
If both attempts fail, it is guaranteed that
another process has already propagated information about the operation to the node.
Thus, processes cooperate to carry information about all updates to the root.
(This cooperative propagation \crchanges{technique} originated in a universal construction \cite{ADT95}.)

The search tree stores child pointers but no parent pointers (which would be hard to maintain).
To refresh each node on the leaf-to-root path, an update stores the nodes it traversed to reach the leaf from the root on a thread-local stack.  Then, it can refresh each node as it pops \crchanges{nodes} off the stack to propagate information about the update to the root.

The proof of correctness for the augmented BST defines the \emph{arrival point} of an update operation 
at a node to be the moment when information about the update has been transmitted to the supplementary
fields of the node.  For example, \crchanges{the \op{SCX} step that adds} $C$ into the tree as shown in \Cref{insert-delete-fig}
is the arrival point of the insertion at $C$ and $C$'s parent, because the version objects of those
new nodes are initialized to reflect the insertion.  The first successful CAS by a refresh on some
ancestor $X$ of $C$ that reads $X$'s child $Y$ after the operation has arrived at $Y$ is the arrival
point of the operation at $X$.  The arrival point of the update operation at the root serves as the linearization point of the update.  Much of the proof of correctness is concerned with showing that
operations do arrive at the root before they terminate, and the key invariant that the version tree
rooted at a node's version object accurately reflects all updates that have arrived at that node.

Whenever a refresh updates a node's supplementary fields, it
creates a new version, so the contents of versions are 
immutable.  Thus, when a (read-only) query operation reads the root's version pointer, it
essentially obtains a snapshot of the entire version tree.  
Any query designed for a sequential
augmented BST can therefore be executed on this snapshot without
any adjustment to cope with concurrency.  The query is linearized when it reads the root's version.


\begin{figure*}
\small
\begin{minipage}{0.49\textwidth}
\begin{algorithmic}[1]
\State \textbf{type} Node \Comment{used to store nodes of chromatic tree}
    \State \hspace{1em} LLX/SCX record containing the following fields:
        \State \hspace{2em} Node *\f{left}, *\f{right} \Comment{pointers to children}
        \State \hspace{2em} Key \f{key} \Comment{tree is sorted based on key field}
        \State \hspace{2em} int \f{weight} \Comment{used for balancing tree}
        \State \hspace{2em} bool \f{finalized} \Comment{node marked as removed}
    \State \hspace{1em} 
    Version* \textit{version} \Comment{pointer to current Version}
\Statex
\State \textbf{type} Version \Comment{stores a node's supplementary fields}
    \State \hspace{1em} Version *\f{left}, *\f{right} \Comment{ pointers to children Versions}
    \State \hspace{1em} Key \f{key} \Comment{key of node for which this is a version}
    \State \hspace{1em} int \f{size} \Comment{number of leaf descendants}
\Statex
\State Node *\x{Root} \Comment{shared pointer to tree root}

\medskip

\Function{Insert}{Key \x{key}}{Boolean}
    \State Boolean \x{result} $\gets$ \op{CTInsert}(\x{key}) with this change: \label{alg:CTInsert}
    \State \hspace*{8mm}Whenever allocating a new Node, apply the
    \Statex \hspace*{12.5mm} Version Initialization Rules to initialize its version.\label{alg:ins-changes}
 	\State \op{Propagate}(\x{key}) \label{lin:ins-propagate}
    \State \Return \x{result}
\EndFunction{Insert}

\medskip

\Function{Delete}{Key \x{key}}{Boolean}
    \State Boolean \x{result} $\gets$ \op{CTDelete}(\x{key}) with this change:
	\State \hspace*{8mm}Whenever allocating a new Node, apply the
    \Statex \hspace*{12.5mm}  Version Initialization Rules to initialize its version.\label{alg:del-changes}
    \State \op{Propagate}(\x{key}) \label{lin:del-propagate}
    \State \Return \x{result}
\EndFunction{Delete}

\medskip

\Function{Find}{Key \x{key}}{Boolean} \Comment{do standard BST search in version tree}
	\State Version* $\x{v}\leftarrow \x{Root}.\x{version}$ \Comment{Start at the root}\label{read-root-find}
	\While{\x{v} has non-\nil\ children} 
        \State $\x{v}\gets(\x{key}<\x{v}.\f{key}\ ?\ \x{v}.\f{left} : \x{v}.\f{right})$
	\EndWhile
	\State \Return{($\x{v}.\f{key} = \x{key}$)}
\EndFunction{Find}

%
\algstore{algoct}
\end{algorithmic}
\end{minipage}
\hfill
\begin{minipage}{0.49\textwidth}
\begin{algorithmic}[1]%
\label{algo:procrefresh}%
\algrestore{algoct}%
\Function{Propagate}{Key \x{key}}{}
    \State Set $\x{refreshed}\leftarrow \{\}$\Comment{stores refreshed nodes} \label{alg:init-refresh}
    \State Stack $\x{stack}$ initialized to contain \x{Root} \Comment{thread-local}
    \Repeat \label{prop-repeat}
        \State Node *$\x{next}\leftarrow \x{stack}.\op{Top}()$
        \Loop \label{alg:propagate-loop} \Comment{go down tree until child is refreshed}
            \State $\x{next}\!\leftarrow\! (\x{key} < \x{next}.\f{key}\, ?\, \x{next}.\f{left}\! :\! \x{next}.\f{right})$
            \State \textbf{exit loop when} $\x{next}\in\x{refreshed}$ or \x{next} is a leaf
            \State \x{stack}.\op{Push}(\x{next}) \label{alg:propagate-endloop}
        \EndLoop 
        \State Node *$\x{top}\leftarrow \x{stack}.\op{Pop}()$
        \If{$\neg \op{Refresh}(\x{top})$}\Comment{if first refresh fails}\label{lin:r1} 
            \State $\op{Refresh}(\x{top})$\Comment{refresh again}\label{lin:r2}  
        \EndIf
        \State $\x{refreshed}\gets \x{refreshed}\cup\{\x{top}\}$\label{alg:prop-update-refresh}
    \Until{$\x{Root}\in\x{refreshed}$}\label{prop-until}
\EndFunction{Propagate}

\medskip

\Function{Refresh}{Node* $x$}{Boolean}
    \State Version* $\x{old} \gets x.version$  \label{lin:readOldVer}

    \Repeat \Comment{get consistent view of left child and its version}\label{lin:Xlbegin}
        \State Node* $x_l \leftarrow x.\f{left}$ \label{lin:readXl}
        \State Version* $\x{v}_l \leftarrow x_l.\f{version}$ \label{lin:readXlV1}
        \If{$\x{v}_l = \nil$} \label{alg:ref-ifl}
            \State \op{Refresh}($x_l$)  \label{alg:rec1}
            \State $\x{v}_l \leftarrow x_l.\f{version}$ \label{lin:readXlV2}
        \EndIf
    \Until{$x_l = x.\f{left}$}\label{lin:Xlend}

    \Repeat \Comment{do the same thing for the right child} \label{lin:Xrbegin}
        \State Node* $x_r \leftarrow x.right$ \label{lin:readXr}
        \State Version* $\x{v}_r \leftarrow x_r.\f{version}$ \label{lin:readXrV1}
        \If{$\x{v}_r = \nil$} \label{alg:ref-ifr}
            \State \op{Refresh}($x_r$)    \label{alg:rec2}
            \State $\x{v}_r \leftarrow x_r.\f{version}$\label{lin:readXrV2}
        \EndIf
    \Until{$x_r = x.\f{right}$}\label{lin:Xrend}

    \State Version* $new \gets$ new Version($\f{key} \gets \x{x.key}, \x{left} \gets \x{v}_l,$  
    \Statex \hfill $\x{right} \gets \x{v}_r, \x{size} \gets \x{v}_l.\x{size} + \x{v}_r.\x{size}$)\label{lin:allocateVersion}
    \State \Return $(\op{CAS}(x.\f{version}, \x{old}, \x{new}) = \x{old})$ \label{lin:refreshCAS}
\EndFunction{Refresh}
\end{algorithmic}
\end{minipage}
\caption{\label{abt-pseudocode}Pseudocode for \abt.  The details of \op{CTInsert} and
\op{CTDelete} on the chromatic tree are provided in~\cite{BER14}.
}
\end{figure*}

\section{Lock-Free Balanced Augmented Tree}
\label{sec:lfact}



We now present \abt.
An update operation first executes the chromatic tree routine
\op{CTInsert} or \op{CTDelete} \cite{BER14} to add or delete a leaf of the chromatic BST
(see Figure \ref{insert-delete-fig}).
\crchanges{These two routines also perform rotations to  eliminate any  balance violations introduced by the update.}
Then, the update must modify the supplementary fields of nodes along
the path from the affected leaf to the root to accurately reflect the update.
This is done using a routine called \op{Propagate}.

To support augmentation, each \abt\ node has a \f{version} field, which 
points to a version object that stores the node's supplementary fields, 
as in the augmented unbalanced BST \cite{FR24} (see \Cref{augmentation-basic}).
A node's 
\f{version} field is not included as part of the \op{LLX}/\op{SCX} 
record that makes up the rest of the node's contents; the \f{version} field can be manipulated directly by \op{CAS} instructions.
This separation ensures that our augmentation does not interfere with
the original chromatic tree 
operations. 

All changes to the chromatic tree of \cite{BER14} are performed by an SCX that replaces one patch of the
BST with a new patch consisting of new nodes and simultaneously finalizes the nodes of the replaced patch.
Whenever a new patch is created for an insertion, deletion or rebalancing step, it uses the following rules to initialize its nodes' versions.
\crchanges{(We use \x{size} as an example augmentation; supplementary fields required by any other augmentation
could be used instead.)}

\begin{definition}[Version Initialization Rules]\label{def:vpr}
Whenever \abt\ creates a new node $x$, its \f{version} field is initialized as follows.
\begin{enumerate}
\item  If $x$ is a non-sentinel leaf (i.e., $x$'s key is not $\infty$), $x.\f{version}$ initially points
to a version with \f{size} $1$.
\item If $x$ is a sentinel leaf, $x.\f{version}$ initially points
to a version with \x{size} $0$.
\item If $x$ is an internal node,  $x.\f{version}$ is initially \nil.
\end{enumerate}
Moreover, the key of every newly-allocated version object is the same as that of the node pointing to it. 
\end{definition}
If a node's \f{version}  is \nil, it indicates that information that should be in the node's supplementary fields is missing. 

We now describe how the \op{Propagate} routine called by an update operation modifies
supplementary fields of nodes starting
from the leaf $\ell$ that was inserted or deleted and moving up to the root.
The supplementary fields are stored in the nodes' versions.
As in \Cref{augmentation-basic}, the basic mechanism for updating a node $x$'s version is a \emph{refresh}
that attempts to install a new version for $x$ that is created using information read from $x$'s children's versions.\pf{Possibly split this sentence to two? }
The goal of \op{Propagate} is to ensure refreshes successfully install new versions at each of a sequence of nodes
$x_1, x_2, \ldots, x_r$, where $x_1$ is the parent of the leaf $\ell$ and $x_r$ is the root so that 
the successful refresh on $x_i$ reads information from $x_i$'s child $x_{i-1}$ \emph{after}
the successful refresh on $x_{i-1}$.
This way, information about the update operation is propagated all the way to the root.

To facilitate this, \op{Propagate} uses a thread-local stack to keep track of the nodes that it should refresh.
The operation first pushes the internal nodes that were visited as it traverses from the root to the leaf $\ell$.
If there were no concurrent modifications to the node tree, \op{Propagate} could simply retrace
its steps, refreshing each node it pops off the stack.
However, concurrent operations on the node tree may have added nodes or replaced nodes that appear in this stack.
Since each refresh transmits information only from children to parent, 
any gap in the chain of refreshed nodes would prevent information about the update from reaching the root. 
Before proceeding to the top node $x$ on the stack, \op{Propagate} checks whether it has already
refreshed $x$'s current child.
If so, it pops $x$ and refreshes it.
Otherwise, it traverses down the tree from $x$ (in the direction towards \crchanges{the update's} key), pushing nodes
onto the stack until it pushes a node $y$ whose child has been refreshed (or is a leaf).
It then pops $y$ and refreshes it,
repeating this process until the root is
refreshed.

\begin{figure*}
\begin{minipage}{0.31\textwidth}
    \includegraphics[width=\textwidth]{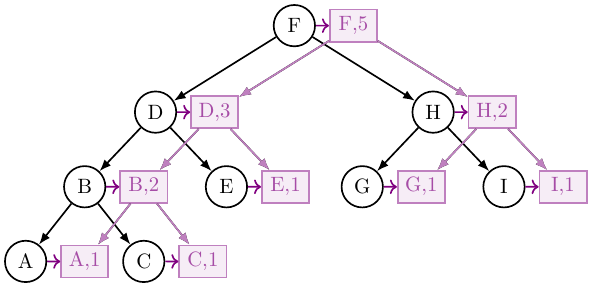}
\subcaption{\abt\ containing keys $A,C,E,G,I$.\label{fig-del1}}
\end{minipage}%
\hfill%
\begin{minipage}{0.34\textwidth}
    \includegraphics[width=\textwidth]{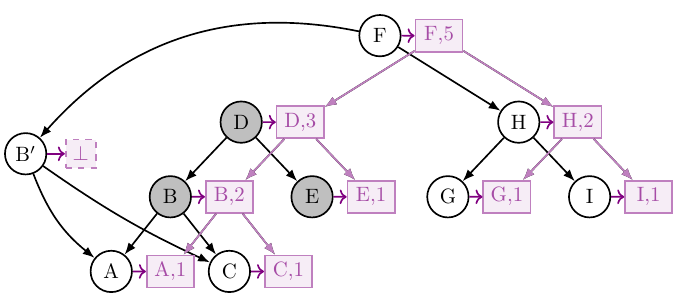}
\subcaption{After \op{CTDelete} removes leaf $E$.\label{fig-del2}}
\end{minipage}%
\hfill
\begin{minipage}{0.34\textwidth}
\includegraphics[width=\textwidth]{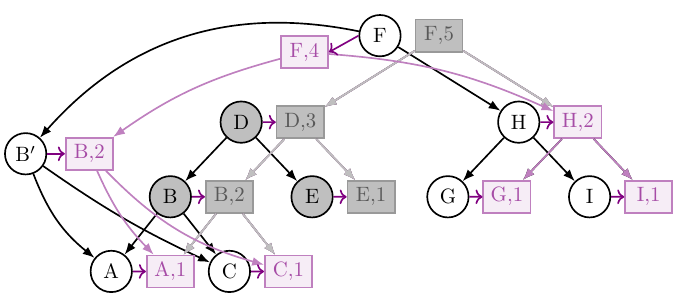}
\subcaption{After \op{Propagate} refreshes $B'$ and $F$.\label{fig-del3}}    
\end{minipage}
    \caption{\label{fig-del}
    Stages of \op{Delete}($E$).  Circles are nodes, rectangles are versions.  Elements shaded gray are unreachable from \crchanges{the root node $F$}.}
\end{figure*}

As in \Cref{augmentation-basic}, 
\op{Propagate} refreshes each node a second time if the first attempted refresh fails.
If the second attempt fails, it is guaranteed that some other successful refresh was performed 
entirely during the interval of the \op{Propagate}'s double refresh.
That other \crchanges{refresh} is guaranteed to have written information into the node that includes
the update the \op{Propagate} is attempting to complete.

All newly created internal nodes are initialized with \nil\ version pointers (\Cref{def:vpr})
to indicate that their supplementary fields have not yet been computed.
If an update \op{op} ever \crchanges{reads} a node $x$'s \nil\ version pointer, 
\op{op} performs a {\em refresh} to fix $x.\f{version}$ 
using information read from $x$'s children's versions.
\crchanges{If the version pointers of $x$'s children are themselves \nil,}
\op{op} recursively tries to fix the children's \x{version} pointers
by reading $x$'s grandchildren. This process goes down the tree recursively until it finds 
nodes whose \x{version} pointers are non-\nil.
(This is guaranteed to happen, since leaf nodes never have \nil\ \f{version} pointers.)
Then, the recursion stops and the version pointers of all nodes 
visited during the recursion are \crchanges{set to non-\nil\ values}.

As in \cite{FR24}, (read-only) query operations simply read the root's version to take a snapshot of the version tree, and run
a \emph{sequential} algorithm on this snapshot, unaffected by concurrent updates.
For example, any order-statistic query in \cite{CLRS4-17} for sequential BSTs can be used verbatim on \abt.

\noindent
\textbf{Description of the Pseudocode.}
Pseudocode for \abt\ is in Figure~\ref{abt-pseudocode}. 
It uses two types of objects.
A Node object represents a tree node and stores a key \x{key},
a \x{weight} (used for rebalancing),  the pointers \x{left} and \x{right} to the
Node's children, and the \x{version} pointer.
The first four fields of a node constitute an \op{LLX}/\op{SCX} record.
A Version object represents one version of the supplementary fields of a node.
It stores a key \x{key}, the \x{size} of the subtree
rooted at the version, and the pointers to the \x{left} and \x{right} children in the version tree.

\crchanges{\abt's
\op{Insert} and \op{Delete} operations simply call \op{CTInsert} and \op{CTDelete} to perform the update on the node tree,
with one change:} the \x{version} field of any newly-allocated node is
initialized using 
rules of Definition~\ref{def:vpr}
(line~\ref{alg:ins-changes} or line~\ref{alg:del-changes}). 
Then, the operation invokes \op{Propagate} (line~\ref{lin:ins-propagate}
or line~\ref{lin:del-propagate}).
Even unsuccessful updates must call \op{Propagate}. 
For example, a process $p$'s \op{Delete}(\x{key}) might fail 
because \x{key} was deleted from the node tree by a concurrent process $q$ 
that has not propagated its deletion to the root, 
so $p$ must ensure $q$'s deletion is propagated before $p$ returns \false. \pf{This is because of the way we assign lin points. Do we want to be more specific here?} 

The loop at lines~\ref{prop-repeat}--\ref{prop-until} of \op{Propagate} ensures that information
about an update operation reaches the root by 
performing a double \op{Refresh} (lines~\ref{lin:r1}--\ref{lin:r2}) on a sequence of nodes, as described above.
It uses \crchanges{the thread-local} \x{stack} to determine which nodes to refresh.
Initially,  \x{stack} is empty. In the first iteration of the outer loop, 
the inner loop at lines~\ref{alg:propagate-loop}--\ref{alg:propagate-endloop} pushes all nodes visited
when searching for key \x{key} onto \x{stack}.
Since \op{CTInsert} and \op{CTDelete} also execute a search for \x{key}, 
we could have those routines store the nodes they visit in \x{stack} before calling \op{Propagate} 
as an optimization.

If \op{Propagate} only refreshes the nodes on the stack, it may skip a new ancestor,
for example one that was rotated onto the search path by 
another update's rebalancing step. 
To ensure that no ancestor is skipped, \op{Propagate} stores the
set of nodes that it has refreshed in its local variable \x{refreshed} (line~\ref{alg:prop-update-refresh}). 
If, at any iteration of the outer loop, the child of the top node $x$ on the \x{stack}
is not in \x{refreshed},
the inner loop at lines~\ref{alg:propagate-loop}--\ref{alg:propagate-endloop}
traverses down the tree from $x$ until reaching a node whose child has been refreshed (or is a leaf).

A \op{Refresh} on node $x$ reads $x$'s left and right child pointers and the children's \x{version} pointers
(lines~\ref{lin:Xlbegin}--\ref{lin:Xlend} and~\ref{lin:Xrbegin}--\ref{lin:Xrend}).
If either child's \x{version} is \nil\ (lines~\ref{alg:ref-ifl} and~\ref{alg:ref-ifr}), 
it recursively refreshes that child (lines~\ref{alg:rec1} and~\ref{alg:rec2}). 
\crchanges{The loops ensure that the node $x_\ell$ or $x_r$
was still a child of $x$ when its \x{version} field was read; if not, the loop rereads $x$'s child and its \x{version} field.}
When all the recursive calls return, \op{Refresh} has all the needed information
to refresh~$x$. Then, it allocates a new version object initializing its 
fields appropriately (line~\ref{lin:allocateVersion}), and attempts to change the \x{version} pointer of 
$x$ to point to this new Version (line~\ref{lin:refreshCAS}). 

The \op{Find} routine does a standard sequential BST search on the version tree rooted at $Root.\f{version}$.
Any other read-only query operation can be done in the same way.

\remove{
To better understand the problem, 
we present a bad scenario that illustrates the problem the algorithm would encounter
in case the descendant Nodes with \nil\ \x{version} pointers of Nodes of the \x{stack} 
were not refreshed. ++++
\pf{Present the bad scenario here.}

}

\remove{
Here, we describe how to augment the lock-free chromatic tree of Brown, Ellen, and Ruppert~\cite{BER14} using the lock-free augmentation technique proposed by Fatourou and Ruppert~\cite{FR24}.  In~\cite{BER14}, the chromatic tree is presented as a dictionary; however, for the sake of a simpler presentation, we describe the tree as a set.

Type definitions appear in Figure~\ref{code-typeupd}.
As in \cite{BER14}, each Node has an LLX/SCX record with fields storing the left and right child pointers, the key of the node and a weight, which is used to maintain balance.
In addition, a node contains a pointer to a version object, as in \cite{FR24}.  This pointer is not part of the node's LLX/SCX record, so that the augmentation leaves the behaviour of the chromatic tree of \cite{BER14} unchanged.
A version object for a node $x$ stores the supplementary fields for $x$ and pointers 
to the versions read from $x$'s two children.
In the pseudocode, we illustrate the construction using \f{size} 
as the supplementary field, but any other augmentation can be handled in exactly the same way.
The version object also stores the key of $x$ to aid queries in navigating the version tree.

\er{for consistency, the sum field in version objects should probably be called \f{size} instead}

The \op{update} procedure on a \x{key} in Figure~\ref{code-typeupd} consists of three phases.
In the first phase, a thread repeatedly execute its insert or delete operation until successful. 
This is followed by a second phase where the thread repeatedly traverses the search path for the \x{key} of its operation and attempts to rebalance, correcting any violations found along the path. These traversals are repeated until a clean traversal is completed without encountering any violations.
In the third phase, the operation propagates its versions up the tree.

}


\subsection{Linearizability}

To prove that \abt\ is linearizable, we extend the arguments used for the augmented unbalanced tree \cite{FR24}.
The proof appears in Appendix~\ref{sec:correctness};
we sketch it here.
The goal is to define arrival points of update operations at nodes, ensuring
the tree of versions rooted at a node's version reflects all the operations that have arrived at that node so far. This is formalized as follows.
\begin{invariant}\label{key-inv}
For any node $x$ in the tree that has a non-\nil\ version $v$, 
the version tree rooted at $v$ is a legal augmented BST whose leaves store the set of keys
that would be obtained by executing 
the operations that have arrived at $x$ in the order of their arrival points at $x$ (starting with an empty set).
\end{invariant}
\Cref{key-inv} lets us linearize operations as follows.
An update operation is linearized at its arrival point at the root.
\Cref{key-inv} implies that the version tree rooted at the root's version is a legal augmented
BST that reflects all updates  linearized so far.
Thus, we  linearize each query when it reads the root's \f{version} and gets a snapshot of
the version tree.

Intuitively, an update's arrival point at a node is when information about the operation
is taken into account in the version tree rooted at the node's version. 
The \op{Insert}($C$) shown in \Cref{insert-delete-fig} arrives at the leaf $C$ when the \op{SCX}
adds it to the node tree, since $C$'s \f{version} is initialized to have key $C$ and size 1, according to \Cref{def:vpr}.
The \op{Delete}($B$) in \Cref{insert-delete-fig} arrives at \crchanges{$D'$ when the \op{SCX} changes $A$'s child pointer.  This \op{SCX} is also
the \op{Delete}'s arrival point in
all nodes of $\delta$ on the search path for $B$.}
All of those nodes' versions reflect the absence of key $B$,
since $B$ could not have been counted when constructing their versions.
Arrival points are transferred from a node $x$ to its parent $y$ when a refresh successfully updates $y.\f{version}$:
the refresh's \CAS\ is the arrival point at $y$ of all operations that arrived
at $x$ before the refresh read $x$'s version and that do not already have an arrival point at $y$.

\er{Do we want to mention arrival points of failed updates?}

We must ensure that \Cref{key-inv} is preserved by each modification of the node tree that uses an \op{SCX} to
replace one patch  by another.
As an example, consider the rotation shown in \Cref{rotate-patch-fig}.
This \op{SCX} 
will succeed even if $B$'s \f{version} field has been updated by another process during the time the replacement
patch was being constructed.
\crchanges{It would be} difficult to ensure that the \f{version} field of the new node $C'$ is initialized
to be as up-to-date as the \f{version} field of~$B$.  When the \op{SCX}
changes $A$'s child from $B$ to $C'$, information about operations that had arrived at $B$ (and hence at $A$) might not be included in the version for $C'$.  
If $A$ is then refreshed using information from $C'$, $A$ may lose information about operations that had already previously arrived at $A$,
which would violate \Cref{key-inv}.
Avoiding this bad scenario is the reason we initialize the \f{version} of all new internal nodes (like $C'$) to  \nil, indicating that their supplementary fields must be recomputed when they are needed.
This exempts $C'$ from the requirements of \Cref{key-inv}, which must hold only for nodes with non-\nil \f{version} fields.
On the other hand, it is trivial to initialize the \f{version} field of leaf nodes to accurately reflect
the single key in the leaf.


We must ensure that \Cref{key-inv} is restored when a node's \f{version} pointer is first set to a non-\nil\ value.
Consider the node $C'$ for the rotation shown in \Cref{rotate-patch-fig}.
To avoid violating \Cref{key-inv} as described above,
we must ensure that when $C'.\f{version}$ is changed from \nil\ to a version object $v$,
$v$ reflects all operations that had previously arrived at $A$ from $B$.\pf{Also, those who have arrived at B without yet having arrived at A?}
For this, we use the fact that all such operations must have arrived at $B$ via the roots of
one of the subtrees $\beta,\gamma$ or $\delta$.
When the \nil\ \f{version} pointer of $C'$ is fixed, the recursive refresh routine will ensure that
$B'.\f{version}$ is fixed first, and thus the new version installed at $C'$ will draw upon the latest
information from the roots of $\beta,\gamma$ or $\delta$, ensuring that all operations that had previously
arrived at $B$ will be included in the new version installed at $C'$.\pf{It seems to me that the argument for $\beta$ and $\gamma$ is not exactly the same as that for $\delta$.}
(In the full proof of correctness, we must also consider the possibility that the roots of these subtrees 
have also been replaced by other modifications to the tree after the rotation's \op{SCX}, but a similar argument applies in this case.)
Thus, we define the \op{SCX} that performs the rotation shown in \Cref{rotate-patch-fig}
to be the arrival point at $B'$ of all operations that had arrived at the roots of $\beta$ and $\gamma$,
and the arrival point at $C'$ of all operations that had arrived at the roots of $\beta,\gamma$ and $\delta$.
Even though these operations are not reflected in the (\nil) \f{version} pointers of $B'$ and $C'$, 
we know that when those pointer are changed to non-\nil\ values, all of the operations will be
reflected, restoring \Cref{key-inv}.
It is as if the information about the operations is effectively already in the versions of $B'$ and $C'$ 
as soon as the \op{SCX} performs the rotation, because any operation that reads their \f{version} fields
must first fix them to include that information.

The proof of the analogue of \Cref{key-inv} for the augmented unbalanced BST  \cite{FR24} 
uses another invariant:
if an update operation \op{op} 
with key \x{key} has arrived at a node $x$, 
then \op{op} has also arrived at the child of $x$ on the search path for \x{key}. 
Thus, the set of nodes that \op{op} has arrived at form a suffix of the search path for \x{key} in the BST.
Our definition of arrival points also has this property.
To prove \Cref{key-inv}, we also show that the operation is reflected in the version objects
of all nodes of this suffix that have non-\nil\ version pointers.
\section{Reducing Contention via Delegation}
\label{sec:delegation}

A performance bottleneck of \abt, and the original augmented unbalanced BST~\cite{FR24}, is that all updates propagate their changes 
all the way to the root. This causes more cache misses and high contention in the upper levels of the tree. We propose a way to alleviate these drawbacks by having instances of \op{Propagate} delegate their work to concurrent \op{Propagate} instances working along the same path. To ensure linearizability, a \op{Propagate} that delegates its work cannot return until
the \op{Propagate} to which it delegates finishes.
We propose two implementations of this idea.
Detailed pseudocode is in Appendix~\ref{sec:delegate-alg}.


In our first implementation, called \textbf{\delegateTwoF{}}, an instance $P$ of \op{Propagate} delegates its work \crchanges{if} failing \emph{both} of its attempts to refresh a node $x$ at lines~\ref{lin:r1} and~\ref{lin:r2} of Figure~\ref{abt-pseudocode}.
This means some other successful \op{Refresh} on $x$ occurs between the start of $P$'s first failed \op{Refresh} and the end of $P$'s second failed \op{Refresh}.
Let $R_{\ell}$ be the last such successful \op{Refresh} and $P_{\ell}$ be the \op{Propagate} that called~$R_{\ell}$.
After $P$'s second failed \op{Refresh}, 
if $x$ is not finalized (i.e., it is still in the tree),  $P$ delegates the rest of its work to $P_{\ell}$ by 
waiting for $P_{\ell}$ to complete before returning.
\crchanges{We say that the delegation happens at $R_{\ell}$'s successful CAS.}
\crchanges{When $P_{\ell}$ completes, all of the operations $P$ was attempting to propagate will have arrived at the root.}

We next describe how $P$ synchronizes with  $P_{\ell}$.
Each call to \op{Propagate} \crchanges{creates a} \PropStatus\ object, which stores a boolean value \f{done} indicating whether or not the \op{Propagate} has finished and a pointer to another \PropStatus\ if the \op{Propagate} has delegated its work (if not, this pointer is \nil).
Each version stores a pointer to the \PropStatus\ of the \op{Propagate} that created it.
When a \op{Propagate} $P$ fails the \op{CAS} of its second \op{Refresh} on a node, the \op{CAS} returns the version written by the last successful \op{Refresh} $R_\ell$, and $P$ can delegate to that version's \PropStatus, which belongs to $P_\ell$.
$P$ waits for $P_\ell$ to finish by spinning on the \x{done} field in the \PropStatus\ object.
There may be a chain of delegations, so to avoid waiting for the \x{done} flag to propagate down the chain, $P$ can find the head of the chain using the pointers in the \PropStatus\ objects, and directly wait on the head \crchanges{of the chain}.


For correctness, 
we distinguish between \emph{top-level} \op{Refresh}es called by \op{Propagate} and \emph{recursive} \op{Refresh}es called by \op{Refresh} to fix \nil\ version pointers.
We need to ensure that a top-level \op{Refresh} cannot fail (and thus delegate) due to a recursive \op{Refresh}.
We do this by making the \op{CAS} in recursive \op{Refresh}es only change version pointers from \nil\ to non-\nil\ and the \op{CAS} in top-level \op{Refresh}es change version pointers only from non-\nil\ to non-\nil.
\crchanges{We accomplish this by creating two separate refresh functions that only differ in their first few steps. A recursive \op{Refresh} begins by reading the node's version pointer and returning if it is non-\nil{}. A top-level \op{Refresh} begins by reading the node's version pointer and, if it is \nil{}, calling a recursive refresh and rereading the node's version pointer (which is now guaranteed to no longer be \nil). The remaining steps for both versions of \op{Refresh} are the same as in Figure~\ref{abt-pseudocode}.}
Delegating due to a recursive \op{Refresh} is dangerous because a \op{Propagate} may perform recursive \op{Refresh}es on nodes outside of its search path. 
For example, when a new patch is installed, a \op{Propagate} might recursively \op{Refresh} every node in the new patch, even though many are on different search paths.

\crchanges{Next, we sketch why the delegation scheme is correct.}
In general, it is safe for a \op{Propagate} $P_1$ to delegate to another \op{Propagate} $P_2$ at node $x$ if \crchanges{at the time of delegation} (1) $x$ is reachable from the root, (2) $x$ is on the search paths of both $P_1$ and $P_2$, (3) the \op{Refresh} of $P_2$ sees the set $A$ of all arrival points that $P_1$ was attempting to propagate to $x$, and (4) none of the nodes above $x$ are in $P_2$'s refreshed set. Properties (1) and (2) are important for arguing that $P_2$ will perform a sequence of \op{Refresh}es from $x$ (or from a new node replacing $x$, which inherits all the arrival points in $A$) to the root.
This sequence of refreshes
has the same effect as continuing $P_1$ because properties (3) and (4) ensure they will bring the arrival points in $A$ to the root. 

\crchanges{Together, these properties can be used to prove the following invariant by induction over the execution history.
Consider a \op{Propagate} $P$ called by an update operation $U$ with key $k$.
Let $x_C$ be the top-most node (the node closest to the root) on the search path for $k$ that $U$ has arrived at.
The invariant is that at every configuration $C$ during $P$, if $U$ has not arrived at the root, then the \op{Propagate} $Q$ at the end of $P$'s delegation chain is at or below $x_C$, meaning that $x_C$ is on $Q$'s search path and no node between the root (inclusive) and $x_C$ (exclusive) is in $Q$'s refreshed set.
This invariant ensures that $U$ arrives at the root before the root is added to $Q$'s refreshed set.
Since $P$ waits for $Q$, and $Q$ only returns when the root is added to $Q$'s refreshed set, $U$ must arrive at the root before $P$ returns.
Thus, \delegateTwoF{} preserves the correctness of the original algorithm.
}

In our experiments, \delegateTwoF{} improves performance by more than a factor of two in update-heavy workloads. 
We develop a more optimized version called \textbf{\delegateOneF{}}, which further increases the frequency of delegation.
\delegateOneF{} uses the same delegation mechanism, but \crchanges{changes the \op{Propagate} function so that it is safe to delegate} if just \emph{one} \op{Refresh} $R$ fails.
\crchanges{Let $R_\ell$ be the \op{Refresh} that caused $R$ to fail. 
When $R$ delegates to $R_\ell$, $R_\ell$ might not have seen all the arrival points that $R$ was attempting to propagate to $x$, violating property (3).}
To restore the property, we modify the \op{Propagate} function so that it keeps calling \op{Refresh} on the same node until either (1) the \op{Refresh} fails and the node is not finalized (in which case it delegates to the \op{Propagate} that caused this \op{Refresh} to fail), 
or (2) the \op{Refresh} succeeds and the version pointers in the node's children are the same before and after the \op{Refresh}.
Suppose a \op{Propagate} $P_1$ delegates to a \op{Propagate} $P_2$ at node $x$.
Condition (2) ensures that $P_2$ continues to call \op{Refresh} $x$ until it sees that the version pointers in $x$'s children do not change during the execution of such a \op{Refresh}, at which point $P_2$ is guaranteed to have seen all of the arrival points $P_1$ was attempting to propagate to $x$, thus restoring property (3).
The correctness of \delegateOneF{} uses similar arguments as those for \delegateTwoF{}.

Both delegation techniques described above are blocking:  if a thread that has undertaken the work of other threads stalls, it prevents the other threads from completing.
We can make both \delegateTwoF{} and \delegateOneF{} non-blocking by adding a timeout, after which the waiting process resumes its propagation to the root itself.




\section{Memory Reclamation}
\label{sec:memory-reclamation}

In this section, we describe how to apply Epoch-Based Reclamation (EBR)~\cite{fraser2004practical} to free the three types of shared objects used by BAT: nodes, versions, and, when using delegation, \PropStatus\ objects.
EBR tracks the beginning and end of each \emph{high-level} operation (e.g. \op{Insert}, \op{Delete}, \op{RangeQuery}).
It provides a \op{retire} function, which takes as input an object to be freed and delays freeing that object until all high-level operations active during the retire have completed.

Applying EBR (or any other memory reclamation scheme) to BAT poses several novel challenges.
Traditionally, EBR retires objects when they are removed from the data structure.  This is easy to track for tree nodes  since they can be reached by only a single path from the root.  
For example, nodes marked by $\times$ in \Cref{rotate-patch-fig,insert-delete-fig} can be retired after the SCX replaces them.
Safely retiring versions and \PropStatus\ objects, which can be reached via multiple paths from the root, is more difficult.
To do so, we use the property of EBR that an object is safe to retire at time $T$ if it will not be accessed by any high-level operation that starts after time $T$.

We first create separate functions for top-level \op{Refresh}es 
and recursive \op{Refresh}es, 
defined in Section~\ref{sec:delegation}.
Each time a propagate performs a successful top-level \op{Refresh}, it keeps track of the old version that it replaced in a \x{toRetire} list.
No node points to this old version, but
it is not safe to retire yet, since it could still be reachable from the root of the version tree.
Once the \op{Propagate} reaches the root, all versions in its \op{toRetire} list are guaranteed to be unreachable from the root of the version tree and they can therefore be safely retired.

We have shown how to reclaim old versions replaced by newer ones.
What remains is to reclaim the final version stored in each node.
A \op{Refresh} can change a node's version pointer even after the node is retired, but the version pointer stops changing when the node is safe to free.
Even then, the final version may still be accessed by a long-running query via 
an old version tree.
However, newly started queries will not access the final version, so it can be safely retired immediately before freeing the node.

Safely reclaiming \PropStatus\ objects also poses a challenge because each one can be pointed to by multiple versions, and it is not clear when they all become unreachable.
To avoid complex reachability checks, we observe that a \PropStatus\ object can be safely retired at the end of the \op{Propagate} operation $P$ that created it, even while the object is still reachable.
This is because the only operations that access a \PropStatus\ object are those  \crchanges{whose work is delegated to $P$ (directly or indirectly)}. This delegation can only happen while $P$ is running.
Therefore, any high-level operation that starts after $P$ completes will never access $P$'s \PropStatus\ object. 


\section{Experimental Results} \label{sect:experiments}

\renewcommand{\tabcolsep}{3pt}
\begin{table}
\caption{Summary of data structure properties}
\label{tab:dscompare}
\begin{center}
\small
\begin{tabular}{ |c|c|c|c|c|c|c| } 
\hline
\footnotesize
 & Augmented & Balanced & Fanout & Lock-free \\
\hline
BAT & yes & yes & 2 & yes \\
\hline
BAT-Del & yes & yes & 2 & yes \\
\hline
BAT-EagerDel & yes & yes & 2 & yes \\
\hline
FR-BST~\cite{FR24} & yes & no & 2 & yes \\
\hline
\makecell{Bundled~\cite{NHP22} \\ CitrusTree} & no & no & 2 & no \\
\hline
VcasBST~\cite{WBB+21} & no & no & 2 & yes \\
\hline
VerlibBTree~\cite{blelloch2024verlib} & no & yes & 4-22 & yes \\
\hline
\end{tabular}
\end{center}
\end{table}


\crchanges{We implemented BAT using the chromatic tree implementation in \cite{BER14}, which uses \op{LLX}/\op{SCX} primitives from \cite{BER13}, in the publicly available SetBench~\cite{setbench} microbenchmark.}
\crchanges{Our experiments} show that (1) delegation can improve BAT's insert and delete throughput by up to 200\%, (2) BAT scales well with thread count and data structure size, and (3) BAT performs significantly faster than previous data structures in workloads with more than 0.15-11\% order-statistic queries (depending on data structure size) and in workloads with range queries of size larger than 2K-10K. 
\crchanges{Our code is available at \url{https://github.com/evanwrench/CBAT}.}

{\renewcommand{\arraystretch}{1.1}
\begin{table*}
\caption{Description of our experiments}
\label{tab:description}
\footnotesize
\begin{tabular}{ |m{0.12\textwidth} | m{0.075\textwidth} | >{\raggedright\arraybackslash}m{6cm}| >{\raggedright\arraybackslash}m{6cm}| }

\hline

\textbf{Experiment} & \textbf{Figure} & \textbf{Description} & \textbf{Purpose} \\
\hline
\makecell{Improvement \\Study} & \ref{fig:ablation}, \ref{fig:ben_balance} & Throughput vs number of threads on an update only workload & Compare variants of \abt{} with FR-BST under uniform and skewed workloads  \\
\hline
\makecell{Query\\ Scalability} & \ref{fig:query_graph} & Throughput vs number of threads for rank, select and range queries on BAT-EagerDel & Compare the scalability of different queries on our balanced augmented tree  \\
\hline
\makecell{Range Query \\Size} & \ref{fig:ben_augmentation} & Throughput vs range query size for a 20\% update workload & Show the benefits and tradeoffs of augmenting a concurrent tree \\
\hline
\makecell{Rank Query \\Percentage} & \ref{fig:rank_exp} & Throughput vs percentage of rank queries for varied percentage of updates & Compare the performance of different trees under varying amounts of rank queries \\
\hline
\makecell{Thread \\Scalability} & \ref{fig:scal_thread} & Throughput vs number of threads for different workloads & Show how various trees scale to higher numbers of threads \\
\hline
\makecell{Isolated\\ Performance} & \ref{fig:individual} & Average update/range query time vs range query size for a 20\% update workload & Show the individual performance of updates and queries \\
\hline
\makecell{Size \\Scalability} & \ref{fig:scal_size} & Throughput vs data structure size with keys chosen using a Zipfian distribution & Show how various trees scale to larger data structure sizes \\
\hline
\end{tabular}
\end{table*}
}

\begin{figure*}
    \vspace{-0.4cm}
    \begin{minipage}{0.32\textwidth}
    \includegraphics[width=\textwidth]{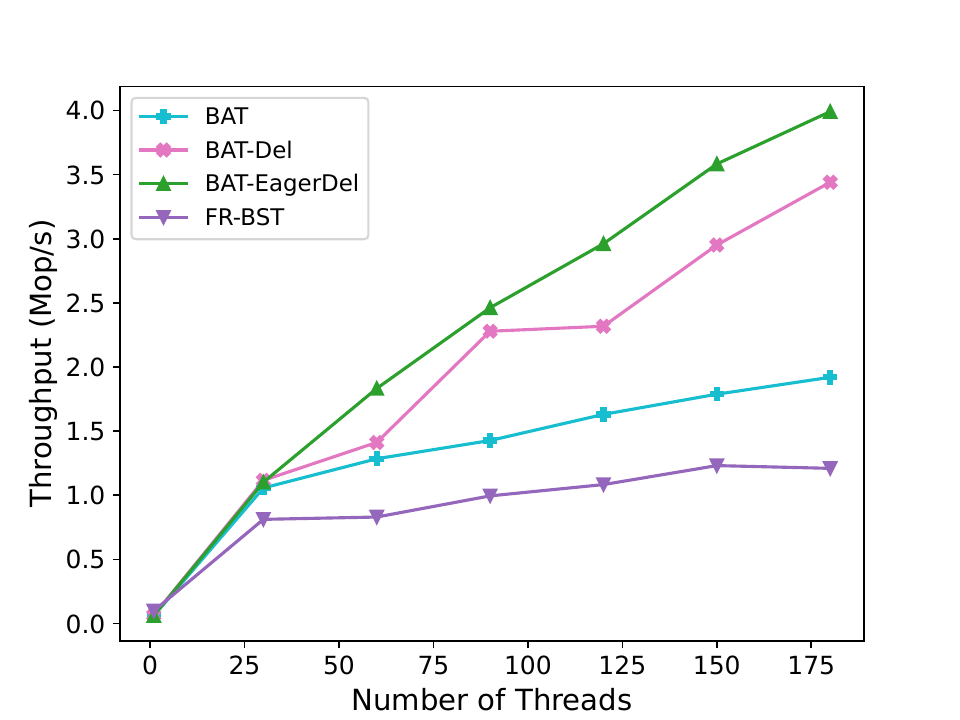}
    \subcaption{MK 10M, 50-50-0-0. Comparing our three variants.}
    \label{fig:ablation}
    \end{minipage}\hfill
    \begin{minipage}{0.32\textwidth}
    \includegraphics[width=\textwidth]{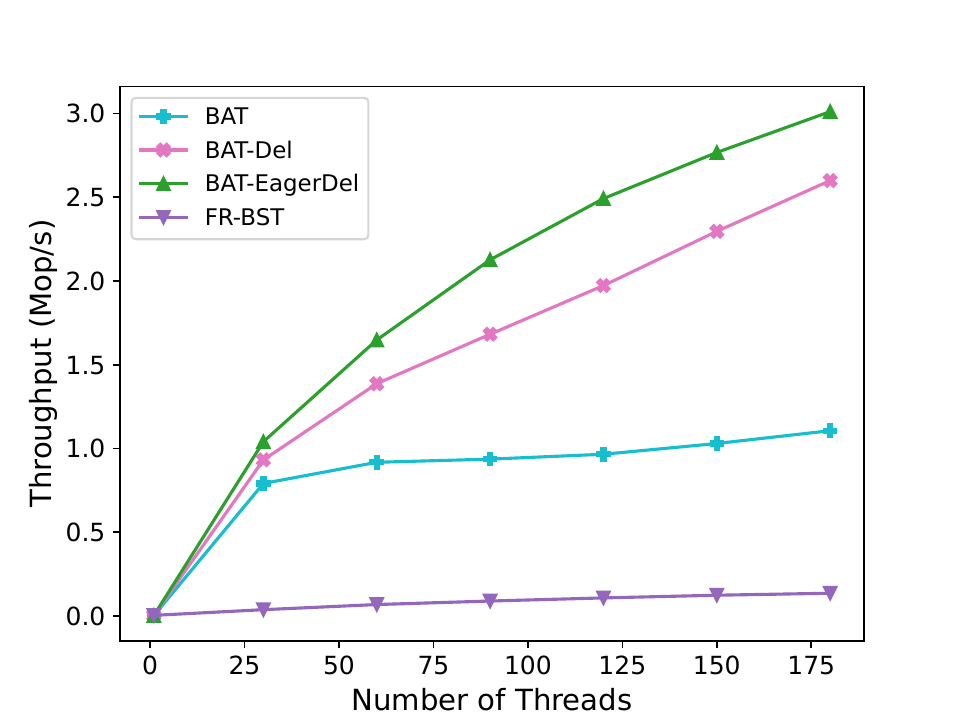}
    \subcaption{MK 10M, 100-0-0-0, Sorted distribution. Benefits of balancing BST.}
    \label{fig:ben_balance}
    \end{minipage}\hfill
    \begin{minipage}{0.32\textwidth}    
    \includegraphics[width=\textwidth]{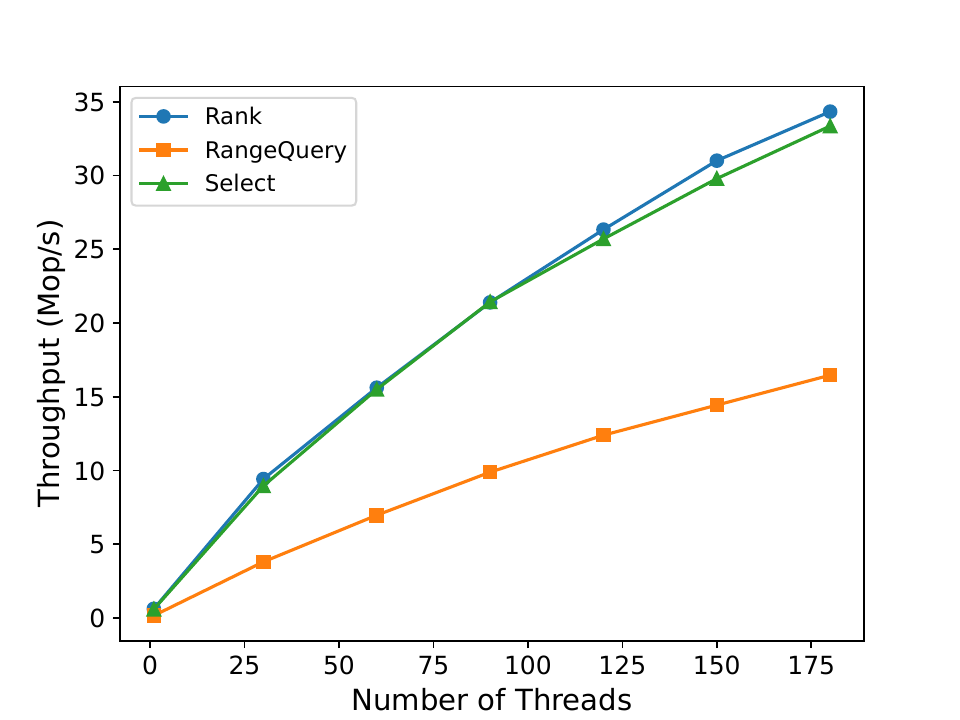}
    \subcaption{RQ 50K, MK 10M 5-5-0-90. Scalability of queries on BAT-EagerDel.}
    \label{fig:query_graph}
    \end{minipage}
    \caption{Performance of variants of \abt.}
    \label{fig:internal_graphs}
\end{figure*}

\myparagraph{Data Structures.} The table~\ref{tab:dscompare} summarizes the key properties of the data structures we compare.
All are linearizable, written in C++, integrated into SetBench, and use the same memory reclamation scheme.
Data structures that do not maintain augmented values achieve linearizable range queries by taking a snapshot and iterating over the range.
Each implementation (other than FR-BST, which we implemented) was taken from its original paper.



\myparagraph{Setup.}
Our experiments ran on a 96-core Dell PowerEdge R940 machine with
4x Intel(R) Xeon(R) Platinum 8260 CPUs (24 cores, 2.4GHz and 143MB L3 cache each),
and 3.7TB memory. 
Each core is 2-way hyperthreaded, giving 192 hyperthreads.
We used \texttt{numactl -i all}, 
evenly spreading the memory pages across the sockets in a round-robin fashion.
The machine runs Ubuntu 22.04.5 LTS.
The C++ code was compiled with g++ 11.4.0 with \texttt{-O3}. 
For scalable memory allocation, we used mimalloc~\cite{mimalloc}. 
Memory was reclaimed using DEBRA~\cite{brown2015reclaiming}, an optimized implementation of epoch-based reclamation~\cite{fraser2004practical} (see Section~\ref{sec:memory-reclamation} for more details).
\crchanges{BAT used the original \op{LLX}/\op{SCX} implementation as described in \cite{BER13}.  The optimized \op{LLX}/\op{SCX} implementation of~\cite{arbel2017reuse} could provide additional improvements}.

All of our experiments (other than Figure~\ref{fig:ben_balance}, which has no prefilling) began with a prefilling phase where random inserts and deletes ran until the structure \crchanges{contained half the keys in the key range.  
Then, threads perform operations chosen randomly (using various 
distributions, described below) for 3 seconds.
We report the average of 5 runs.}
The variance within trials of the same experiment was relatively consistent between experiments of different parameters, with the lowest throughput of a trial being within around 8--10\% of the highest throughput of a trial. Since most of the plots we show are on a log scale, this difference is hardly visible.

\myparagraph{Workloads.} We varied the following parameters:

\noindent \textit{Total Threads} (TT): Number of threads concurrently executing operations on the data structure.

\noindent \textit{Max Key} (MK): The maximum integer key that can be inserted into the data structure. Since all our experiments \crchanges{(except Figure \ref{fig:ben_balance})} were run with the same fraction of inserts as deletes and the trees are pre-filled with half the key range, the size of the data structure remained around half the size of this parameter.

\noindent \textit{Range Query Size} (RQ): The size of each range query performed. The lower bound of the range query interval was generated uniformly from the range of valid lower bounds and is added to this parameter to get the upper bound.

\noindent \textit{Workload} (i\%-d\%-f\%-rq\%): The probability of choosing each operation (insert, delete, find, range query) as the next one a thread executes. In Figure~\ref{fig:query_graph}, rq\% is replaced with the percentage of the given query (rank, select, or rangeQuery). In Figure~\ref{fig:rank_exp}, rq\% is replaced with rank\%.

\noindent \textit{Key Distribution}: Either uniform, sorted or Zipfian.
The sorted and Zipfian workloads result in high contention as updates are routed to the same parts of the tree.
The  distribution was uniform unless otherwise specified.

\crchanges{The sorted distribution inserted keys in roughly
increasing order to evaluate the benefits of balancing (Figure~\ref{fig:ben_balance}).
Threads acquired} keys to insert from an increasing global counter.
To reduce contention on the counter, threads incremented it by 100 each time to acquire a batch of 100 keys.
    


\begin{figure*}
\vspace{-0.4cm}
\begin{minipage}{0.32\textwidth}
    \begin{minipage}[t]{0.96\textwidth}    
    \includegraphics[width=\textwidth]{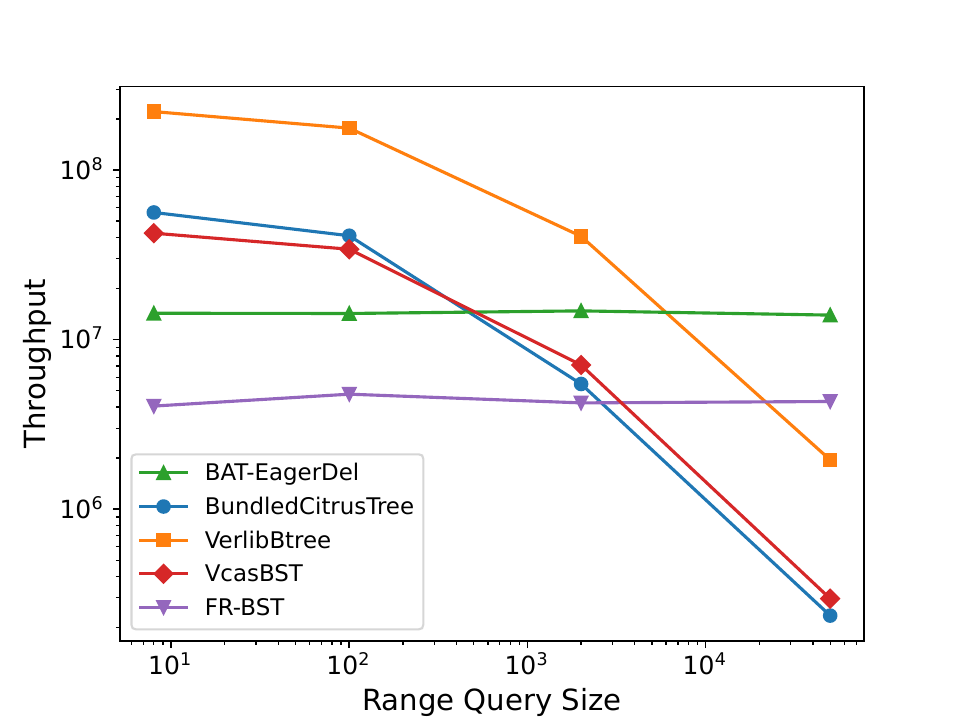}
    \subcaption{TT 120, MK 100K, 10-10-40-40. Benefits of augmenting BST, small tree.\label{fig:ben_aug_small}}
    \end{minipage}
    \\
    \begin{minipage}[t]{0.96\textwidth}    
    \includegraphics[width=\textwidth]{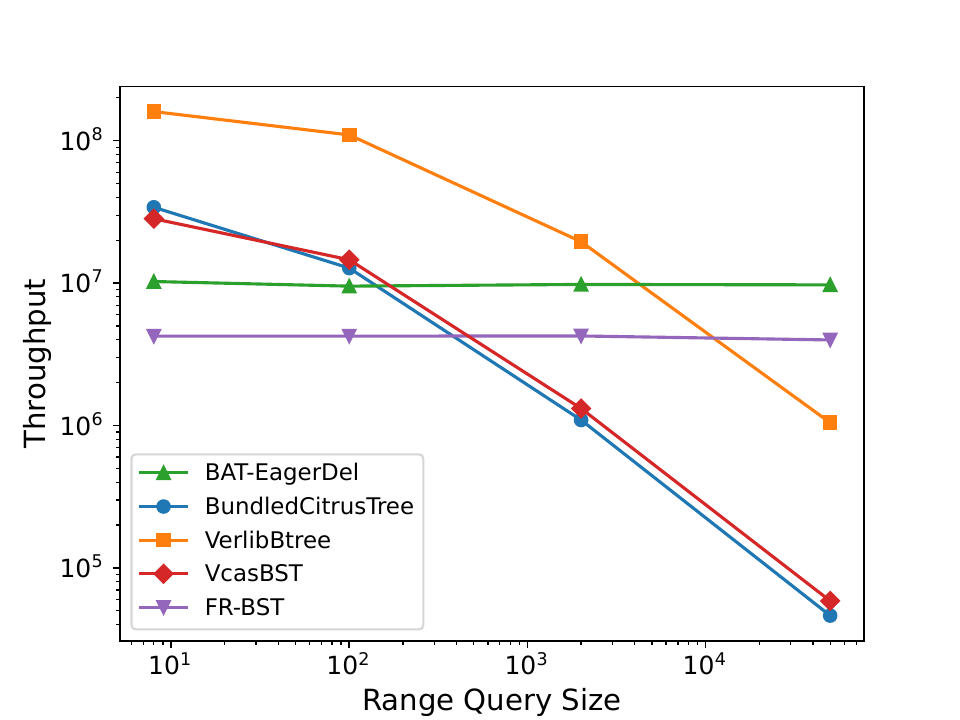}
    \subcaption{TT 120, MK 10M, 10-10-40-40. Benefits of augmenting BST, large tree.\label{fig:ben_aug_large}}
    \end{minipage}
\caption{Performance of our top performing BAT with respect to range query size.\label{fig:ben_augmentation}}
\end{minipage}
\hfill
\begin{minipage}{0.32\textwidth}
    \begin{minipage}[t]{0.96\textwidth}    
    \includegraphics[width=\textwidth]{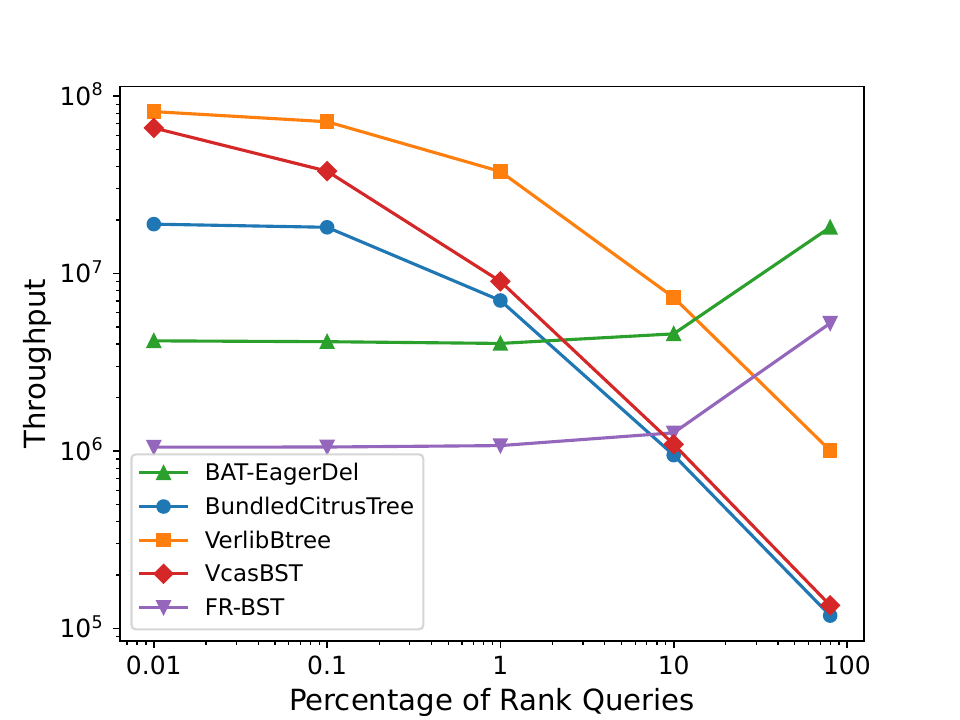}
    \subcaption{TT 120, MK 100K, $\frac12$(100-x)-$\frac12$(100-x)-0-x. Performance on small tree for x\% of rank queries.\label{fig:rank_exp_small}}
    \end{minipage}\hfill
    \\
    \begin{minipage}[t]{0.96\textwidth}    
    \includegraphics[width=\textwidth]{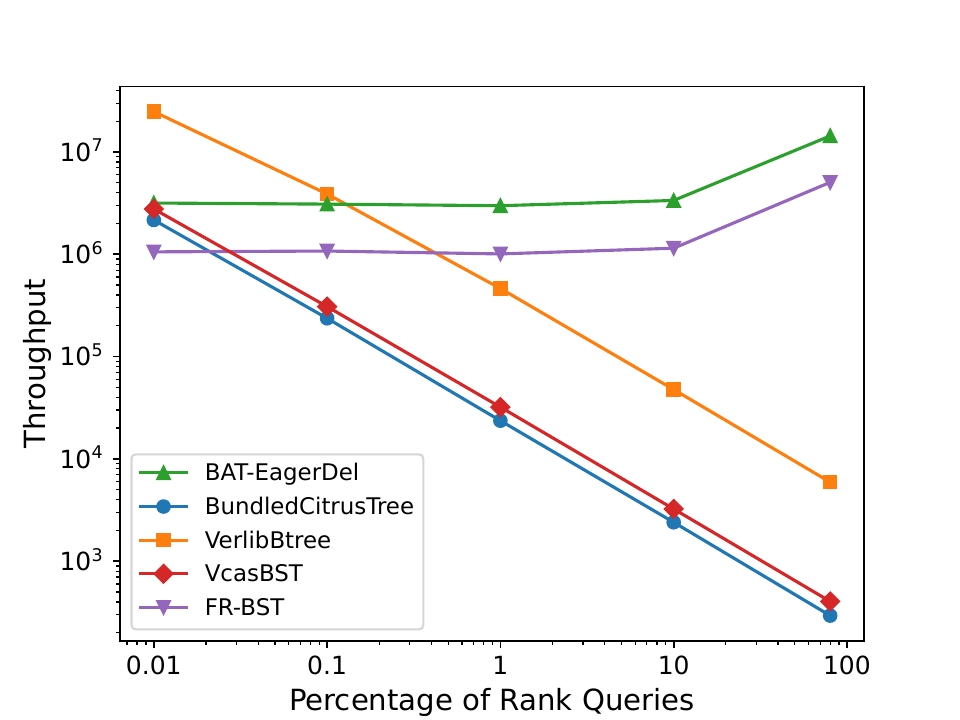}
    \subcaption{TT 120, MK 10M, $\frac12$(100-x)-$\frac12$(100-x)-0-x. Performance on large tree for x\% of rank queries.\label{fig:rank_exp_large}}
    \end{minipage}
    \caption{Performance of our top performing \abt\ on different workloads of rank queries.\label{fig:rank_exp}}
\end{minipage}
\hfill
\begin{minipage}{0.32\textwidth}
    \begin{minipage}[t]{0.96\textwidth}    
    \includegraphics[width=\textwidth]{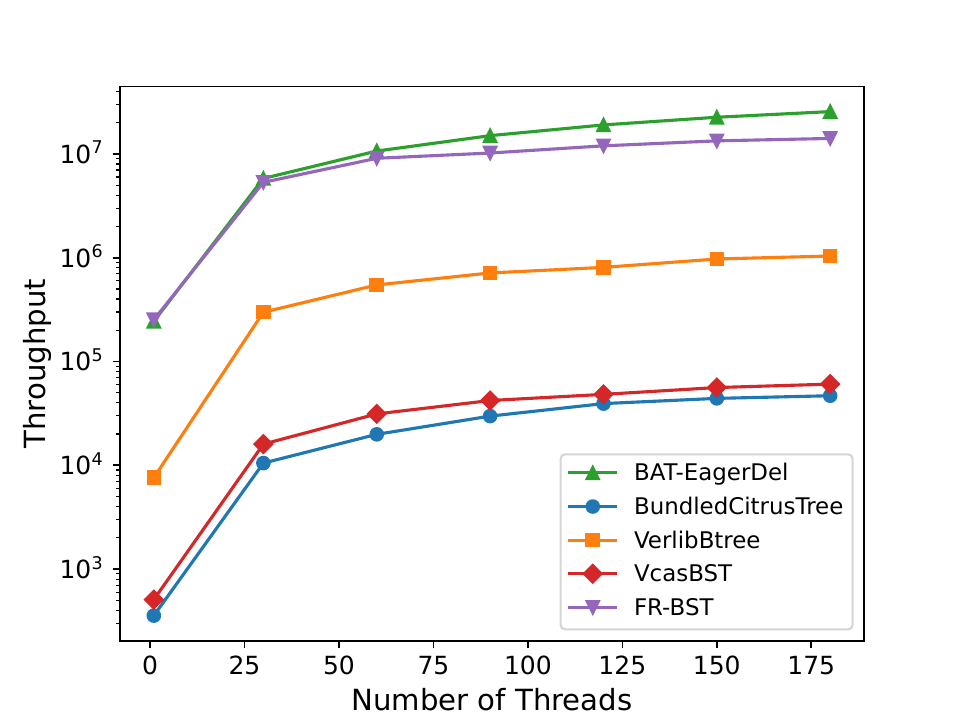}
    \subcaption{RQ 50K, MK 10M, 2.5-2.5-47.5-47.5. Thread scalability, low update workload.\label{fig:scal_thread_low}}
    \end{minipage}\hfill
    \\
    \begin{minipage}[t]{0.96\textwidth}    
    \includegraphics[width=\textwidth]{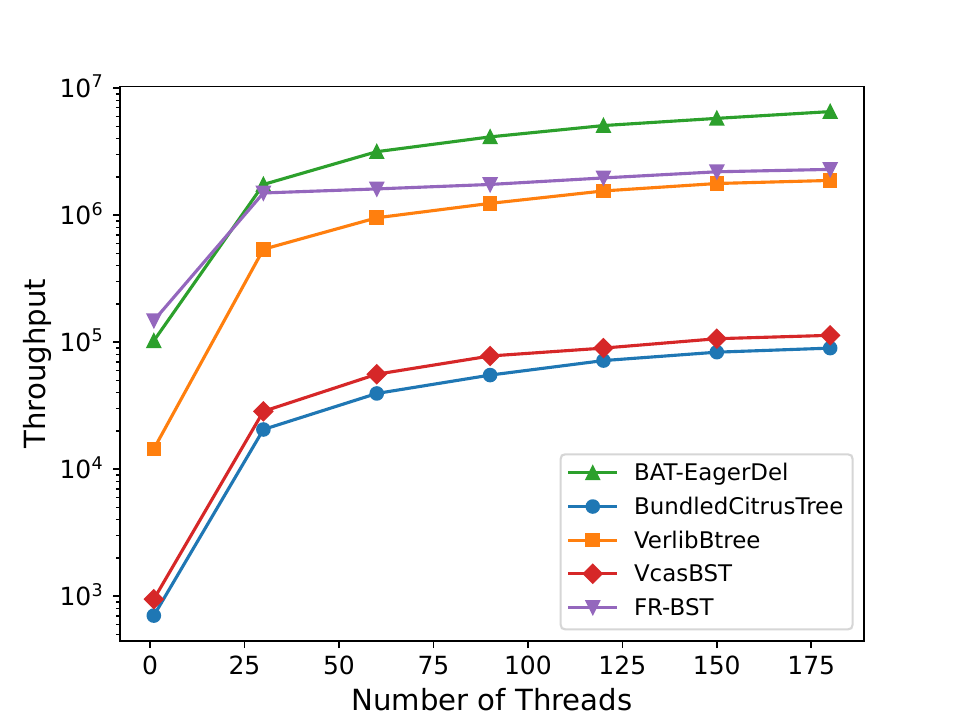}
    \subcaption{RQ 50K, MK 10M, 25-25-25-25. Thread scalability, high update workload.\label{fig:scal_thread_high}}
    \end{minipage}
        \caption{Performance of our top performing \abt\ with respect to number of threads.\label{fig:scal_thread}}
\end{minipage}
\end{figure*}

\begin{figure*}
    \begin{minipage}{0.64\textwidth}
    \begin{minipage}{0.48\textwidth}    
    \includegraphics[width=\textwidth]{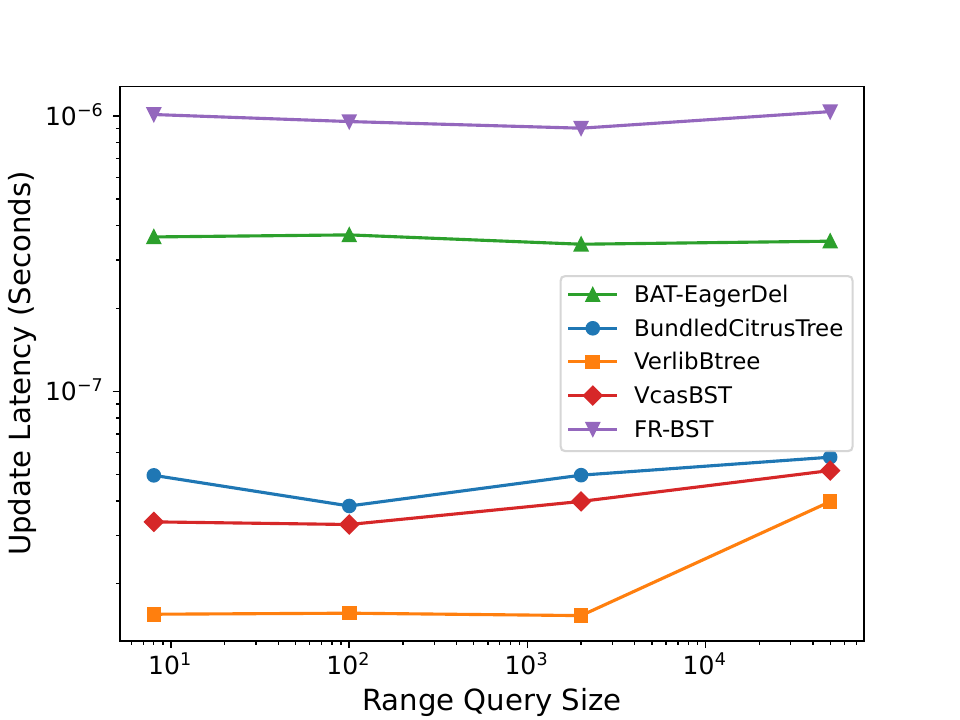}
    \subcaption{TT 120, MK 10M, 10-10-40-40. Average update latency.}
    \label{fig:update_latency}
    \end{minipage}\hfill
    \begin{minipage}{0.48\textwidth}
    \includegraphics[width=\textwidth]{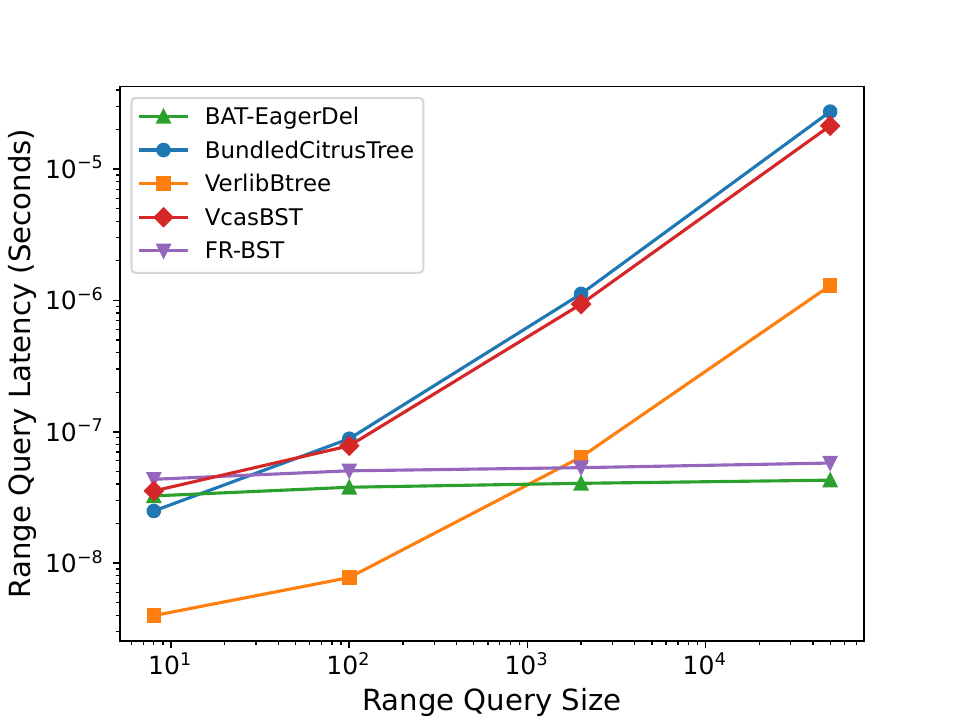}
    \subcaption{TT 120, MK 10M, 10-10-40-40. Average range query latency.}
    \label{fig:rq_latency}
    \end{minipage}\hfill
    \caption{Performance of updates and range queries with respect to range query size on a mixed workload.}
    \label{fig:individual}
    \end{minipage}\hfill
    \begin{minipage}{0.32\textwidth}
    \begin{minipage}{0.96\textwidth}
    \includegraphics[width=\textwidth]{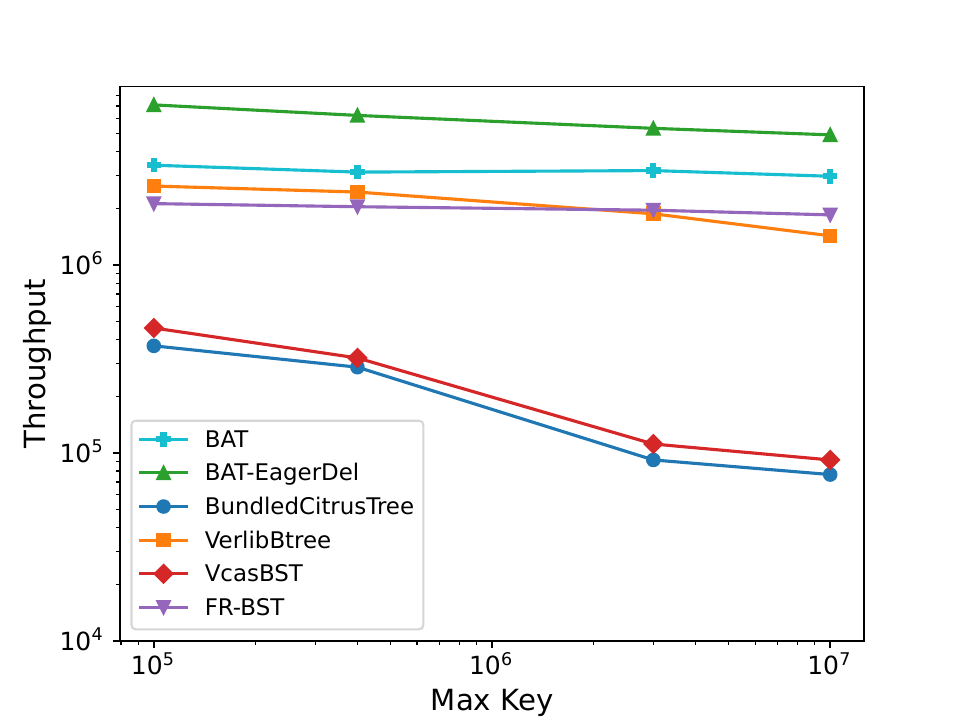}
    \end{minipage}
    \caption{Comparison of \abt\ variants to other trees with respect to data structure size. TT 120, RQ 50K, 25-25-25-25. Size scalability, high update Zipfian \crchanges{(parameter=0.95)} workload.}
    \label{fig:scal_size}
    \end{minipage}
\end{figure*}

\myparagraph{Results. } \crchanges{We summarize our experiments in} Table \ref{tab:description}. \crchanges{Each entry describes one type of experiment and its purpose.} 

\myparagraph{\crchanges{Comparing Augmented Trees.}} Figures \ref{fig:ablation} and \ref{fig:ben_balance} show the performance improvements for updates that we get from variants of our algorithm under two different workloads. 
As expected, balancing allows BAT to significantly outperform \crchanges{the unbalanced} FR-BST, especially when using a sorted workload. 
The average number of nodes seen by a \op{Propagate} decreases from 31 (in FR-BST) to 25 (in BAT) in the uniform workload (Figure~\ref{fig:ablation}, 180 threads) and 2300 to 56 in the sorted workload (Figure~\ref{fig:ben_balance}, 180 threads). 

Adding delegation also improves throughput by around 100\% in the case of \delegateTwoF{} and 120\% in the case of \delegateOneF{} for update-only uniform workloads on 180 threads. This is because delegation reduces the average number of nodes \crchanges{a propagate visits in} a tree with 5M keys by around 3 for 
\crchanges{delegation after two failed \op{Refresh} in \delegateTwoF{}} 
and 4.5 for 
\crchanges{delegation after single failed \op{Refresh} in \delegateOneF{}}.
Since these nodes are usually close to the top of the tree, this greatly reduces the bottleneck at these levels.
We therefore focus on the \delegateOneF{} variant for the remaining comparisons.


\myparagraph{Queries.} In Figure~\ref{fig:query_graph}, we see the performance of several order-statistic queries on BAT-EagerDel \crchanges{scales well}. Rank queries return the number of keys in the set that are less than or equal to a given key.
Select queries return the $k$th smallest key in the set, for a given $k$. Range queries return the number of keys in a given range. They are slower than rank and select queries due to having to traverse two paths (for the lower and upper bound of the range) in the BST instead of just one. 



\myparagraph{Range Query Size.} Figure~\ref{fig:ben_augmentation} shows the performance of various structures under varying range query size. 
\crchanges{Figures \ref{fig:ben_aug_small} and \ref{fig:ben_aug_large} show results for small and large trees.}
Since the unaugmented trees perform work proportional to the number of keys in the range, their performance drops off sharply for larger range queries. 
In contrast, in the augmented trees (FR-BST and ours), queries
only perform work proportional to the height of the BST, so their performance stays consistent no matter the range query size.
VerlibBtree outperforms the other non-augmented trees since it uses higher fanout trees for better cache efficiency, but loses out to BAT-EagerDel after range query sizes reach 2000--4000. After reaching range query size 2M for a tree of size 10M, BAT-EagerDel is ~400x as fast as the closest non-augmented tree. However, the added overhead for inserts and deletes causes the augmented structures to lose out heavily when range queries only traverse a few keys. For range queries of only 8 keys, BAT-EagerDel is 15x slower than VerlibBtree. 
BAT-EagerDel is around 3x faster than FR-BST because balancing reduces the average depth of the leaves and delegation reduces contention at higher levels of the tree.

\myparagraph{Rank Queries.} Figure~\ref{fig:rank_exp} compares the performance of concurrent trees with different percentages of rank queries. We only use rank and not select here since the similarity in their algorithms would produce an identical graph for select. \crchanges{We vary the percentage of rank queries and the remaining operations are split evenly between inserts and deletes (e.g., 1\% rank, 49.5\% insert, 49.5\% delete).}
Since non-augmented rank queries take time proportional to the number of keys less than the selected key, the downside of non-augmented trees is less pronounced in smaller trees \crchanges{(Figure \ref{fig:rank_exp_small})}. However, BAT-EagerDel still performs best for more than ~11\% rank queries. In larger trees \crchanges{(Figure \ref{fig:rank_exp_large})}, we can see BAT-EagerDel outperforms the other structures even for 0.15\% rank queries. \crchanges{We see a large improvement in BAT-EagerDel and FR-BST when going from 10\% to 80\% rank queries since there is a significant drop in the number of inserts and deletes, which are the worst performing operations for these structures.}

\myparagraph {Thread Scalability.}  
Figures \ref{fig:scal_thread_low} and \ref{fig:scal_thread_high} show scalability under  low (5\%) and high (50\%) update percentages, respectively. 
These update percentages were selected according to YCSB workloads A and B~\cite{BFC10}. \crchanges{FR-BST scales less well than other data structures since it has higher contention close to the root, making updates perform worse when more threads are involved.} The scaling of the other structures are similar to each other, however BAT-EagerDel outperforms the closest \crchanges{unaugmented} competitor by around 4x on the high update workload and 30x on the low update workload
for all threads.

\myparagraph{Isolated Performance.} Figure~\ref{fig:individual} shows the average latency in seconds for updates and range queries under the same workload as Figure ~\ref{fig:ben_aug_large}. In the update graph, we see that the performance of inserts and deletes on BAT-EagerDel remains relatively constant. Furthermore, BAT-EagerDel has a lower range query latency than all unaugmented competitors when at least 2000 elements are included in the query.

\myparagraph{Size Scalability and Zipfian Distribution.} 
Figure~\ref{fig:scal_size} shows the effect of increasing the data structure size (by varying the maximum key). \crchanges{We include the BAT variant with no delegation in this graph to show that delegation still has benefits when keys are chosen from a Zipfian distribution.} Overall, we see that BAT-EagerDel scales slightly better with size compared to VerlibBtree, BundledCitrusTree and VcasBST.
\crchanges{We only show results for Zipfian distribution, as the results for the uniform distribution are nearly identical.}

\myparagraph{Why Balancing Improves Throughput.} \crchanges{
BAT performs extra work to balance the tree, and calls to propagate
occasionally have to retraverse parts of the tree or fill in \nil\ versions.
Nevertheless, our results show that BAT variants consistently outperform FR-BST, even in workloads where FR-BST is balanced in expectation.
We provide some key statistics to explain this. We measured on a workload with 120 threads, 100K max key, 50K range query size and an even percentage of inserts, deletes, range queries and finds on both a uniform and Zipfian distribution with parameter 0.99. On the BAT variants, each propagate only traverses around 6.4\% (5.9\%) more nodes beyond the initial search path for the uniform (Zipfian) distribution. A call to propagate fills in only 0.075 (0.03) \nil\ versions on average. Lastly, the average number of \op{CAS}es attempted during a propagate call is 22.2 (22.4) for BAT, 13.9 (13.2) for BAT-EagerDel and 26.8 (27.5) for FR-BST. 
Thus, the extra costs incurred by rebalancing are minimal compared to the advantages of maintaining a more carefully balanced tree.
}

\section{Conclusion}

Augmentation makes search trees significantly more versatile by extending the set interface to enable support for aggregation queries, order-statistic queries, and range queries.
In this paper, we designed, implemented, and empirically validated \abt, the first lock-free Balanced Augmented Tree supporting generic
augmentation functions. 
While we emphasized our augmentation scheme as applied to a chromatic tree, our scheme is general---adaptable to concurrent search trees where updates modify the tree by replacing one patch by another patch of new nodes.

Our experiments show that \abt\ and its optimized versions  
are scalable.
For applications where augmentation is essential,
\abt\ is the only efficient, concurrent option to date.
Some queries, like finding the predecessor of a given key, can be answered by exploring a small part of a snapshot of the tree.  
In such cases, snapshots, e.g. \cite{WBB+21, JJ25}, provide a sufficiently good solution because they avoid the overhead of augmentation.
However, our experiments show that queries that have to traverse many nodes of the tree---like range queries, rank queries or selection queries---are vastly faster with \abt\ than with other snapshot-based approaches.  
Thus, even if the workload is mostly updates with occasional queries, \abt\ outperforms other approaches.

There remain interesting open directions in designing concurrent search tree data structures.
Complex sequential data structures like link/cut trees \cite{ST83}, measure trees \cite{GMW83}, and priority search trees \cite{McC85} rely on balanced augmented trees.
Now that we have designed a {\em concurrent} balanced augmented tree, we can ponder the possibility of concurrent versions of these more complex data structures.



\section*{Acknowledgements}

This research was funded by the Natural Sciences and Engineering Research
Council of Canada, Dartmouth College, and the Greek Ministry of Education, Religious Affairs and Sports call SUB 1.1--Research Excellence Partnerships (Project:\linebreak HARSH, code: Y$\Pi$ 3TA-0560901), implemented through the National Recovery and Resilience Plan Greece 2.0 and funded by the European Union--NextGenerationEU. 
We thank the anonymous reviewers for their feedback, which helped improve the manuscript.
\clearpage

\bibliographystyle{ACM-Reference-Format}
\bibliography{augbaltrees}
\clearpage
\appendix

\newpage
\section{Algorithm with Delegate Mechanism}
\label{sec:delegate-alg}

Here, we give the details of the delegation mechanisms described in \Cref{sec:delegation}.
Figure~\ref{delegate1-types} and~\ref{1F-helper} shows code that is common to both \delegateTwoF{} and \delegateOneF{}.
The Propagate functions of \delegateTwoF{} and \delegateOneF{} are shown in Figures~\ref{2F-pseudocode} and~\ref{1F-pseudocode} respectively.



\begin{figure}
\begin{algorithmic}[1]
\State \textbf{type} Version
\State \hspace{1em} Version* \f{left}, \f{right}
\State \hspace{1em} Key $k$
\State \hspace{1em} int \f{size}
\State \hspace{1em} \hl{PropStatus* \textit{status}}

\medskip

\State \textbf{type} \PropStatus
\State \hspace{1em} Boolean \f{done} 
\State \hspace{1em} \PropStatus* \f{delegatee}
\end{algorithmic}
\caption{Modification to Version object for \delegateTwoF and \delegateOneF, and the new \PropStatus\ object. Nodes are as described in \Cref{abt-pseudocode}.\label{delegate1-types}}
\end{figure}

\begin{figure}
\begin{algorithmic}[1]
\Function{WaitForDelegatee}{\PropStatus* $d$}{}
    \While{$\neg d.\f{done}$}
        \If{$d.\f{delegatee} \neq \nil$}
            \State $d \gets d.\f{delegatee}$
        \EndIf
    \EndWhile
\EndFunction{WaitForDelegatee}
\medskip
\Function{ReadVersion}{Node* $x$}{Version*}
    \LComment{Sets $x.version$ if \nil\ and then returns $x.version$.}
    \State Version* $\x{v} \gets \x{x}.\f{version}$
    \If{$v = \nil$} 
        \State \op{RefreshNil}($x$)  
        \State $v \leftarrow x.version$
    \EndIf
    \State \Return $v$
\EndFunction{ReadVersion}
\medskip
\Function{RefreshNil}{Node* $x$}{}
    \LComment{Recursive refresh for setting \nil\ versions.}
    \Repeat
        \State Node* $x_l \gets \x{x}.\f{left}$
        \State Version* $v_l \leftarrow \op{ReadVersion}(x_l)$
    \Until{$x_l = \x{x}.\f{left}$}
    \Repeat
        \State Node* $x_r \gets \x{x}.\f{right}$
        \State Version* $v_r \leftarrow \op{ReadVersion}(x_r)$ 
    \Until{$x_r = \x{x}.\f{right}$}

    \State Version* $new\! \gets\!$ new Version($k \gets \x{x.k}, \f{left} \gets v_l,$  
    \Statex \hfill $\f{right} \gets v_r, \f{size} \gets v_l.\f{size} + v_r.\f{size},\x{status} \gets \bot$)
    \State $\op{CAS}(x.\f{version}, \nil, \x{new})$ 
\EndFunction{RefreshNil}
\medskip
\Function{Refresh}{Node* $x$, \PropStatus* \x{ps}}{Boolean, \PropStatus*, Version*, Version*}
    \LComment{Return True if Refresh succeeds, False otherwise}
    \LComment{Also returns \PropStatus\ of propagate that blocked}
    \LComment{the CAS (or \nil\ if successful).}
    \LComment{Also returns left and right versions that were read}
    
    \State Version* $\x{old} \leftarrow \op{ReadVersion}(x)$
    \Repeat
        \State Node* $x_l \gets \x{x}.\f{left}$
        \State Version* $v_l \leftarrow \op{ReadVersion}(x_l)$
    \Until{$x_l = \x{x}.\f{left}$}
    \Repeat
        \State Node* $x_r \gets \x{x}.\f{right}$
        \State Version* $v_r \leftarrow \op{ReadVersion}(x_r)$ 
    \Until{$x_r = \x{x}.\f{right}$}

    \State Version* $new\! \gets\!$ new Version($k \gets \x{x.k}, \f{left} \gets v_l,$  
    \Statex \hfill $\f{right} \gets v_r, \f{size} \gets v_l.\f{size} + v_r.\f{size},\x{status} \gets \x{ps}$)
    \State Version* $res \leftarrow \op{CAS}(x.\f{version}, \x{old}, \x{new})$ 
    \State Boolean $\x{success} \gets (\x{res} = \x{old})$
    \State \Return $\x{success}, (\x{success}\ ?\ \nil : \x{res}.\f{status})$, $v_l$, $v_r$
\EndFunction{Refresh}
\end{algorithmic}
\caption{Helper functions for \delegateTwoF\ and \delegateOneF\label{1F-helper}}
\end{figure}

\begin{figure}
\begin{algorithmic}[1]
\Function{Propagate}{Key $k$}{}
    \State Set $\x{refreshed}\leftarrow \{\}$\Comment{stores refreshed nodes}
    \State Stack $\x{stack}$ initialized to contain \x{Root} \Comment{thread-local}
    \State \PropStatus* $\x{ps} \gets $ new \PropStatus($\f{done}\gets \false,$ 
    \Statex \hfill                                     $\f{delegatee}\gets \nil$)
    \Repeat 
        \State Node* $\x{next}\leftarrow \x{stack}.\op{Top}()$
        \Loop \Comment{go down tree until child is refreshed}
            \State $\x{next}\!\leftarrow\! (k < \x{next}.\f{key}\, ?\, \x{next}.\f{left}\! :\! \x{next}.\f{right})$
            \State \textbf{exit when} $\x{next}\in\x{refreshed}$ or \x{next} is a leaf
            \State \x{stack}.\op{Push}(\x{next})
        \EndLoop
        \State Node* $\x{top}\leftarrow \x{stack}.\op{Pop}()$
        \State $\x{success}, *, *, * \gets \op{Refresh}(\x{top}, \x{ps})$ \label{lin:2f-refresh1}
        \If{$\neg \x{success}$}\Comment{if try1 fails} 
            \State $\x{success},\x{del},*,* \gets \op{Refresh}(\x{top}, \x{ps})$
            \If{$\neg \x{success}$ and $\neg \x{top}.\x{finalized}$} 
                \State $\x{ps}.\f{delegatee} \gets \x{del}$
                \State \op{WaitForDelegatee}(\x{ps}.\f{delegatee})
                \LComment{Can be made lock-free by resuming}
                \LComment{from line~\ref{lin:2f-refresh1} after waiting exceeds}
                \LComment{a time limit.}
                \State $\x{ps}.\f{done} \gets \true$
                \State \Return
            \EndIf
        \EndIf
        \State $\x{refreshed} \gets \x{refreshed} \cup \{\x{top}\}$
    \Until{$\x{Root}\in\x{refreshed}$}
    \State $\x{ps}.\f{done} \gets \true$
\EndFunction{Propagate}
\end{algorithmic}
\caption{\delegateTwoF\label{2F-pseudocode}}
\end{figure}

\begin{figure}
\begin{algorithmic}[1]
\Function{Propagate}{Key $k$}{}
    \State Set $\x{refreshed}\leftarrow \{\}$\Comment{stores refreshed nodes} 
    \State Stack $\x{stack}$ initialized to contain \x{Root} \Comment{thread-local}
    \State \PropStatus* $\x{ps} \gets $ new \PropStatus($\f{done}\gets \false,$ 
    \Statex \hfill                                     $\f{delegatee}\gets \nil$)
    \Repeat 
        \State Node* $\x{next}\leftarrow \x{stack}.\op{Top}()$
        \Loop \Comment{go down tree until child is refreshed}
            \State $\x{next}\!\leftarrow\! (k < \x{next}.\f{key}\, ?\, \x{next}.\f{left}\! :\! \x{next}.\f{right})$
            \State \textbf{exit when} $\x{next}\in\x{refreshed}$ or \x{next} is a leaf
            \State \x{stack}.\op{Push}(\x{next})
        \EndLoop
        \State Node* $\x{top}\leftarrow \x{stack}.\op{Pop}()$
        \Repeat \label{lin:1f-repeat}
            \State $\x{success},\x{del}, v_l, v_r$ $\gets \op{Refresh}(\x{top}, \x{ps})$
            \If{$\neg \x{success}$ and $\neg \x{top}.\x{finalized}$} 
                \State $\x{ps}.\f{delegatee} \gets \x{del}$
                \State \op{WaitForDelegatee}(\x{ps}.\f{delegatee})
                \LComment{Can be made lock-free by resuming}
                \LComment{from line~\ref{lin:1f-repeat} after waiting exceeds}
                \LComment{a time limit.}
                \State $\x{ps}.\f{done} \gets \true$
                \State \Return
            \EndIf
        \Until{$\x{success}$ and $v_l = \x{top}.\x{left}.\x{version}$ and
        \Statex  \hspace{1.85cm} $v_r = \x{top}.\x{right}.\x{version}$} \label{lin:1f-repeat-end}
        \State $\x{refreshed} \gets \x{refreshed} \cup \{\x{top}\}$
    \Until{$\x{Root}\in\x{refreshed}$}
    \State $\x{ps}.\f{done} \gets \true$
\EndFunction{Propagate}
\end{algorithmic}
\caption{\delegateOneF\label{1F-pseudocode}, only lines ~\ref{lin:1f-repeat}-\ref{lin:1f-repeat-end} changed relative to Figure~\ref{2F-pseudocode}.}
\end{figure}


\clearpage
\section{Correctness}
\label{sec:correctness}
We follow the arguments similar to those for augmented binary search trees in~\cite{FR24} and extend them for the augmented chromatic trees presented in this paper. We first present the proof for the non delegating version. 

\subsection{Facts About the Unaugmented Chromatic Tree}

We first summarize some facts from ~\cite{Bthesis17} about the original, unaugmented, lock-free chromatic tree. Since our augmentation does not affect the \textit{node tree}, these facts remain true in the augmented chromatic tree. 

In the chromatic tree, the coordination of updates using LLX/SCX primitives ensure the following claims.

The following is a consequence of Lemma 3.94 and Lemma 5.1, claim 3, and Corollary 5.2 of~\cite{Bthesis17}.
A Node is considered reachable if it can be accessed by traversing pointers starting from \textit{root}.

\begin{lemma}
    A Node's child pointer can change only when the Node is not finalized and it is reachable.
    \label{lem:lem13}
\end{lemma}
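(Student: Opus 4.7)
The plan is to derive the lemma from the semantics of \op{LLX}/\op{SCX} together with the structural invariants that the chromatic tree of~\cite{BER14} maintains. First, I would note that every modification of a child pointer of a Node record is performed inside some \op{SCX}, since child pointers are LLX/SCX fields and no other primitive writes to them. Fix the successful \op{SCX} that modifies the child pointer of a node $x$, and let $p$ be the process performing it. By the \op{SCX} specification, $p$ must have previously performed an \op{LLX} on $x$ that returned a non-\textsc{Fail} value, and no other \op{SCX} modified $x$ between $p$'s \op{LLX} and $p$'s \op{SCX}.

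For the non-finalized part, I would argue as follows. An \op{LLX} on $x$ returns \textsc{Fail} whenever $x$ is finalized, so at the instant of $p$'s \op{LLX}, $x$ is not finalized. Finalization of $x$ is itself done by an \op{SCX} that has $x$ in its set of LLX'd records, so if $x$ were finalized between $p$'s \op{LLX} and $p$'s \op{SCX}, that finalizing \op{SCX} would constitute an intervening modification of $x$, contradicting the success of $p$'s \op{SCX}. Hence $x$ is not finalized at the moment its child pointer changes.

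For the reachability part, I would prove a strengthened structural invariant by induction over the sequence of successful \op{SCX}s in the execution: every non-finalized Node that has ever been attached to the tree is reachable from \x{Root}. Initially the invariant holds trivially. Every successful \op{SCX} in the chromatic tree implements a patch replacement of the kind shown in \Cref{rotate-patch-fig} and \Cref{insert-delete-fig}: it modifies one child pointer of a node $x$ that lies outside the old patch so that $x$ now points to the root of a freshly allocated patch, and simultaneously finalizes precisely the nodes of the old patch. The newly created patch nodes become reachable via $x$, the old patch nodes that become unreachable are the ones being finalized, and every other non-finalized Node keeps its reachability status. Using the inductive hypothesis applied just before this \op{SCX}, $x$ is non-finalized, so $x$ is reachable, and therefore its modified child pointer is changed while $x$ is reachable. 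Combining with the previous paragraph gives both halves of the claim.

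The main obstacle is setting up the structural invariant tightly enough to cover every type of rebalancing step in the chromatic tree (insertions, deletions, and all rotation cases). I would handle this by appealing to the general BER tree-update template from \cite{BER14}: each update is expressed as ``replace old patch by new patch,'' so the inductive step reduces to a single generic case, and one only needs to verify that in every individual update the node whose child pointer changes lies outside the set of nodes being finalized. Once that syntactic check is done for each rule, the invariant is preserved uniformly and the lemma follows.
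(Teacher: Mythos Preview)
Your argument is correct, but note that the paper itself does not actually prove this lemma: it states the claim and attributes it to ``Lemma~3.94 and Lemma~5.1, claim~3, and Corollary~5.2'' of Brown's thesis~\cite{Bthesis17}. So you are reconstructing from first principles a fact the paper simply imports from prior work.

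Your reconstruction follows the natural line and is essentially what the cited thesis lemmas establish for the general tree-update template: the \op{LLX}/\op{SCX} semantics give the non-finalized half directly, and the reachability half follows from the invariant that the non-finalized, ever-attached nodes are exactly the reachable ones. One small wording nit: an \op{LLX} on a finalized record returns the distinguished value \textsc{Finalized} rather than \textsc{Fail}, but your conclusion---that such an \op{LLX} cannot support a subsequent \op{SCX}---is unaffected. The closing syntactic check that in every chromatic-tree rule the node whose child pointer is swung lies outside the set being finalized is exactly the right thing to verify and is immediate from the patch diagrams.
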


The following is a consequence of Lemma 6.3.3. of~\cite{Bthesis17}.
\begin{lemma}
    If a Node is on the search path for key k in one configuration and is still reachable in some later configuration, then it is still on the search path for k in the later configuration.
    \label{lem:lem14}
\end{lemma}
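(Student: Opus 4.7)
The plan is to proceed by induction on the number of atomic tree modifications (SCX steps that replace one patch by another) occurring between the two configurations. A key preliminary observation is that once a Node becomes unreachable it is finalized and cannot re-enter the tree by Lemma~\ref{lem:lem13}; so reachability is monotone in reverse, meaning that if $x$ is reachable at the later configuration, then $x$ is reachable at every intermediate configuration. This reduces the task to proving the claim for a single patch-replacing SCX that transforms some configuration $C'$ into $C''$, under the inductive assumption that $x$ is on the search path for $k$ at $C'$ and that $x$ is reachable at $C''$.

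For such an SCX, let $P_{\text{old}}$ be the old patch being replaced and $P_{\text{new}}$ the new patch. Since $x$ is reachable at $C''$, $x$ cannot be one of the interior nodes of $P_{\text{old}}$, which are all finalized and unlinked. I would split into two subcases based on $x$'s position relative to $P_{\text{old}}$. If $x$ is a strict ancestor of the topmost node of $P_{\text{old}}$, then the child pointers on the path from the root down to $x$ are untouched by the SCX---by Lemma~\ref{lem:lem13}, only the one child pointer redirected by the SCX can change---so $x$ remains on the search path for $k$. If instead $x$ lies strictly below $P_{\text{old}}$, then $x$ is inside one of the boundary subtrees $T$ hanging off $P_{\text{old}}$; this subtree is preserved intact by the SCX and re-attached at some position inside $P_{\text{new}}$.

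The technical heart of the argument is this second subcase. I would invoke the invariant of chromatic-tree updates that each patch replacement preserves the BST key ordering and the multiset of keys in each boundary subtree: an insertion or deletion modifies only a single leaf whose key lies outside $T$'s key range, while rotations preserve the in-order concatenation of the boundary subtrees. Hence, when a search for $k$ descends through $P_{\text{new}}$, the comparisons route it into the same boundary subtree $T$ as at $C'$, after which the search follows child pointers inside $T$ (unchanged by this SCX, again by Lemma~\ref{lem:lem13}) down to $x$.

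The main obstacle will be executing the rotation case rigorously: each rebalancing transformation of the chromatic tree rearranges the interior of the patch, potentially changing the depth at which $T$ is attached and the sequence of key comparisons inside $P_{\text{new}}$. One must check the finitely many transformations individually to verify that the in-order sequence of boundary subtrees is preserved, so that $k$'s descent inside $P_{\text{new}}$ still ends at the attachment of $T$. Since the lemma is imported as a consequence of Lemma~6.3.3 of~\cite{Bthesis17}, I would ultimately appeal to the case analysis already carried out there rather than redo it from scratch.
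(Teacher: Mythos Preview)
The paper does not actually prove this lemma: it is listed among the ``Facts About the Unaugmented Chromatic Tree'' and is simply imported as a consequence of Lemma~6.3.3 of~\cite{Bthesis17}, with no argument given. Your proposal therefore goes further than the paper does, sketching the structure of the underlying proof (induction on SCX steps, monotonicity of reachability, the above-patch versus below-patch case split, and the BST-preservation argument for routing into the correct boundary subtree) before ultimately appealing to the same external case analysis. The sketch is sound and matches the shape of the argument in Brown's thesis, so there is no gap; you have simply unpacked what the paper leaves as a citation.
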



The chromatic tree uses an ordinary BST search and the following lemma is a direct consequence of Lemma 6.3.4 of~\cite{Bthesis17}.
\begin{lemma}
    If an insert, delete or rebalance operation visits a Node x during its search for the location of key k, then there was a configuration between the beginning of the operation and the time it reaches x when x was on the search path for k in the \textit{node tree}.
    \label{lem:reachability}
\end{lemma}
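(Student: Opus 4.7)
The plan is to prove this by induction on the number of child pointers the operation has already traversed to reach $x$. The inductive structure is natural because a BST search proceeds one node at a time, and the two earlier lemmas in this section, \lemref{lem13} and \lemref{lem14}, are precisely tailored to control what can happen to a node and its outgoing pointers while the search moves past it. This also aligns with the note in the excerpt that the lemma is a direct consequence of Lemma 6.3.4 of \cite{Bthesis17}.

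The base case is $x = \x{Root}$. Since the sentinel root of the chromatic tree is never removed and lies on every key's search path, the configuration at the start of the operation witnesses the claim trivially. For the inductive step, let $p$ be the node from which the operation reads the child pointer yielding $x$, and let $t$ be the moment of that read. By the inductive hypothesis, there is a configuration $C_p$ between the start of the operation and the moment $p$ was visited at which $p$ is on the search path for $k$. I would then split into two cases. If $p$ is still reachable at $t$, then \lemref{lem14} applied from $C_p$ to $t$ ensures $p$ is still on the search path for $k$ at $t$, and since the operation selects the appropriate child of $p$ relative to $k$, namely $x$, the node $x$ itself lies on the search path for $k$ at $t$. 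Otherwise, let $t' \le t$ be the last configuration at which $p$ was reachable; \lemref{lem14} again places $p$ on the search path for $k$ at $t'$, and \lemref{lem13} forbids any change to $p$'s child pointer after $t'$, so the pointer value observed at $t$ was already in place at $t'$. Hence $x$ is the appropriate child of $p$ at $t'$ and lies on the search path for $k$ at $t'$.

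The main obstacle is the second case, where $p$ has been removed from the tree between $C_p$ and $t$. One must argue carefully why the child pointer that the operation reads at $t$ still reflects a value that was in effect at a configuration where $p$ was on the search path. This is where \lemref{lem13} is essential: any change to $p$'s child pointer would require $p$ to be both reachable and non-finalized, contradicting the choice of $t'$ as the last reachable configuration. Beyond this bookkeeping, no new conceptual machinery is needed, so the proof reduces cleanly to an application of the two cited structural lemmas about the unaugmented chromatic tree.
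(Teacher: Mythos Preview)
Your inductive argument is correct and is precisely the standard proof of this fact; the paper itself does not prove the lemma but simply cites it as a direct consequence of Lemma~6.3.4 in Brown's thesis~\cite{Bthesis17}, whose proof follows the same induction on the length of the traversed path using the analogues of \lemref{lem13} and \lemref{lem14}. One small point worth making explicit in your Case~2: the configuration $t'$ is well-defined and satisfies $C_p \le t'$ because $p$ is reachable at $C_p$ (being on the search path there) and $C_p \le t$, so the application of \lemref{lem14} from $C_p$ to $t'$ is justified.
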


As shown in Figure 6.3 of~\cite{Bthesis17}, \textit{entry} is a special pointer serving as the immutable root of the \textit{node tree}.

Let $T_C$ be the \textit{node tree} in configuration $C$. 
Let $n$ be the number of keys in the \textit{node tree} and $c$ be the number of pending update operations.

The following directly follows from Lemma 6.3.7 of~\cite{Bthesis17}. 
The Lemma 6.3.7 of~\cite{Bthesis17} implies that the \textit{node tree} is a BST with additional properties required for a chromatic tree.

\begin{lemma}
    For all configurations $C$, $T_C$ is a balanced BST of height $O(\log n+c)$.
    \label{lem:staysCT}
\end{lemma}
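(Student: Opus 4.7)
The plan is to establish this by reducing directly to the invariants of the original lock-free chromatic tree, since our augmentation leaves the node tree mechanics untouched (the \f{version} field is separate from the \op{LLX}/\op{SCX} record and is never inspected by \op{CTInsert}, \op{CTDelete}, or rotations). First I would argue that the augmentation-aware rules in \Cref{def:vpr} do not interact with any field that influences the structural behaviour of the node tree: every \op{SCX} performed by our \op{Insert}, \op{Delete}, and rebalancing steps touches exactly the same set of \op{LLX}/\op{SCX} fields as in~\cite{BER14,Bthesis17}, and the conditions under which such an \op{SCX} succeeds or fails are identical to the unaugmented case. Consequently, the sequence of configurations $T_C$ produced in any execution of \abt\ is exactly the sequence of configurations that could arise in an execution of the unaugmented chromatic tree.

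Having established that equivalence, I would then invoke Lemma~6.3.7 of~\cite{Bthesis17} to conclude that each $T_C$ is a chromatic tree satisfying the standard red-black-like invariants together with the bounded-violation property. The BST ordering part of the claim follows because each atomic patch replacement preserves the in-order sequence of keys (this is a local property of each of the 22 rotation templates and of \op{CTInsert}/\op{CTDelete}, all verified in~\cite{Bthesis17}). For the height bound, I would use the fact that the number of outstanding balance violations in $T_C$ is bounded by the number $c$ of pending updates (each update introduces at most one violation before it is resolved by a rebalancing step), and that a chromatic tree with $n$ keys and $v$ outstanding violations has height $O(\log(n+v))=O(\log n + v)$. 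Substituting $v \le c$ yields the stated bound.

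The main obstacle, though minor, would be making precise the claim that our modifications cannot alter the structural evolution of the node tree. Concretely, I would check that (i) the additional \CAS{} operations performed on \f{version} pointers are on words outside the \op{LLX}/\op{SCX} record, so they neither trigger nor inhibit any \op{SCX}, (ii) the Version Initialization Rules do not depend on or change any field consulted by the chromatic tree's helping/abort logic, and (iii) the \op{Propagate} phase only reads child and \f{version} pointers and therefore cannot induce transitions in $T_C$. Once this ``non-interference'' observation is spelled out, the result is an immediate corollary of the cited lemma, since every property of $T_C$ referenced by the statement is a property of the underlying chromatic tree.
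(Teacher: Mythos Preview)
Your proposal is correct and takes essentially the same approach as the paper: the paper simply states that the lemma ``directly follows from Lemma~6.3.7 of~\cite{Bthesis17},'' relying implicitly on the fact that the augmentation leaves the node-tree mechanics untouched. Your write-up is actually more thorough than the paper's, since you spell out the non-interference argument (that the \f{version} field sits outside the \op{LLX}/\op{SCX} record and that \op{Propagate} only reads structural fields) that the paper takes for granted.
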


For a \textit{node tree}, how the augmentation information propagates up the tree is crucial for correctness. To achieve this, we introduce the notion that describes when an update operation's augmentation information is reflected at a Node in the \textit{node tree}, referred to as the \textit{arrival point} of the update.

\subsection{Linearization Respects Real-Time Order.}
\label{sec:proofp1}
In this section, we begin by formally defining arrival point of an update at a Node.
Then, for an update operation on a given key, we show that the update's \op{Propagate} ensures that the update has an arrival points at each reachable Node on which it performs a double \op{Refresh}. Moreover, that arrival point is during the update's execution interval. 
Eventually, if the call to \op{Propagate} completes, the update is assigned an arrival point at the root, before it returns.
The arrival point of an update at the root is the linearization point of the update.
Each query is also assigned a linearization point when it reads the version pointer of the root Node to get an immutable snapshot of the \textit{version tree} rooted at root.version.
Since arrival at the root serves as the linearization point of the update, and each query is also assigned a linearization point during the query, it follows that the linearization respects the real-time order of operations.

Intuitively, the arrival point of an update operation \x{op} on key \x{k} at a Node \x{x} is the moment in time during its execution when both (a) \x{x} is on the search path for \x{k} and (b) the effect of \x{op} is reflected in the \textit{version tree} rooted at \x{x.version}.

We now formally define arrival points of insert and deletes operations over an execution $\alpha$ of the implementation.

\begin{definition}
The base case defines the arrival points of unsuccessful \op{Insert} and \op{Delete} operations at a leaf.

\begin{enumerate}
\item\label{ap-failed-delete}
A \op{Delete}($k$) whose traversal of the tree ends at a leaf $\ell$ that does not contain $k$ returns false.
Similarly, an \op{Insert}($k$) that reaches a leaf $\ell$ containing $k$ returns false. 
In both cases, \op{Insert} and \op{Delete} return without modifying the \textit{node tree}.
Their arrival point at $\ell$ is the last configuration during their execution in which $\ell$ is on the search path for $k$. Such a configuration exists by \Cref{lem:reachability}.

\end{enumerate}

We define inductively the arrival points of update cases that modify the \textit{node tree}. Assume the arrival points are defined for a prefix of the execution $\alpha$. Let $s$ be the next step that modifies the \textit{node tree}. The possible cases are as follows. 

\begin{enumerate}[resume]
\item\label{ap-insert-new}
Consider an \op{Insert}($k$) that executes a successful \op{SCX} step $s$ to replace a leaf $\ell$ by an internal Node \x{new} with two leaf children, \x{newLeaf} and $\ell'$, that contain $k$ and $\ell$'s key, respectively. 
This \op{SCX} is the arrival point at \x{new} and either the left or right child of \x{new} (depending on whether the operation's key is less than \x{new}.\x{key} or not) of
all operations whose arrival points at $\ell$ precede the SCX, in the order of their arrival points at $\ell$, followed by the \op{Insert}($k$) at both \x{newLeaf} and \x{new}.

\item\label{ap-delete}
Consider a \op{Delete}($k$) that performs a \x{SCX} step $s$ to modify the \textit{node tree}.  This step replaces an internal Node $p$ (whose children are a leaf $\ell$ containing $k$ and its sibling \x{sib}) by a new copy $\x{sib}'$ of \x{sib}.
For each operation on any key $k'$ whose arrival point at $\ell$ precedes $s$, $s$ is the operation's arrival point at $sib'$ and all of its descendants that are on the search path for $k'$.
Additionally, for each operation whose arrival point at $sib$ precedes $s$, $s$ is the operation's arrival point at $sib'$.
If multiple operations on $k'$ are assigned arrival points at the same node, they occur in the same order as their arrival points at $\ell$.
Finally, $s$ is also the arrival point of the \op{Delete}($k$) at $sib'$ and all its descendants that are on the search path for $k$.

\item\label{ap-rotate}
Consider a rebalancing operation that performs a successful SCX step $s$. Let $G_{old}$ be a patch of nodes in the \textit{node tree}, rooted at node $old$. Let $F$ be the set of nodes that are the children of nodes at the last level of the patch.
This step $s$ atomically modifies a \textit{node tree} by replacing the patch $G_{old}$, with a new patch $G_{new}$, rooted at a node $new$ and the same fringe $F$, in the \textit{node tree} (as shown in the rebalancing diagrams in~\cite[Figure 6.5]{Bthesis17}).

For each operation that arrived at a fringe node prior to $s$, $s$ serves as the operation's arrival point at every ancestor of that fringe node in $G_{new}$. The order of arrival points is same as it is in the fringe node. Operations from different fringe nodes are ordered according to their left-to-right position in the tree: operations from the left fringe node precede those from the right.
If the SCX replaces a leaf $\ell$ (if any) with new copy $\ell'$, then this SCX serves as the arrival point at $\ell'$ of all operations that had an arrival point at $\ell$ prior to the SCX, in the same order.


\item\label{ap-refresh}
Consider a successful \CAS\ performed by a \op{Refresh} $R$ on the \x{version} field of an internal Node $x$ at line \ref{lin:refreshCAS}.  
Let $x_L$ and $x_R$ be the childen of $x$ read by $R$ at line \ref{lin:readXl} or \ref{lin:readXr}.
The CAS is the arrival point at $x$ of 
\begin{enumerate}
\item\label[part]{a-left}
all operations
that have an arrival point at $x_L$ prior to $R$'s last read at line \ref{lin:readXlV1} or \ref{lin:readXlV2} and do not already have an arrival point at $x$ prior to the CAS, in the order of their arrival points at $x_L$, followed by
\item\label[part]{a-right}
all operations
that have an arrival point at $x_R$ prior to $R$'s last read at line \ref{lin:readXrV1} or \ref{lin:readXrV2} and do not already have an arrival point at $x$ prior to the CAS, in the order of their arrival points at $x_R$.
Note again that this preserves the order of operations on same key; the order of operations across the key need not be preserved and is irrelevant. 
\end{enumerate}
\end{enumerate}
\label{ap-definition}
\end{definition}

The arrival points at an arbitrary Node $x$ form a sequence of operations, referred to as \textit{Ops} sequence. Each element in this sequence is of the form $\ang{operation(k):response}$, where operation is either an insert or a delete with a boolean response attached. \textit{Ops} sequences track the operations that have arrived at a Node.

For deletes and inserts whose arrival point is defined by Part~\ref{ap-failed-delete}, the associated response is \textit{false}.
For inserts and deletes whose arrival points are defined by Part~\ref{ap-insert-new} and~\ref{ap-delete}, the associated response is \textit{true}. 
Finally, whenever arrival points are copied from a removed or replaced node to another, as in Part~\ref{ap-insert-new},~\ref{ap-delete},~\ref{ap-rotate} and~\ref{ap-refresh}, the associated responses are copied as well.

\begin{definition}
    For each configuration $C$ and Node $x$, 
    \begin{enumerate}
        \item Let \textit{Ops(C, x)} be the sequence of update operations with arrival points at $x$ that are at or before $C$, in order of their arrival points at $x$.
        \item Let \textit{Ops*(C, x)} be the sequence of update operations with arrival points at $x$ that are strictly before $C$, in the order of their arrival points at $x$.
        \item Let \textit{Ops(C, x, k)} be the subsequence of \textit{Ops(C, x)} consisting of operations with key $k$.
    \end{enumerate}
\end{definition}


\begin{observation}
    The CAS on line~\ref{lin:refreshCAS} never attempts to store a value in a node's version field that has previously been stored in it.  
    \label{obs:ver-allocs}
\end{observation}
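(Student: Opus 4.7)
The plan is to trace through the code and show that the value being CAS'd in at line~\ref{lin:refreshCAS} is always a freshly allocated Version object, and then appeal to the standard assumption that distinct allocations yield distinct pointers.

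First I would observe that the only writes to a Node's \f{version} field anywhere in Figure~\ref{abt-pseudocode} are (i)~the initialization performed when a new Node is created (governed by \Cref{def:vpr}) and (ii)~the CAS on line~\ref{lin:refreshCAS} inside \op{Refresh}. So the only candidate ``previous values'' are the initial value written at Node creation plus any pointers previously installed by successful CASes at line~\ref{lin:refreshCAS}. The new value attempted by the CAS on line~\ref{lin:refreshCAS} is the local variable $new$, which was just produced by the $\mathtt{new\ Version}(\ldots)$ allocation on the immediately preceding line~\ref{lin:allocateVersion}. Thus it suffices to show that this freshly-returned pointer cannot equal either a Version pointer installed at Node creation time or a Version pointer installed by any earlier successful CAS on the same Node.

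I would then invoke the standard assumption used throughout concurrent data-structure correctness proofs that each invocation of $\mathtt{new}$ returns a pointer to a previously-unused memory location, so that distinct allocation events produce distinct pointer values. Under this assumption, $new$ has never previously been the value of any location in memory, and in particular has never been written to $x.\f{version}$, which yields the observation directly.

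The main subtlety is memory reclamation: Section~\ref{sec:memory-reclamation} allows retired Version objects to be freed and their underlying memory reused, which in principle could let $\mathtt{new}$ return an address that previously held a retired Version. I would handle this in the standard way, by treating the allocation abstraction at the linearizability proof layer as producing ``fresh'' logical identifiers (so that address reuse is invisible to the abstract proof). If one insists on reasoning at the level of physical pointers, the EBR guarantee suffices: a retired Version is not freed while any operation that could have observed it is still running, so by the time its address may be recycled into a new Version $v'$, no in-progress \op{Refresh} holds the old pointer in its local $old$ variable, and therefore no CAS can confuse $v'$ with the earlier Version at the same address. This is the only step that requires real care; the rest of the argument is immediate from inspection of the pseudocode.
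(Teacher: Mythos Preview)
Your proposal is correct and matches the paper's (implicit) reasoning: the paper states this as a bare observation with no accompanying proof, relying on the evident fact that the value written at line~\ref{lin:refreshCAS} is the freshly allocated Version from line~\ref{lin:allocateVersion}. Your treatment of the memory-reclamation subtlety is more careful than anything the paper spells out here, but is consistent with the EBR guarantees discussed in Section~\ref{sec:memory-reclamation}.
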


\begin{lemma} 
If two calls to \op{Refresh} on the same Node perform successful \op{CAS} steps, then one performs the read on line~\ref{lin:readOldVer} after the CAS on line~\ref{lin:refreshCAS} of the other.
\label{lem:seq-cas}
\end{lemma}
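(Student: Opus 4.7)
The plan is to label the two successful \op{Refresh}es on node $x$ as $R_1$ and $R_2$, with associated local variables $\x{old}_1, \x{new}_1$ and $\x{old}_2, \x{new}_2$ respectively. Without loss of generality, assume that $R_1$'s successful \op{CAS} at line~\ref{lin:refreshCAS} precedes $R_2$'s successful \op{CAS} in the linear order of shared-memory steps. The goal is to show that $R_2$'s read at line~\ref{lin:readOldVer} happens after $R_1$'s \op{CAS}. I would proceed by contradiction, assuming that $R_2$'s read at line~\ref{lin:readOldVer} occurs before $R_1$'s \op{CAS}.

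The key tool will be \Cref{obs:ver-allocs}, which says no value is ever stored into $x.\f{version}$ more than once. From this I would derive the corollary that, viewed as a function of time, the value of $x.\f{version}$ is a sequence of pairwise-distinct values: every successful \op{CAS} on this field installs a freshly allocated Version object that has never appeared in $x.\f{version}$ before, and the field is only ever mutated by such \op{CAS}es. Consequently, once $x.\f{version}$ moves away from a particular value $v$, it never returns to $v$.

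Now I would combine the two ingredients. Under the assumption that $R_2$'s line~\ref{lin:readOldVer} executes before $R_1$'s \op{CAS}, the value $\x{old}_2$ read by $R_2$ is a value that $x.\f{version}$ held strictly before $R_1$'s \op{CAS}. After $R_1$'s \op{CAS}, $x.\f{version}$ equals the freshly allocated $\x{new}_1 \neq \x{old}_2$, and by the distinctness property from the previous paragraph, $x.\f{version}$ never again equals $\x{old}_2$. But $R_2$'s \op{CAS} is assumed to succeed after $R_1$'s \op{CAS}, which forces $x.\f{version} = \x{old}_2$ at that later moment, a contradiction. Hence $R_2$'s read at line~\ref{lin:readOldVer} must occur after $R_1$'s \op{CAS}, as required.

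The main subtlety is the distinctness step: one must be careful that there is no way for $\x{old}_2$ to reappear in $x.\f{version}$ after $R_1$'s \op{CAS}, e.g., via concurrent \op{Refresh}es of other threads. This is precisely ruled out by \Cref{obs:ver-allocs}, since every intervening or subsequent successful \op{CAS} must install a never-before-seen Version object. Everything else is a straightforward case analysis on the relative order of the two successful \op{CAS}es and the reads at line~\ref{lin:readOldVer}.
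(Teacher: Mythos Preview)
Your proposal is correct and follows essentially the same approach as the paper: both proceed by contradiction, assume the second read occurs before the first successful \op{CAS}, and invoke \Cref{obs:ver-allocs} to conclude that the second \op{CAS} would have to fail. If anything, your version is slightly more careful, since you explicitly derive the ``never returns to a previous value'' property of $x.\f{version}$, whereas the paper's proof implicitly assumes the two refreshes read the same value.
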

\begin{proof}
    Without loss of generality, let \x{x} be an internal Node, and let $T1$ and $T2$ be two arbitrary processes.
    Suppose $T1$ reads the value $old$ from $x.version$ at line~\ref{lin:readOldVer} and successfully performs a CAS at line~\ref{lin:refreshCAS}, updating $x.version$ from \textit{old} to \textit{new}, where \textit{new} is a pointer to a Version object allocated at line~\ref{lin:allocateVersion}.     
    Now, assume that $T2$ reads old from $x.version$ immediately before $T1$'s CAS, and $T2$ also succeeds in its CAS at line~\ref{lin:refreshCAS}.
    
    For $T2$ to succeed, at the time of its CAS, it must see that $x.version$ is same as $old$. However, since $T2$ changes $x.version$ to new and from obeservation~\ref{obs:ver-allocs}, $x.version$ cannot be $old$, a contradiction. Therefore, $T2$'s read at line~\ref{lin:readOldVer} must be after $T1$'s successful CAS.
\end{proof}



The claims of the following Invariant ensure that no update is dropped by the propagation if a Node where it has arrived is removed from the \textit{node tree}.
In other words, propagation ensures two properties: the upward consistency of propagation, such that a $Ops$ sequence of a parent Node 
(on a particular key)
is always a prefix of its children's and descendants'; and the monotonic growth of all $Ops$ sequences of all nodes.

\op{Refresh} operation on a node \x{x} updates \x{x.version} using the versions of \x{x}'s children. Reading the version of a child (between lines~\ref{lin:Xlbegin}-\ref{lin:Xlend} and~\ref{lin:Xrbegin}-\ref{lin:Xrend}) is a three-step process:
\begin{enumerate}
    \item read pointer to \x{x}'s child \x{y};
    \item read the pointer to \x{y}'s version; and
    \item verify that \x{y} is still a child of \x{x}.
\end{enumerate} 
these steps repeat until the last step succeeds. Then following observation directly follows from the code of \op{Refresh}.

\begin{observation}
For a node \x{x} with child \x{y}, \op{Refresh} ensures that \x{y} was the child of \x{x} at the time it read \x{y}'s version. This implies that \op{Refresh} can add to the $Ops$ sequence of $x$ only those operations from $y$ that were added in $y$'s $Ops$ sequence before $y$ was replaced. 
    \label{obs:rd-chd-ver}
\end{observation}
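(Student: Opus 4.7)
The proof splits naturally into two claims. Claim (i): at the moment \op{Refresh} reads \x{y}'s \x{version} field, \x{y} is still a child of \x{x}. Claim (ii): consequently, every operation that \op{Refresh}'s successful CAS at line~\ref{lin:refreshCAS} adds to $Ops(C,x)$ arrived at \x{y} before \x{y} was replaced. Claim (ii) will follow almost directly from Claim (i) together with Part~\ref{ap-refresh} of Definition~\ref{ap-definition}; Claim (i) is the substantive step.

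To prove Claim (i), I focus on the left-child loop at lines~\ref{lin:Xlbegin}--\ref{lin:Xlend}; the right-child case is symmetric. The loop reads $x_l \gets x.\f{left}$, reads $v_l \gets x_l.\f{version}$ (possibly invoking a recursive \op{Refresh} that does not modify \x{x.left}), and then re-reads \x{x.left}, iterating until it equals $x_l$. When the loop exits, $x_l = \x{x.left}$ holds at the final re-read; the task is to lift this equality back to the earlier read of $v_l$. I would argue by contradiction. Suppose \x{x.left} had changed away from $x_l$ at some point between the two reads. By \Cref{lem:lem13}, every such change is performed by an SCX, and the chromatic tree's update discipline guarantees that each SCX modifying a child pointer installs a pointer to a newly allocated node. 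Hence, once \x{x.left} first departs from $x_l$, every subsequent value written into \x{x.left} must point to a node allocated strictly later than $x_l$, and in particular cannot equal $x_l$ again. Our epoch-based reclamation scheme (Section~\ref{sec:memory-reclamation}) prevents $x_l$'s address from being recycled during the enclosing high-level operation, so the final re-read cannot spuriously see the old pointer value. This contradicts the loop's exit condition, proving Claim (i).

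To prove Claim (ii), note that by Claim (i), \x{y} is still a child of \x{x} at the instant \op{Refresh} last reads $v_y \gets \x{y.version}$ at line~\ref{lin:readXlV1} or~\ref{lin:readXlV2} (and symmetrically for the right child). In particular, the SCX that eventually replaces \x{y} must occur strictly after this read. By Part~\ref{ap-refresh} of Definition~\ref{ap-definition}, the operations whose arrival point at \x{x} is defined by this \op{Refresh}'s successful CAS are precisely those whose arrival points at \x{y} precede the last read of $v_y$ and which do not already have an arrival point at \x{x}. Since the read of $v_y$ precedes \x{y}'s replacement, every such operation arrived at \x{y} before \x{y} was replaced, which is exactly Claim (ii).

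The main obstacle is the ABA-style argument inside Claim (i): the \op{Refresh} code alone does not preclude the classical hazard in which \x{x.left} moves away from $x_l$ and then back again, so one must combine the chromatic tree's structural invariant that SCX installs only freshly allocated nodes into replaced positions with the reclamation guarantee that forbids address reuse during the enclosing operation. Once that ABA-freedom is in hand, the remainder is straightforward bookkeeping against Definition~\ref{ap-definition}.
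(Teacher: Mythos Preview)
Your proposal is correct and considerably more detailed than the paper's own treatment. In the paper this statement is labelled an \emph{Observation} rather than a lemma: the text simply describes the read--read--verify loop in \op{Refresh} as a three-step process (read child pointer, read child's version, re-read child pointer to verify) and then asserts that the observation ``directly follows from the code of \op{Refresh}.'' No further argument is given.

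What you add is the explicit ABA argument for Claim~(i), showing that the equality $x_l = x.\f{left}$ at the final re-read propagates back to the time $v_l$ was read because the chromatic tree never reinstalls a removed node (every SCX installs a freshly allocated node) and memory is not recycled during the enclosing operation. The paper relies on this fact implicitly but does not spell it out; your version makes the dependence visible, which is a genuine improvement in rigor. One small remark: in the idealized correctness model the paper is implicitly working in, the fresh-allocation property of SCX alone already rules out the ABA hazard, so the appeal to epoch-based reclamation, while not wrong, is not strictly needed for the linearizability argument.
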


This helps avoid the violation of~\invref{noAPLost} (part 1), where a concurrent Refresh on x uses child y's version after y was replaced by a new child y' such that y' contains operations of y from before it was deleted and since then y got a new version which was read by the Refresh at x. In this case, the~\invref{noAPLost} (part 1) is violated between x and y'. 

\begin{invariant}
For any configuration $C$, any key $k$, and any internal node $x$ with a child $y$ in $C$:
\begin{enumerate}
\item 
Every operation in $Ops(C, x, k)$ is also in $Ops(C, y, k)$.
\item 
If an \op{SCX} changes the child pointer of $x$ from $y$ to some node $y'$ at configuration $C'$, then every operation in $Ops(C, y, k)$ is also in $Ops(C\y{'}, y', k)$.
\end{enumerate}
\label{inv:noAPLost}
\end{invariant}

\begin{proof}
We will use induction on the sequence of configurations and argue that every possible modification to $Ops$ sequence of nodes preserves the claims.

The invariant holds vacuously in the initial configuration because no operations have arrival points. Assume that the claims hold up to some configuration $C$. We show that they hold up to the next configuration $C'$.
The following cases can occur:

\begin{enumerate}
    \item Part~\ref{ap-failed-delete} of Definition~\ref{ap-definition}.
    It adds the update operation only to the $Ops$ sequence of a leaf Node. Since a leaf does not have children, claim 1 is not violated. Additionally, since it does not modify the \textit{node tree}, claim 2 is trivially satisfied.
    
    
    \item Part~\ref{ap-insert-new} of Definition~\ref{ap-definition}.
    It ensures that each operation $op$ that have an arrival point at the replaced leaf $\ell$ is moved to the $Ops$ sequences of the two new leaves.  
    $op$ with key $k$ $<$ $new.key$ is moved to the left leaf and the $op$ with $k$ $\geq$ $new.key$ is moved to the right leaf. The insert itself is appended to the $Ops$ sequence of the appropriate leaf based on whether its key is less than or greater than equal to $new.key$.    
    Additionally, $op$ including the current insert is also added to the $Ops$ sequence of $new$.
    Therefore, claim 1 is preserved. 
    Moreover, no violation of the invariant is created at the node whose child pointer is changed to point to the new internal node $new$, since all arrival points of the replaced leaf are transferred to $new$. 
    Also, since every operation in the $Ops$ sequence of the old leaf $\ell$ is moved to $new$ and its appropriate leaf child Claim 2 is also preserved.

    \item Part~\ref{ap-delete} of Definition~\ref{ap-definition}. This has two cases. 

    First, if $sib'$ is a leaf, then satisfying Claim 1, is trivial as a leaf has no children. 
    Moreover, since all arrival points of the replaced leaf nodes ($\ell$ and $sib$) are transferred to the newly created leaf $sib'$, no violation of the invariant is introduced at the Node whose child pointer is updated to point to $sib'$.
    Every operation in $p$ is in either $\ell$ or $sib$, by induction hypothesis of claim 1.
    $sib'$ has all operations from the $Ops$ sequence of $\ell$ and $sib$ and also includes the current \op{delete}($k$) operation. Hence, $sib'$ has strictly larger $Ops$ sequence than $p$, implying every operation in $Ops$ sequence of $p$ is also in $sib'$ preserving Claim 2.

    Second, if $sib'$ is an internal Node, then the delete operation ensures that all operations that arrived at the removed leaf $\ell$ are transferred to the $Ops$ sequence of $sib'$ and to all its descendants along the search path of those operations. Additionally, all operations from $sib$ are moved to the $Ops$ sequence of $sib'$. By induction hypothesis all operation at $sib$ should already be there in descendants of $sib'$ as they do not change.
    This ensures that Claim 1 is preserved at $sib'$.
    
    Moreover, all arrival points of the replaced Nodes are transferred to $sib'$ and to all its descendants along the corresponding search paths. 
    Therefore, no violation of the invariant is introduced at the Node whose child pointer is updated from $p$ to $sib'$. In fact, as explained in the first case above, $Ops$ sequence of $sib'$ is strictly larger than the $Ops$ sequence of removed $p$. Thus, Claim 2 is preserved.

    
    \item Part~\ref{ap-rotate} of Definition~\ref{ap-definition}.
    A rebalance operation ensures that all operations whose arrival points are at fringe Nodes of the replaced sub graph are added to the $Ops$ sequences of the newly created ancestors of those fringe Nodes.
    If the replaced subgraph contains leaves (leaves do not have any fringe Nodes), then all operations with a key $k'$ that have their arrival points at those leaves are added to the newly created leaf copies and their appropriate ancestors in the search path of $k'$.
    Meaning that every operation in the $Ops$ sequence of a newly created parent in the new patch also appears in the $Ops$ sequence its children (if they exist).    
    Thus, Claim 1 is preserved. 
    which lazily propagate the operations to the $Ops$ sequence of the new ancestors when their version pointers become non-nil.
    
    Similarly, no violation of the invariant is introduced at the Node whose child pointer is updated, since all arrival points at the replaced child $old$ are transferred to the $new$ node that replaces it. As a result, all operations in the $Ops$ sequence $old$ are incorporated into the $Ops$ sequence of $new$. 
    Consequently, Claim 2 is also preserved.
    \item 
Part~\ref{ap-refresh} of Definition~\ref{ap-definition}.
    Consider new operations added to the $Ops$ sequence of Node $x$ by a \op{Refresh} when it changes $x.version$ (at \Lineref{refreshCAS}). 
    There are two cases.
    
    First, suppose the children of $x$ have not changed since their versions were read before $C$. Then, these new operations were in the $Ops$ sequence of $x$'s children last time their (non-nil) version pointers were read.
    Thus, Claim 1 is preserved at $x$ in a later configuration $C'$.

Second, suppose the children of $x$ have changed since their versions were read before $C$. 
There can be two possibilities within this case. Either new operations were not added to the $Ops$ sequences of the replaced children before $C$ or they were added. In both cases,
by \obsref{rd-chd-ver}, \op{Refresh} propagates only those operations $O$ that were in $Ops$ sequences of children before they were replaced.
Additionally, by claim 2 of the induction hypothesis, all such operations $O$ are also present in the $Ops$ sequences of the new children. Therefore, in $C'$, Claim 1 is preserved between $x$ and its new children.

\end{enumerate}
\end{proof}

\begin{observation}
For any leaf, version pointer field is never nil.
\end{observation}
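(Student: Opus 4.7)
The plan is to prove the observation by a straightforward structural induction on the configurations of an execution, relying on two simple facts: leaves are born with a non-\nil\ \x{version} pointer, and no step in the algorithm ever writes \nil\ into a \x{version} field that has already been set.

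For the base case, I would appeal directly to the Version Initialization Rules (Definition~\ref{def:vpr}). Whenever \op{Insert} or \op{Delete} allocates a new node (lines~\ref{alg:ins-changes} and~\ref{alg:del-changes}), those rules require that if the new node is a non-sentinel leaf its \x{version} is initialized to point to a fresh Version with \f{size}~$1$, and if it is a sentinel leaf it is initialized to a Version with \f{size}~$0$. In both sub-cases the pointer installed is non-\nil. The original sentinel leaves created at system initialization are initialized the same way. Hence, at the moment any leaf becomes reachable in the \textit{node tree}, its \x{version} field is non-\nil.

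For the inductive step, I would observe that the \x{version} field of any node is modified only by the CAS on line~\ref{lin:refreshCAS} of \op{Refresh}. The value installed by that CAS is the pointer $new$ allocated at line~\ref{lin:allocateVersion}, which is a freshly allocated Version object and therefore non-\nil. Consequently, a \x{version} field that is non-\nil\ in some configuration remains non-\nil\ in every subsequent configuration. Combined with the base case, this already yields the claim.

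As a small strengthening that is worth noting (though not needed for the statement itself), \op{Refresh} is in fact never invoked on a leaf. The loop at lines~\ref{alg:propagate-loop}--\ref{alg:propagate-endloop} of \op{Propagate} exits without pushing \x{next} onto \x{stack} whenever \x{next} is a leaf, so the argument to \op{Refresh} at lines~\ref{lin:r1}--\ref{lin:r2} is always an internal node. The recursive calls on lines~\ref{alg:rec1} and~\ref{alg:rec2} fire only when a child's \x{version} is \nil; by the inductive hypothesis this never occurs for a leaf child. Hence, a leaf's \x{version} pointer is not merely never \nil; it is never overwritten at all after initialization. There is no real obstacle in this proof: the claim is essentially an immediate corollary of Definition~\ref{def:vpr} together with the fact that the only writer of \x{version} installs a freshly allocated object.
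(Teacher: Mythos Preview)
Your proof is correct and follows essentially the same idea as the paper, which simply states that the observation ``directly follows'' from the relevant definition without further elaboration. Your argument is in fact more careful: the paper does not spell out the inductive step showing that no subsequent write can install \nil, nor does it remark that \op{Refresh} is never invoked on a leaf, whereas you establish both points explicitly.
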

This directly follows from Definition~\ref{ap-definition}.

Lemma~\ref{lem:sref-propx} and Lemma~\ref{lem:dref-propx} are identical to Lemmas 21 and 22 of FR-BST~\cite{FR24}.   

The following lemma shows that a successful \op{Refresh} operation on a node propagates all operations that arrived at its children before the refresh read their version pointers.
\begin{lemma}
    Suppose Node $y$ is a child of Node $x$ at a configuration $C$ and that an update operation $op$ has an arrival point at $y$ at or before $C$. 
    If \op{Refresh(x)} reads $x.version$ at line~\ref{lin:readOldVer} after $C$ and performs a successful CAS on line~\ref{lin:refreshCAS} then op has an arrival point at $x$ at or before the CAS.
    \label{lem:sref-propx}
\end{lemma}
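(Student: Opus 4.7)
The plan is to show that the operation $op$ ends up in the $Ops$ sequence read by the successful \op{Refresh}, and then invoke Part~\ref{ap-refresh} of Definition~\ref{ap-definition} to conclude that the CAS serves as (or already follows) an arrival point at $x$. Without loss of generality, assume $y$ is the left child of $x$ at $C$; the right-child case is symmetric. Let $T_0$ be the time of the read at line~\ref{lin:readOldVer}, which by hypothesis is after $C$. Consider the final iteration of the \textbf{Repeat} loop at lines~\ref{lin:Xlbegin}--\ref{lin:Xlend}: let $T_c$ be when $x.\f{left}$ is read on line~\ref{lin:readXl}, yielding some node $x_L$; let $T_v$ be the time of the last read of $x_L.\f{version}$ (either line~\ref{lin:readXlV1} or line~\ref{lin:readXlV2}); and let $T_u$ be the \textbf{Until} recheck that confirms $x.\f{left}=x_L$. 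Note $T_0 < T_c \le T_v \le T_u$, and the CAS of line~\ref{lin:refreshCAS} occurs after all of these.

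The proof then splits into two cases according to whether $x_L = y$. If $x_L = y$, then the hypothesis that $op$ has an arrival point at $y$ at or before $C < T_v$ directly gives $op \in \textit{Ops}^*(T_v, x_L)$. If $x_L \neq y$, then $x.\f{left}$ must have been modified one or more times between $C$ and $T_c$ by SCX steps; applying Part~2 of Invariant~\ref{inv:noAPLost} along the chain of child replacements carries $op$ from $y$ into the $Ops$ sequence of each successor child, so by $T_c \le T_v$ we again have $op \in \textit{Ops}^*(T_v, x_L)$. In either case, Part~\ref{ap-refresh} of Definition~\ref{ap-definition} applies: the successful CAS is assigned as the arrival point at $x$ for exactly those operations present in the $Ops$ sequence of $x_L$ at $T_v$ that do not already have an arrival point at $x$. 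So $op$ either already has an arrival point at $x$ strictly before the CAS, or gains one at the CAS itself, as required.

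The main obstacle I anticipate is the chain-of-replacements step. The invariant is phrased one replacement at a time, so I would need a short monotonicity lemma that $\textit{Ops}(C', z)$ is non-decreasing in $C'$ (immediate from the definition) and then induct on the number of SCXs to $x.\f{left}$ that occur in $(C, T_c]$. One must also be careful that each intermediate child along the chain is reachable just before being replaced, so that its $Ops$ sequence truly feeds into the next by Part~2 of Invariant~\ref{inv:noAPLost}; this follows from Lemma~\ref{lem:lem13}. A second subtlety is ruling out that $x_L$ is repeatedly attached and detached as $x$'s left child between $T_c$ and $T_u$, which would make the ``chain'' argument more delicate; finalization of removed nodes (and the fact that each SCX installs a freshly allocated node) precludes any node from appearing twice as a child of $x$, so the chain is linear.
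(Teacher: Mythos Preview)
Your proposal is correct and follows essentially the same approach as the paper: split on whether the child read by \op{Refresh} equals $y$, invoke Part~2 of Invariant~\ref{inv:noAPLost} when it does not, and conclude via Part~\ref{ap-refresh} of Definition~\ref{ap-definition}. Your treatment is in fact more careful than the paper's, which glosses over the possibility of multiple SCX replacements of $x$'s child between $C$ and the read at line~\ref{lin:readXl}; your explicit chaining argument and the finalization observation ruling out repeated children are genuine improvements in rigor.
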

\begin{proof}
    Assume $y$ is the left child of $x$ in $C$. (The case where $y$ is the right child of $x$ is symmetric.)
    
    To ensure that $op$ is propagated to $x$, we consider the following cases:
    \begin{enumerate}
        \item 
        If the \op{Refresh} reads $y$ as the left child of $x$ at line~\ref{lin:readXl} after $C$, then $op$ already has arrival point at $y$, by the assumption of this lemma.
        Now, when $op$ reads $y.version$ at line~\ref{lin:readXlV1}, 
        by Definition~\ref{ap-definition}, part~\ref{a-left}, $op$ has an arrival point at $x$ at or before the CAS at line~\ref{lin:refreshCAS}.        
        \item 
        If the \op{Refresh} reads a different Node $y'$ as the left child of $x$ at line~\ref{lin:readXl} after $C$, then by the second claim of Invariant~\ref{inv:noAPLost}, op has an arrival point at $y'$ before this read occurs.
    \end{enumerate}
    In either case, op has an arrival point at the left child of $x$ no later than line~\ref{lin:readXlV1} or line~\ref{lin:readXlV2} (whichever occurs last), and strictly before the successful CAS performed by \op{Refresh} at line~\ref{lin:refreshCAS}. Therefore, by Definition~\ref{a-left} and recursive \op{Refresh} mechanism, $op$ has an arrival point at $x$ at or before the successful CAS of the \op{Refresh}.  
\end{proof}

\begin{lemma}
    Suppose Node y is a child of Node x at a configuration C and that an update operation op has an arrival point at y before C. If a process executes the double refresh at lines~\ref{lin:r1}-\ref{lin:r2} on x after C then op has an arrival point at x at or before the end of the double refresh.
    \label{lem:dref-propx}
\end{lemma}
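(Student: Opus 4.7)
\noindent\textbf{Proof plan for \lemref{dref-propx}.}
The plan is to do a case analysis on the outcomes of the two calls to \op{Refresh}($x$) at lines~\ref{lin:r1} and~\ref{lin:r2}. Let $R^1$ denote the first \op{Refresh} (line~\ref{lin:r1}) and $R^2$ the second one (line~\ref{lin:r2}). By hypothesis, both begin after configuration $C$, so in particular their reads of $x.\f{version}$ at line~\ref{lin:readOldVer} occur after~$C$. If $R^1$ performs a successful \CAS\ on line~\ref{lin:refreshCAS}, then \lemref{sref-propx} applies directly to $R^1$ and gives an arrival point of $op$ at~$x$ before the end of the double refresh. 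The same argument handles the case where $R^1$ fails but $R^2$ succeeds, since $R^2$'s read at line~\ref{lin:readOldVer} occurs after $R^1$ began, which is after~$C$.

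The substantive case is when both $R^1$ and $R^2$ fail. First I would identify a concurrent successful \op{Refresh} $R'$ whose read occurred after~$C$ and whose \CAS\ occurred before the double refresh ended, so that \lemref{sref-propx} can be applied to~$R'$. Since $R^1$ fails, some successful concurrent \op{Refresh} $R_1'$ on $x$ performs its \CAS\ between $R^1$'s read and $R^1$'s \CAS. Code order then forces $R^2$'s read at line~\ref{lin:readOldVer} to occur strictly after $R_1'$'s \CAS. Now, since $R^2$ also fails, there is a successful concurrent \op{Refresh} $R_2'$ on $x$ whose \CAS\ occurs between $R^2$'s read and $R^2$'s \CAS, and hence strictly after $R_1'$'s \CAS; in particular $R_2'\ne R_1'$.

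The ordering of the reads of $R_1'$ and $R_2'$ is then pinned down using \lemref{seq-cas}: because $R_1'$ and $R_2'$ are two successful refreshes on the same node, one of their reads at line~\ref{lin:readOldVer} happens after the other's \CAS, and since $R_2'$'s \CAS happens after $R_1'$'s \CAS, it must be $R_2'$'s read that occurs after $R_1'$'s \CAS. Combining this with the ordering $C < R^1\text{'s read} < R_1'\text{'s \CAS}$ established above shows that $R_2'$ reads $x.\f{version}$ at a configuration strictly after~$C$. Applying \lemref{sref-propx} to $R_2'$ yields an arrival point of $op$ at $x$ no later than $R_2'$'s \CAS, which in turn precedes $R^2$'s \CAS and therefore the end of the double refresh.

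The main obstacle I expect is precisely this timing bookkeeping in the all-fail case, i.e., arguing cleanly that the second failure forces a \emph{second, distinct} successful refresh $R_2'$ and that the hypothesis of \lemref{sref-propx} (read after~$C$) is satisfied for~$R_2'$. Everything else is a direct invocation of previously established lemmas, and no new invariants about $Ops$ sequences or child-pointer changes need to be introduced beyond what \lemref{sref-propx} and \lemref{seq-cas} already package.
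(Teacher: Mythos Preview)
Your proposal is correct and follows essentially the same approach as the paper's own proof: case-split on whether either \op{Refresh} succeeds (apply \lemref{sref-propx} directly), and in the all-fail case identify two distinct successful concurrent refreshes, use \lemref{seq-cas} to show the second one's read at line~\ref{lin:readOldVer} occurs after the first one's \CAS\ (hence after~$C$), and apply \lemref{sref-propx} to that second successful refresh. Your timing bookkeeping is in fact more carefully spelled out than the paper's, and you are right that no additional invariant beyond what \lemref{sref-propx} already encapsulates is needed here.
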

\begin{proof}
    If either call to \op{Refresh} successfully performs the CAS step at line~\ref{lin:refreshCAS}, then the claim directly follows from Lemma \ref{lem:sref-propx}.
    
    Now consider the case where both \op{Refresh} operations, $R_1$ and $R_2$, fail their CAS steps. This can only occur if two other \op{Refresh} operations performed successful CAS steps, $c_1$ and $c_2$, during the execution of $R_1$ and $R_2$, respectively.

    Let $R$ be the \op{Refresh} operation that executes the CAS step $c_2$.
    By Lemma~\ref{lem:seq-cas}, the read at line~\ref{lin:readOldVer} in $R$ must occur after the successful CAS step $c_1$, implying that $R$ started after configuration $C$. 
    Now, if $op$ has an arrival point at $y$ in $C$, by Invariant~\ref{inv:noAPLost}, part 2, $op$ still has arrival point at a child of $x$, when R2 reads a pointer to its child.
    
    Applying Lemma~\ref{lem:sref-propx} to $R$ implies that op has an arrival point at $x$ no later than $c_2$, which is before the end of $R_2$.
\end{proof}

\begin{lemma}
    If $op$ has arrived at the leaf at the end of the search path for a key $k$ in the \textit{node tree} $T_c$ of configuration $C$, then in any configuration $C'$ later than $C$, $op$ has arrived at the leaf at the end of the search path for $k$ in $T_c'$.
    \label{lem:ap-hz-move-leaf}
\end{lemma}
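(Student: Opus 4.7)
The plan is to induct on the number of steps in the execution between $C$ and $C'$. The base case $C=C'$ is immediate. For the inductive step, it suffices to show that if $op$ has arrived at the leaf $\ell$ at the end of the search path for $k$ in some configuration $C''$, then after the next step $s$ (taking us to configuration $C''\!+\!1$), $op$ has arrived at the leaf at the end of the search path for $k$ in $C''\!+\!1$. Only steps that modify the node tree can change the identity of the leaf at the end of the search path for $k$; by the cases of Definition~\ref{ap-definition}, these are precisely the successful \op{SCX} steps of an \op{Insert} (Part~\ref{ap-insert-new}), a \op{Delete} (Part~\ref{ap-delete}), or a rebalancing operation (Part~\ref{ap-rotate}). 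Refresh CAS steps (Part~\ref{ap-refresh}) do not touch the node tree, so they trivially preserve the invariant, and steps of Part~\ref{ap-failed-delete} do not modify the tree either.

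For the insertion case, if $s$ replaces $\ell$ with an internal node $new$ having two leaf children, one of these two leaves $\ell^\star$ is the end of the search path for $k$ in $C''\!+\!1$ (determined by comparing $k$ with $new.\f{key}$). By Part~\ref{ap-insert-new}, the arrival points at $\ell$ are transferred to $\ell^\star$ in the proper order, so $op$ arrives at $\ell^\star$ at $s$. If $s$ replaces some other leaf, then $\ell$ is unchanged and the search path for $k$ still ends at $\ell$. For the deletion case, $s$ replaces a parent $p$ of a leaf $\ell_0$ (deleted) and its sibling $sib$ by a new node $sib'$. If neither $\ell$ nor its parent was altered, we are done; otherwise, Part~\ref{ap-delete} explicitly transfers arrival points from $\ell_0$ and $sib$ to $sib'$ and the descendant of $sib'$ on each relevant search path, so $op$ arrives at the new leaf at the end of the search path for $k$ in $C''\!+\!1$. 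The rebalancing case is handled by Part~\ref{ap-rotate}: the leaves of the replaced patch are either unchanged or replaced by new copies, and the only way $\ell$'s role on the search path for $k$ can be inherited is via such a replacement, in which case arrival points at $\ell$ are transferred to the replacement leaf (and its new ancestors).

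The main obstacle is the rebalancing case, because a rotation can restructure a subgraph containing many leaves and internal nodes simultaneously, so we must carefully argue that the new leaf at the end of the search path for $k$ (call it $\ell'$) inherits arrival points from the old $\ell$. The key observation is that every rebalancing step preserves the in-order sequence of leaves of the patch (this is a structural property of the chromatic tree rotations shown in \cite{Bthesis17,BER14}); combined with Lemma~\ref{lem:lem14} and the BST-search invariant, this ensures that the $k$-th search path in the new patch ends at the leaf in the same in-order position as the $k$-th search path in the old patch. Part~\ref{ap-rotate} moves arrival points from each old fringe leaf to the corresponding new leaf (or keeps them, if the leaf was not replaced), and also propagates them to the new ancestors in $G_{new}$, so $op$ arrives at $\ell'$ at step $s$. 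Finally, if $\ell$ is not among the fringe nodes of the replaced patch at all, it is untouched and remains the end of the search path for $k$ by Lemma~\ref{lem:lem14}. Composing these cases along the inductive chain from $C$ to $C'$ yields the lemma.
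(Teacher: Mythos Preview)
Your proof is correct. The paper dispatches this lemma in a single line, saying it ``directly follows from Invariant~\ref{inv:noAPLost}, part 2,'' whereas you induct on steps and case-split on the type of tree modification (\op{Insert}, \op{Delete}, rebalancing, refresh), invoking the relevant clause of Definition~\ref{ap-definition} in each case to track how the arrival point at the old leaf migrates to the new leaf on the search path for~$k$.

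The two routes are related but not identical. Invariant~\ref{inv:noAPLost}, part~2 says that when an \op{SCX} swings a child pointer of $x$ from $y$ to $y'$, every operation in $Ops(C,y,k)$ is also in $Ops(C',y',k)$. Strictly speaking this gives the transfer only at the patch root, and one still needs part~1 (and, in the delete and rotation cases where the old leaf is a proper descendant of the replaced patch root, the explicit transfer rules in Definition~\ref{ap-definition}) to push the arrival point all the way down to the new leaf. The paper's one-liner is therefore really a pointer back to the case analysis inside the proof of Invariant~\ref{inv:noAPLost}; your argument unfolds that case analysis directly and specializes it to leaves. Your route is more self-contained and makes the rebalancing case explicit (using the fact that chromatic-tree rotations preserve the in-order sequence of leaves, so the search path for $k$ ends at the same leaf or its fresh copy), while the paper's route is terser but leans on the reader to reassemble the pieces.
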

This directly follows from Invariant~\ref{inv:noAPLost}, part 2.

\begin{lemma}
    If $y$ is a child of $x$ in some configuration $C$ and $y'$ is the child of $x$ in some later configuration $C'$, the $Ops(C, y, k)$ is a prefix of $Ops(C', y', k)$.
    \label{lem:ap-hz-move-node}
\end{lemma}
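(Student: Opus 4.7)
The plan is to proceed by induction on the number of atomic steps (either a successful CAS on a version field, an SCX that modifies the node tree, or the assignment of an arrival point per Definition~\ref{ap-definition}) between $C$ and $C'$. In the base case $C'=C$, the child of $x$ is still $y$, and $Ops(C,y,k)$ is trivially a prefix of itself.

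For the inductive step, let $C''$ be the configuration immediately preceding $C'$, and let $y''$ be the child of $x$ in $C''$. By the inductive hypothesis, $Ops(C,y,k)$ is a prefix of $Ops(C'',y'',k)$, so it suffices to show that $Ops(C'',y'',k)$ is a prefix of $Ops(C',y',k)$. I split the argument into two cases according to the step from $C''$ to $C'$.

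In the first case, the step does not change the child pointer of $x$, so $y'=y''$. The only effect on $Ops(C'',y'',k)$ of a step is, by Definition~\ref{ap-definition}, to append new arrival points to the tail of an Ops sequence (arrival points are never re-ordered or removed once assigned). If the step assigns new arrival points at $y''$ for operations with key $k$, these are appended to the end of $Ops(C'',y'',k)$; otherwise the sequence is unchanged. Either way $Ops(C'',y'',k)$ remains a prefix of $Ops(C',y',k)$.

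In the second case, the step is an SCX that replaces $x$'s child pointer from $y''$ to a new node $y'$. I would check each sub-case of Definition~\ref{ap-definition} (Part~\ref{ap-insert-new} for an insertion that splits a leaf into an internal node, Part~\ref{ap-delete} for a deletion that replaces a sibling, and Part~\ref{ap-rotate} for a rebalancing rotation), verifying in each case that every arrival point in $Ops(C'',y'',k)$ is reassigned to $y'$ at the step, in the same relative order. This is immediate in Parts~\ref{ap-insert-new} and~\ref{ap-delete}, where the definition explicitly states that transferred arrival points are ordered by their prior arrival points at the replaced node, and operations with matching key $k$ all flow along the same side of the split or into the single replacement node $sib'$. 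For Part~\ref{ap-rotate}, by Invariant~\ref{inv:noAPLost}(1), every operation with key $k$ in $Ops(C'',y'',k)$ is also in the $Ops$ sequence of the unique fringe node on the search path for $k$, so all such operations are transferred to the (unique) ancestor chain in $G_{\mathrm{new}}$ on the search path for $k$; the definition specifies that within a fringe node the transfer preserves arrival order, which preserves the subsequence for key $k$. Because these are appended to $Ops(C',y',k)$ at a single SCX step, $Ops(C'',y'',k)$ is a prefix of $Ops(C',y',k)$.

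The main obstacle will be the rotation case: one must argue carefully that all operations in $Ops(C'',y'',k)$ really do end up attached to the new child $y'$ in the correct relative order, given that arrival points at $y''$ were themselves propagated up from fringe nodes. This requires combining Invariant~\ref{inv:noAPLost}(1) (to identify the unique fringe node through which key-$k$ operations flow) with the order-preservation clause of Part~\ref{ap-rotate} of Definition~\ref{ap-definition}, and noting that operations from different fringe nodes carry distinct keys on their search paths, so no interleaving issue arises for the fixed key $k$.
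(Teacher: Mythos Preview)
Your approach is correct and takes essentially the same route as the paper, just in far more detail. The paper's entire proof of this lemma is the single sentence ``This directly follows from Invariant~\ref{inv:noAPLost}, part 2.'' Your induction on atomic steps between $C$ and $C'$, splitting into the case where the child pointer of $x$ is unchanged (Ops sequences only grow by appending) and the case where an SCX swaps the child, is precisely what that one-liner expands into.

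The main difference is that in your Case~2 you re-derive the transfer of arrival points by revisiting Parts~\ref{ap-insert-new}, \ref{ap-delete}, and~\ref{ap-rotate} of Definition~\ref{ap-definition} directly, whereas the paper simply invokes Invariant~\ref{inv:noAPLost}(2), which has already been proved and packages exactly this fact for every SCX that changes a child pointer. Citing that invariant would collapse your second case to a single line and spare you the rotation-case analysis you flag as the main obstacle. Your more explicit treatment is not wrong, but it duplicates work already done in the proof of Invariant~\ref{inv:noAPLost}.
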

This directly follows from Invariant~\ref{inv:noAPLost}, part 2.


In this section, we consider an update operation $op$ on a key $k$ and show that it has an arrival point at the root before it terminates.
Let $C_0$ be the first configuration at or before op calls \op{Propagate}.
Let $x_1, \cdots, x_m$ be the Nodes on the local stack (from the newest pushed to the oldest) in $C_0$.
Since $x_1$ is on the stack, it is an internal Node. Let $x_0$ be the left child of $x_1$ in $C_0$ if $k<x_1.key$ or the right child of $x_1$ otherwise.
We show by induction on $i$ that before $op$ adds $x_i$ to its refresh set, $op$ has an arrival point at $x_i$.
We first prove the base case by showing that $op$ has an arrival point at $x_0$ before \op{Propagate} is called. Then move on to show that when $op$ adds some node $x_i$ from $x_1, \cdots, x_m$ to its refresh set, in some later configuration, then the $op$ has an arrival point at x and at all its descendants in its search path.

\begin{lemma}
op has an arrival point at $x_0$ at or before $C_0$, where $C_0$ is a configuration before op invokes \op{Propagate}.
\label{lem:base-prop-root}
\end{lemma}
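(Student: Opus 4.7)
The plan is to proceed by case analysis on the outcome of the \op{CTInsert} or \op{CTDelete} call that $op$ completed before invoking \op{Propagate}, pinpoint the node at which Definition~\ref{ap-definition} first anchors $op$'s arrival point, and then transport that arrival point down the search path to $x_0$. In the failed-update cases (Part 1 of Definition~\ref{ap-definition}), $op$'s traversal ended at a leaf $\ell$ on the search path for $k$; Lemma~\ref{lem:reachability} supplies a configuration during $op$'s execution in which $\ell$ lies on that path, and the definition then anchors $op$'s arrival point at $\ell$ strictly before $op$ invokes \op{Propagate}. In the successful-insert case (Part 2), $op$'s own \op{SCX} is an arrival point at the newly created internal node $\mathit{new}$ and at its leaf child $\mathit{newLeaf}$ on $op$'s side of the split. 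In the successful-delete case (Part 3), $op$'s \op{SCX} is an arrival point at the replacement node $\mathit{sib}'$ and at every descendant of $\mathit{sib}'$ lying on the search path for $k$. In every case, at some configuration $C_1$ at or before $C_0$, $op$ owns an arrival point at some node $y$ on the search path for $k$.

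The second step is to propagate this arrival point from $y$ at $C_1$ to the child-of-$x_1$ position at $C_0$. Concurrent updates between $C_1$ and $C_0$ may replace $y$ or other descendants of $x_1$ on the search path, but each such \op{SCX} copies every operation in the old subtree's \textit{Ops} sequence into the replacement, by the second clause of Invariant~\ref{inv:noAPLost}, while Lemma~\ref{lem:lem14} keeps every node that remains reachable on its search path. Iterating Invariant~\ref{inv:noAPLost} along the chain of replacements below $x_1$ carries $op$'s arrival point forward to whichever node sits in the child-of-$x_1$ position in $C_0$, which is precisely $x_0$. If $y$ happens to be a strict ancestor of $x_0$ on $op$'s stack, I would combine this with the first clause of Invariant~\ref{inv:noAPLost} to push the arrival point down one extra step.

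The main obstacle will be correctly matching, in each subcase, the node $y$ supplied by Definition~\ref{ap-definition} with the precise position $x_0$ occupies in $C_0$: for a successful insert I need to decide whether $x_1$ is $\ell$'s former parent (so $x_0 = \mathit{new}$) or $\mathit{new}$ itself (so $x_0 = \mathit{newLeaf}$), and for a successful delete whether $x_1$ sits above or inside $\mathit{sib}'$'s subtree. A secondary subtlety is that $x_1$ may already be finalized in $C_0$, in which case I would rely on its child pointers being frozen after finalization so that $x_0$ is well-defined and the iterated application of Invariant~\ref{inv:noAPLost} runs only up to the moment $x_1$ was removed from the tree. Once these bookkeeping steps are in place, the conclusion follows directly.
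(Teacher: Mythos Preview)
Your approach is essentially the same as the paper's: a case analysis on the outcome of \op{CTInsert}/\op{CTDelete}, followed by transporting the resulting arrival point to $x_0$. The paper packages the transport step as Lemma~\ref{lem:ap-hz-move-leaf}, which asserts that once $op$ has arrived at the leaf at the end of its search path, it remains at that leaf (or its replacement) in every later configuration; the paper then implicitly treats $x_0$ as that leaf. You instead invoke Invariant~\ref{inv:noAPLost} directly, which is the same content unpacked.

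Two minor remarks. First, the paper's version is cleaner because in every case $op$ already arrives at a \emph{leaf} on the search path for $k$: at $\mathit{newLeaf}$ for a successful insert, at the leaf descendant of $\mathit{sib}'$ on the path for $k$ for a successful delete, and at $\ell$ for a failed update. Hence there is never a need to push an arrival point downward via Part~1 of Invariant~\ref{inv:noAPLost}; the leaf-tracking Lemma~\ref{lem:ap-hz-move-leaf} then handles all subsequent structural changes (including those from $op$'s own rebalancing) uniformly. Your worries about $y$ being a strict ancestor of $x_0$, about distinguishing whether $x_1$ is $\ell$'s former parent or $\mathit{new}$ itself, and about $x_1$ being finalized in $C_0$ do not arise under the paper's reading and can be dropped. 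Second, the mechanism that carries arrival points across an \op{SCX} is the arrival-point definition itself (Parts~\ref{ap-insert-new}--\ref{ap-rotate} of Definition~\ref{ap-definition}); Invariant~\ref{inv:noAPLost} is a consequence of those definitions rather than the primitive fact, though citing it as you do is adequate for the argument.
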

\begin{proof}
We consider several cases.
\begin{itemize}
    \item 
    Suppose $op$ is either a delete or an insert operation, as described in part~\ref{ap-failed-delete} of Definition~\ref{ap-definition}.
    Then $op$ reaches a leaf Node $\ell$ from some internal node $x$ and determines that $\ell$ does not contain key $k$. 
    The arrival point of $op$ at $\ell$ is the latest configuration $C$ prior to $C_0$ in which $op$ reads a pointer to $\ell$.
    By Lemma~\ref{lem:ap-hz-move-leaf}, in the later configuration $C_0$, the arrival point of $op$ is at the leaf $x0$ in the search path of $k$.

    \item 
    Suppose $op$ is an update operation, as described in part~\ref{ap-insert-new} of Definition~\ref{ap-definition}. By definition, the SCX that adds a new leaves Node $\ell$ and $newLeaf$ is the arrival point of the $op$ at $newLeaf$. Let $C$ be the configuration immediately after this SCX. 
    Then $\ell'$ lies on the search path for key $k$ in configuration $C$.
    
    To see this, let $x$ be the node whose child pointer is changed by the SCX to add $\ell'$.
    By Lemma~\ref{lem:reachability}, $x$ was reachable in some earlier configuration prior to the SCX. By Lemma~\ref{lem:lem13}, $x$ remains reachable at the time the SCX is performed. As a result, By Lemma~\ref{lem:lem14}, $x$ lies on the search path for $k$ when SCX occurs.
    Therefore, $\ell'$ is on the search path for $k$ in configuration $C$.
    
    Now, consider a later configuration $C_0$ in which the search path for $k$ ends at a (possibly different) leaf Node $x_0$.
    Then, by Lemma~\ref{lem:ap-hz-move-leaf}, the arrival point of $op$ lies at $x_0$, the leaf in the search path for $k$ at $C_0$, whether or not $x_0 = \ell'$.

    \item If $op$ is a \op{delete}($k$) operation, as described in part~\ref{ap-delete} of Definition~\ref{ap-definition}, then at the time SCX modifies the tree, $op$ has an arrival point at a descendant leaf of $sib'$ that lies on the search path of $k$. Further, any subsequent modification to the tree still ensures that $op$ has an arrival point at $x_0$ before $C_0$, by arguments similar to those give above. 

    \item Similarly, if $op$ is a rebalance operation, as described in part~\ref{ap-rotate} of Definition~\ref{ap-definition}. Then the same reasoning as above applies to show that the arrival point of $op$ is at the leaf $x_0$ in configuration $C_0$.
\end{itemize}
\end{proof}


\begin{lemma}
op has an arrival point at root before it terminates.
\end{lemma}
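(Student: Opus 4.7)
The plan is to prove by induction on the order in which nodes are added to the local set \x{refreshed} that whenever a node $y$ is inserted into \x{refreshed} at line~\ref{alg:prop-update-refresh}, $op$ has an arrival point at $y$. Because the outer loop at lines~\ref{prop-repeat}--\ref{prop-until} terminates exactly when \x{Root} enters \x{refreshed}, this immediately yields the lemma: $op$ acquires an arrival point at the root before \op{Propagate} returns, and hence before $op$ itself terminates.

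Fix an outer-loop iteration that pops and double-refreshes a node $y$. The inner loop at lines~\ref{alg:propagate-loop}--\ref{alg:propagate-endloop} exits the first time the local variable $next$ becomes a node that is either a leaf or already in \x{refreshed}; call this node $z$, and let $C$ denote the configuration of the last read of the child pointer of $y$ that returned $z$. I would then verify the three hypotheses of \Cref{lem:dref-propx}: (i) $z$ is a child of $y$ at $C$ by construction; (ii) $op$ has an arrival point at $z$ strictly before $C$; and (iii) the double \op{Refresh} on $y$ at lines~\ref{lin:r1}--\ref{lin:r2} occurs after $C$, since it follows the exit of the inner loop. The lemma would then deliver an arrival point of $op$ at $y$ before $y$ is added to \x{refreshed}.

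Condition (ii) is the crux of the argument and splits into two cases. If $z \in \x{refreshed}$, the induction hypothesis supplies an arrival point of $op$ at $z$ at the earlier moment it was placed in \x{refreshed}, and that moment precedes $C$. If $z$ is a leaf, I would rely on a supporting invariant that the node $y$ popped from the stack lies on the search path for $k$ at $C$; this follows because $y$ was placed on the stack at a moment when it lay on the search path (either by the initial full descent in the first outer iteration, or by the inner loop of a later iteration) together with \Cref{lem:lem14}, which preserves on-path status as long as the node remains reachable. Since the inner loop follows search-direction pointers from $y$, the leaf $z$ it reaches is the leaf at the end of the search path for $k$ in $T_C$. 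Combining \Cref{lem:base-prop-root} with \Cref{lem:ap-hz-move-leaf} then yields an arrival point of $op$ at precisely this leaf before $C$.

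The main obstacle is coping with concurrent rebalancing, which can splice a fresh internal node between $y$ and a previously refreshed descendant, finalize $y$, or replace the leaf beneath $y$ with a new copy. I would address the first by observing that the inner loop's push-until-exit behaviour forces any newly interposed ancestors to be refreshed before $y$, so the chain of refreshed nodes extends upward one node at a time without gaps; the second by appealing to \Cref{lem:lem13}, which guarantees that finalized nodes retain their child pointers so the inner loop's descent remains well defined; and the third by invoking \Cref{lem:ap-hz-move-node} to transport arrival points horizontally across replacements. Finally, a short termination argument using \Cref{lem:staysCT} and the bounded number of concurrent operations shows that the inner loop cannot descend indefinitely and that the outer loop cannot climb indefinitely, so \x{Root} eventually enters \x{refreshed} and \op{Propagate} terminates.
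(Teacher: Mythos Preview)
Your proposal is correct and follows essentially the same route as the paper: induct on the order in which nodes enter \x{refreshed}, discharge each step via \Cref{lem:dref-propx}, and handle the two inner-loop exit conditions (child already in \x{refreshed}; child is a leaf) with the induction hypothesis and with \Cref{lem:base-prop-root} plus \Cref{lem:ap-hz-move-leaf}, respectively. Your closing termination argument is superfluous and not quite right in a merely lock-free setting---the lemma is conditional on \op{Propagate} returning, and the outer loop returns exactly when \x{Root} enters \x{refreshed}, so the induction alone suffices.
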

\begin{proof}
We prove by the induction on the sequence of nodes, $x_1, \cdots, x_m$, $op$ adds to its refreshed set during execution of \op{Propagate}.

    \textbf{Base case:} $op$ has an arrival point at the leaf $x_0$ in the search path of op's key $k$. This follows from lemma~\ref{lem:base-prop-root}.
    
    \textbf{Induction step:} Suppose op adds Node $x_i$ to its refreshed set and has an arrival point at the Node. We need to show that before op adds a node $x_{i+1}$ to its refreshed set it will have an arrival point at $x_{i+1}$. 
    Note, in order to add $x_{i+1}$ to its refreshed set, op first reads $x_{i+1}$ from top of the stack, reads a child $x_j$ of $x_{i+1}$. \textit{W.l.o.g.,} assume $x_j$ is the left child of $x_{i+1}$ such that $x_j.key$ $<$ $x_{i+1}.key$. (The argument when $x_j$ is the right child, i.e., $x_j.key$ $\geq$ $x_{i+1}.key$ is symmetric.) Note that by the way Propagation adds Nodes in its refresh set, there are following cases for $x_j$:

    \textbf{Case 1:} Node $x_j$ $=$ $x_i$. Then $x_j$ could either be a leaf or an internal node in the refreshed set. In the former case, op will execute a double refresh before adding $x_{i+1}$ to its refresh set followed by removing $x_{i+1}$ from the stack. This means from lemma~\ref{lem:dref-propx}, op will also have its arrival point at $x_{i+1}$, which is before the double refresh ends and thus is definitely before $x_{i+1}$ is removed from the stack.
    
    In the latter case, $x_i$ is the left child of $x_{i+1}$ already in the refreshed set of op. Therefore, op will execute a double refresh on $x_{i+1}$ ensuring it arrives at $x_{i+1}$ before the double refresh returns and then removes $x_{i+1}$ from the stack followed by adding $x_{i+1}$ to its refresh set. Therefore, the op has an arrival point at $x_{i+1}$ before it adds $x_{i+1}$ to its refresh set and therefore definitely before propagation terminates.

    \textbf{Case 2:} Node $x_j$ $\neq$ $x_i$. Then, $x_j$ could either be a leaf or the internal node \textbf{not} in the refreshed set of op. In the former case, op will have its arrival point moved to the leaf $x_j$ from lemma~\ref{lem:ap-hz-move-leaf}. Therefore, by lemma~\ref{lem:dref-propx}, before $op$ returns from the double refresh at $x_{i+1}$, $op$ has its arrival point at $x_{i+1}$. 
    
 In the latter case, $x_{i+1}$ is not the next node to be added to the refreshed set, contradicting the choice of $x_{i+1}$.
    Infact, $op$ will traverse down the new search path for $k$ adding the nodes it encounters to its stack until the traversal hits a leaf node or an internal node already in the refresh set (refer to \op{Propagate()} in Figure~\ref{abt-pseudocode}).
    In either case, the inductive argument restarts and follows the same reasoning as in Case 1.
    
    Since \op{Propagate} procedure only terminates when the stack is empty, this implies that the root must have been added to the refresh set and subsequently removed from the stack.
    The double refresh at the root ensures that $op$ has an arrival point there prior to its removal from the stack.
    So, it follows by induction that $op$ has an arrival point at the root before $op$ terminates.    
\end{proof}

\subsection{Linearization is consistent with responses}

The following invariant ensures that the $Ops$ sequences associated with nodes in a Node $x$'s left subtree only contain operations on keys less than $x.key$. Similarly, $Ops$ sequences for nodes in the right subtree only contain operations on keys greater than or equal to $x.key$. This is crucial for maintaining the BST property in the Version tree, even as the \textit{node tree} undergoes structural changes.

\begin{invariant} (BST Property of Ops.)
    For every internal Node x with any Node $x_L$ and $x_R$ in its left or right subtree, respectively, in any configuration C,
    \begin{enumerate}
        \item every operation in $Ops(C, x_L)$ has key less than x.key 
        \item every operation in $Ops(C, x_R)$ has key greater than or equal to x.key 
    \end{enumerate}
    \label{inv:subtree_opskeys}
\end{invariant}
\begin{proof}
    Initially, all $Ops$ sequences are empty, so the both claims are trivially true.
    
    Assume both claims hold up to some configuration $C$. We show that both claims will hold in a later configuration $C'$. Let $s$ be a step that adds or removes a key from the left subtree of Node $x$, leading to the configuration $C'$. The argument for the right subtree is symmetric.
    We consider various types of steps that can add or remove keys, or modify $Ops$ sequences within the left subtree of $x$.   
       
   \textbf{Case 1:} $s$ is a successful SCX by a $delete(k)$ operation described in part~\ref{ap-failed-delete} of Definition~\ref{ap-definition} (The argument for the insert operation is similar).   
   In this case, the Operation is added to the $Ops$ sequence of $\ell$ at the time when it read the left child pointer of $x$. Since, the tree is not modified, $\ell$ remains reachable by $x$'s left child pointer, and $l.key$ $=$ $k$ $<$ $x.key$, preserving the Invariant. 
   
   by Definition, when $s$ replaces $\ell$ by $\ell'$ all operations in the $Ops$ sequence of $\ell$ are transferred to $\ell'$. Such that $Ops(C', \ell')$ $=$ $Ops(C', \ell') \langle insert(k): false\rangle$. By lemma~\ref{lem:staysCT}, the chromatic tree is still a BST after $s$. Since $\ell$ and $\ell'$ are left child of $x$, all keys in $Ops$ sequence of $\ell'$ are strictly less that $x.key$.

    
    \textbf{Case 2:} 
    $s$ is a SCX due to $insert(k)$ described in part~\ref{ap-insert-new} of Definition~\ref{ap-definition}.
    
    In this case, $s$ replaces a leaf $\ell$ by a new internal Node $new$ with two new leaf children, $newLeaf$ and $\ell'$.
    $newLeaf$ contains $k$, and $\ell'$ contains $l.key$

    Lemma~\ref{lem:staysCT} implies that the \textit{node tree} is a chromatic tree in all configurations and therefore always satisfies the BST property. This implies that when $new$ is added to left subtree of $x$ in $C'$, then $new.key$ $<$ $x.key$.
    
    By Definition~\ref{ap-definition}, part~\ref{ap-insert-new}, all previously arrived operations (including the current insert) with keys less than $new.key$ are transferred to $new$'s left child, and those with keys greater than or equal to $new.key$ are transferred to $new$'s right child.

    As a result, all operations in the $Ops$ sequence of $new$ will have keys less than $x.key$. Further, within the subtree at $new$, the operations in $Ops$ sequence of $\ell'$ and $newLeaf$ are distributed according to $new.key$, preserving the invariant.
    

    

    \textbf{Case 3:} $s$ is a SCX for $delete(k)$ described in part~\ref{ap-delete} of Definition~\ref{ap-definition}.
    
    In this case, in the subtree of $x$, step $s$ replaces an internal Node $p$ with key $k$ (whose children are a leaf $\ell$ with key $k'$ and its sibling $sib$ with key $k''$) by a new copy $sib'$ of $sib$. 
    By lemma~\ref{lem:staysCT}, the resulting tree in configuration $C'$ after the step $s$ remains a chromatic tree. Therefore all keys for the subtree rooted at the new internal node $sib'$ are less than $x.key$. 
    
    Moreover, from part~\ref{ap-delete} of Definition~\ref{ap-definition}, this step $s$ is the arrival point of all operations that arrived at $\ell$ and $p$ before $s$ at $sib'$ and at all its descendants who are in the search path of these operations. 
    As a result, the $Ops$ sequence of $sib'$ is updated to contain all operations who arrived at $\ell$, $p$ or $sib$ before step $s$ with keys less than $x.key$. Additionally, all descendants of $sib'$ whose $Ops$ sequence is updated is appended with all operations which arrived at $\ell$ or $p$ before $s$, such that for any descendant $d$ with $d_L$ and $d_R$ as its left and right child, in configuration $C'$, the $Ops$ sequence at $d_L$ contains all operations with keys less than $d.key$ and the $Ops$ sequence at $d_R$ contains all operations with keys greater than equal to $d.key$.      
    Consequently, in configuration $C'$, the updated ops sequences at $sib'$ and all its descendants follow claim 1 and claim 2 and all operations in any $x_L$ in the left subtree of Node $x$ have keys less than $x.key$. 

   \textbf{Case 4:} $s$ is a successful SCX for rebalancing operation described in part~\ref{ap-rotate} of Definition~\ref{ap-definition}. 
   It atomically replaces a subgraph of Nodes $G_{old}$ with a new subgraph $G_{new}$, where $old$ and $new$ are roots of their respective subgraphs and are the left child of $x$.   
   By lemma~\ref{lem:staysCT}, the resulting subtree remains a chromatic tree in $C'$, infact, $s$ does not add any new key and maintains the overall BST structure. Only the $Ops$ sequences of the new Nodes in $G_{new}$ are updated.
   By Definition~\ref{ap-definition}, part~\ref{ap-rotate}, operation from fringe Nodes of $G_{old}$, which is same as the fringe Nodes in $G_{new}$, are transferred to the $Ops$ sequences of appropriate new ancestors in the search path of their keys. Therefore, all keys in $Ops$ sequences of operation in the left subtree of $x$ remain less than $x.key$ as they were before the rebalance step. 
   
   \textbf{Case 5:} $s$ is a successful CAS by a refresh operation described in part~\ref{ap-refresh} of Definition~\ref{ap-definition}. 
   Consider that $s$ changes a version field of a Node $y$ in the left subtree of $x$. Let, $y_L$ and $y_R$ be the left and right child of $y$ read by its \op{Refresh}, respectively.
   Since lemma~\ref{lem:staysCT} implies that in all configurations BST property is preserved, at the time $s$ occurs at $y$, keys of all operations added to the $Ops$ sequence of $y$ in $C'$ are strictly less than $x.key$. Precisely, by Definition~\ref{ap-definition}, part~\ref{ap-refresh}, operations added to $Ops(C', y)$ come from the $Ops$ sequence of $y_L$ and $y_R$.


    
\end{proof}

\begin{observation}
    For each leaf Node x, x.version always points to a Version with key x.key and no children.
\end{observation}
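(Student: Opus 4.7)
The plan is to establish the observation by showing (i) the \f{version} field of every leaf is installed exactly once, at node creation, and (ii) that initial value satisfies the claimed properties. The argument proceeds by structural induction on the sequence of writes to leaf \f{version} fields and relies on the fact that the \op{Refresh} routine, which is the only place in the code besides node creation where a \f{version} field is ever modified, is never executed on a leaf.

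First I would justify that no \op{Refresh} is ever invoked on a leaf node. There are two kinds of \op{Refresh} calls. Top-level invocations come from \op{Propagate}, and the inner loop at lines~\ref{alg:propagate-loop}--\ref{alg:propagate-endloop} is designed to \textbf{exit} before pushing a leaf onto \x{stack}, so every node popped at line before the double refresh is internal. Recursive invocations come from \op{Refresh} itself at lines~\ref{alg:rec1} and~\ref{alg:rec2}, but they are guarded by the test that the child's \f{version} is \nil\ (lines~\ref{alg:ref-ifl} and~\ref{alg:ref-ifr}). Appealing to \Cref{def:vpr}, every leaf (sentinel or not) is created with a non-\nil\ \f{version}, so the guard is never satisfied for a leaf child. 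Combined with the previous observation in the text, that leaves' \f{version} pointers are never \nil, the \CAS\ at line~\ref{lin:refreshCAS} therefore never targets a leaf.

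Next I would argue that \f{version} fields of leaves are also never overwritten in any other way. The only other writes to \f{version} in the pseudocode occur in the initialization of a freshly allocated Node (lines~\ref{alg:ins-changes} and~\ref{alg:del-changes} of \op{Insert} and \op{Delete}, as specified by \Cref{def:vpr}). Since \op{LLX}/\op{SCX} records are not shared across Node allocations, this write happens before the Node is first made reachable. Thus each leaf's \f{version} field is assigned exactly once in its lifetime, namely at creation.

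Finally I would plug in the initialization rules of \Cref{def:vpr}. A non-sentinel leaf is created with a Version of \f{size}~$1$; a sentinel leaf is created with a Version of \f{size}~$0$; in both cases, the key of the Version equals the key of the leaf by the last clause of \Cref{def:vpr}, and the Version has no children (the \f{left} and \f{right} fields are never set, since they are only set to non-\nil\ values by the constructor used inside \op{Refresh} at line~\ref{lin:allocateVersion}, which, as shown above, is never invoked on a leaf). Combining this with the uniqueness established in the previous step yields the observation. The only mildly subtle point, and the one I would handle with care, is the argument that recursive \op{Refresh} can never descend into a leaf: this is where both the invariant ``leaf \f{version} fields are never \nil'' and the structural restriction of the inner loop in \op{Propagate} are needed; everything else is a direct appeal to \Cref{def:vpr}.
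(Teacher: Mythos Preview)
Your argument is correct. The paper states this as a bare observation with no accompanying proof, so there is nothing to compare against; your justification---showing that \op{Refresh} is never invoked on a leaf (top-level calls because leaves are never pushed onto \x{stack}, recursive calls because the \nil-guard fails by the earlier observation that leaf \f{version} fields are never \nil) and then invoking \Cref{def:vpr}---is exactly the argument one would supply if asked to fill in the details.
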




\begin{invariant} (prefix property of Ops.)
    For all configurations and all Nodes x that are reachable in C, if $x_L$ and $x_R$ are the left and right child of x in C, then 
    \begin{enumerate}
        \item for keys k $<$ x.key, Ops(C, x, k) is a prefix of Ops(C, $x_L$, k), and
        \item for keys k $\geq$ x.key, Ops(C, x, k) is a prefix of Ops(C, $x_R$, k).
    \end{enumerate}
    \label{inv:opsk_prfx}
\end{invariant}
\begin{proof}
    In the initial configuration, $Ops$ sequences of all Nodes are empty so the claim is trivially satisfied.

    Assume that the claim is satisfied in some configuration $C$. We want to show that after a step $s$ that changes $Ops$ sequence of a Node $x$ leading to a configuration $C'$, the claim still holds.
    The step $s$ could occur due to an insert, delete, rebalance and refresh. we consider them one by one and argue that the invariants holds after a step $s$.

    First, consider a delete case that reaches a leaf $\ell$ and return $false$ without modifying the \textit{node tree}, as described in part~\ref{ap-failed-delete} in Definition~\ref{ap-definition}. Let $\ell$ be the left child of its parent $p$ (case when $\ell$ is a right child is symmetric). 
    Here in configuration $C$ since delete reached $\ell$, it is reachable from $p$, by lemma~\ref{lem:reachability}. Additionally, since the \textit{node tree} is a BST in all configurations, key $k'$ for delete is less than $p.key$ (Lemma~\ref{lem:staysCT}).
    In configuration $C$, by induction hypothesis, for all keys $k$ $<$ $p.key$, $Ops(C, p, k)$ is a prefix of $Ops(C, l, k)$.
    This delete adds its arrival point for $k'$ $<$ $p.key$ to $\ell$. Such that, $Ops(C', l, k)$ $=$ $Ops(C, l, k) \langle delete(k):false\rangle$. Thus, $Ops$ sequence of $\ell$ has only grown, ensuring in $C'$, $Ops(C', p, k)$ is a prefix of $Ops(C', l, k)$. 

    The insert case that returns false is similar.
        

    Consider the $insert(k)$ that replaces a leaf $\ell$ by a new internal Node $new$ with two new leaf children $\ell'$ and $newLeaf$, as described in part~\ref{ap-insert-new} in Definition~\ref{ap-definition}. Let $new$ be the left child of $p$ and $newLeaf$ be the left child of $new$. The arguments for other cases are similar.

    First we will tackle the relation between $new$ and the two new leaves.
    In $C'$, all the arrival point at $\ell$ are transferred to $new$ and $newLeaf$, plus the arrival point of insert.
    Therefore, by Definition~\ref{ap-definition}, part~\ref{ap-insert-new}, in $C'$, for all keys keys $k'$, $k'$ $<$ $new.key$, $Ops(C', new, k')$ is a prefix of $Ops(C', newLeaf, k')$. The key $k'$, also includes $k$.     
    Similarly, in $C'$, for all keys keys $k''$ $\geq$ $new.key$, $Ops(C', new, k'')$ is a prefix of $Ops(C', \ell', k'')$. Note that $k''$ $\neq$ $k$. Therefore, $Ops(C', \ell', k)$ is empty.  

    Now, consider the relation between $new$, $p$ and $\ell$.   
    In $C$, for all keys $k'$ $<$ $p.key$ ($k'$ also includes $k$), $Ops(C, p, k')$ is a prefix of $Ops(C, l, k')$ (By induction hypothesis).

    In $C'$, note for a delete that returns fail, its arrival point is at $C'$ and not $s$.
    Therefore, for all keys $k'$ $\neq$ $k$ and $k'$ $<$ $p.key$, $|Ops(C', new, k')|$ $\geq$ $|Ops(C, l, k')|$. For $k'$ $=$ $k$, $|Ops(C', new, k')|$ $=$ $|Ops(C, l, k')|$. In sum, the $Ops$ sequence at $new$ can only grow, combined with induction hypothesis, for all keys $k'$ $<$ $p.key$,  $Ops(C', p, k')$ is a prefix of $Ops(C', new, k')$. 
    Thus, the the invariant is preserved.
    

    Consider the delete that replaces an internal Node $p$ whose children are a leaf $\ell$ and its sibling $sib$ by a new node $sib'$, as described in part~\ref{ap-delete} in Definition~\ref{ap-definition}.
    Assume $p$ is left child of $gp$ and $\ell$ is left child of $p$ in $C$, therefore, $sib'$ is the left child of $gp$ in $C'$. The other case is symmetric.
    
    The delete transfers all prior arrival point from $p$, $\ell$ and $sib$ to $sib'$ and to the descendants of $sib'$. 
    Precisely, the prior arrival point for a key $k$ in $C$ is transferred to all descendant nodes that are in the search path for $k$ in $C'$. In addition, the delete also is assigned an arrival point at $sib'$ and all the descendants in its search path in $C'$. 
    As a result, every op on a key $k$ $<$ $gp.key$ in the $Ops(C, p, k) \cup Ops(C, l, k) \cup Ops(C, sib, k)$ is appended to Nodes in the search path for k in the subtree rooted at $sib'$ in $C'$. 
    Let, $d$ be an arbitrary node in the search path for $k$ and $dp$ be its parent, then for each such $d$ and $dp$ pair, $Ops(C', dp, k)$ is a prefix of $Ops(C', d, k)$. By lemma~\ref{lem:staysCT}, such a search path exists in the subtree rooted at $sib'$ and nowhere else.

    What remains for the delete case is showing that the invariants are preserved for $sib'$ and $gp$ in $C'$.
    By part~\ref{ap-delete} of Definition~\ref{ap-definition},
    for all keys $k$ < $gp.key$, $|Ops(C', sib', k)|$ $\geq$ $|Ops(C, sib, k)| + |Ops(C, l, k)|$. Note for the delete returning false, the arrival point can be at $C'$, hence, at $C'$, $sib'$ can have more operations that those transferred from leaves in $C$.

    By induction hypothesis,
    $Ops(C, gp, k)$ is a prefix of 
    
    \noindent
    $Ops(C, p, k)$ which is a prefix of $Ops(C, sib, k) \cup Ops(C, l, k)$. Not that the $Ops$ sequence of $Ops(C', sib', k)$ has only grown, such that $|Ops(C', sib', k)|$ $\geq$ $|Ops(C', gp, k)|$. Consequently, $Ops(C', gp, k)$ is a prefix of $Ops(C', sib', k)$.



    Consider a rebalance operation RB1 in~\cite[Chap 6, Fig. 6.5]{Bthesis17}. All arrival points from the fringe nodes in configuration $C$ are transferred to the newly created nodes in the configuration $C'$, by part~\ref{ap-rotate} in Definition~\ref{ap-definition}. Let $n$ and $n_R$ be the two newly added nodes, such that $n_R$ is right child of $n$. 

\begin{itemize}
    \item 
    For all keys $k$ $\geq$ $n_R.key$ $\geq$ $n.key$, 
    $Ops(C', u_{xr}, k)$ $=$ $Ops(C', n_R, k)$ $=$ $Ops(C', n, k)$.

    \item For all keys $k$, such that $n_R.key$ $>$ $k$ $\geq$ $n.key$,
    
    $Ops(C', u_{xlr}, k)$ $=$ $Ops(C', n_R, k)$ $=$ $Ops(C', n, k)$.

    \item For all keys $k$ $<$ $n.key$, 
    $Ops(C', u_{xll}, k)$ $=$ $Ops(C', n, k)$.
\end{itemize}
    Let, $u_x$ be the left child of $u$, since in configuration $C$, for all keys $k$ $<$ $u.key$, $Ops(C, u, k)$ is a prefix $Ops(C, u_{xll}, k) \cup Ops(C, u_{xlr}, k) \cup Ops(C, u_{xr}, k)$. Further in $C'$, $Ops$ sequence of $n$ can only grow, implying $Ops(C', u, k)$ is a prefix of $Ops(C', n, k)$.     
    
It remains to check that a successful CAS step of a \op{Refresh(x)} that updates $x.version$ preserves the invariant. 
Parts~\ref{ap-refresh} of Definition~\ref{ap-definition} appends new operations to $Ops(C', x, k)$.
By Invariant~\ref{inv:subtree_opskeys}, the new operations can only come from $x_L$ if $k$ $<$ $x.key$ or $x_R$ if $k$ $\geq$ $x.key$. Thus, in case $x_L$ and $x_R$ were still the child in $C'$, $Ops(C', x, k)$ simply becomes a longer prefix of $Ops(C', x_L, k)$ or $Ops(C', x_R, k)$.

Suppose, $x$'s children changed from $x_L$ and $x_R$ to $x_L'$ and $x_R'$, respectively, at some point before the CAS in \op{Refresh} but after the versions of $x_L$ and $x_R$ were read.
Even in this case, the invariant holds because, by Claim 2 of \invref{noAPLost}, every operation that arrived in the $Ops$ sequences of $x_L$ and $x_R$ before they were replaced is transferred to $x_L'$ and $x_R'$. Further, by \obsref{rd-chd-ver}, only those operations that arrived at $x_L$ and $x_R$ before they were replaced by the new children are added to the $Ops$ sequence of $x$.

    
    \end{proof}

\begin{lemma}
    In every configuration C, for every Node x that is reachable in C and has a non-nil version, 
    \begin{enumerate}
        \item the Version tree rooted at x.version is a BST whose leaves contains exactly the keys that would be in a set after performing Ops(C, x) sequentially, and 
        \item the responses recorded in Ops(C, x) are consistent with  executing the operations sequentially.
    \end{enumerate}
    \label{inv:version_keys_resp}
\end{lemma}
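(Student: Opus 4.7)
The plan is to prove both claims simultaneously by induction on the sequence of configurations in the execution, showing that no step can violate either claim. The base case is trivial: in the initial configuration, only the sentinel leaves have non-\nil\ versions, their $Ops$ sequences are empty, and \defref{vpr} initializes each leaf's version object to a size-0 or size-1 BST with the appropriate key. For the inductive step, I would enumerate the steps that could invalidate either claim: (i) an SCX that inserts a new internal node with a newly-initialized leaf, (ii) an SCX that deletes a leaf by replacing its parent, (iii) a rebalancing SCX, (iv) a successful CAS at \Lineref{refreshCAS} that installs a new version at a node $x$, and (v) the recursive refresh that changes an internal node's version from \nil\ to non-\nil. Changes to $Ops$ sequences without corresponding version changes (which occur in cases (i)-(iii)) also need inspection.

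For the leaf cases in (i) and (ii), \defref{vpr} directly ensures that the newly installed leaf version encodes exactly the one (or zero) keys in its $Ops$ sequence, and the single operation recorded there (the \op{Insert} or \op{Delete}) has the correct response. For the internal-node cases, I would use the fact that the new internal node (or new node in a rebalancing patch) is created with a \nil\ version, which exempts it from the claim until the version is first set, so these steps cannot themselves falsify the invariant for the new node. I still need to check that the invariant continues to hold for nodes whose $Ops$ sequence grew without their version changing (e.g., the rebalancing transfer rule in \defref{ap-definition} part~\ref{ap-rotate}). But inspection shows that in each such case the affected node also has a \nil\ version after the step, because the new patch nodes are all created with \nil\ versions, so the invariant's premise fails and there is nothing to prove.

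The heart of the argument is case (iv): the successful CAS at \Lineref{refreshCAS} that installs a freshly allocated version on node $x$. Let $x_\ell, x_r$ be the left and right children read by \op{Refresh}, with non-\nil\ versions $v_\ell, v_r$ read at lines~\ref{lin:readXlV1}/\ref{lin:readXlV2} and \ref{lin:readXrV1}/\ref{lin:readXrV2}. The repeat-until loops of \op{Refresh} ensure that $x_\ell$ was still the child of $x$ at the moment its version was read, and analogously for $x_r$. By the inductive hypothesis applied to $x_\ell$ and $x_r$ at those read points, $v_\ell$ and $v_r$ are legal BSTs encoding $Ops(*, x_\ell)$ and $Ops(*, x_r)$ respectively, with responses consistent with sequential execution. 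The new version object allocated at \Lineref{allocateVersion} stitches $v_\ell$ and $v_r$ together under a root with key $x.\f{key}$; by Invariant~\ref{inv:subtree_opskeys} (the BST property on $Ops$), the left subtree holds only keys $< x.\f{key}$ and the right subtree only keys $\geq x.\f{key}$, so the combined object is itself a legal BST. By Definition~\ref{ap-definition} part~\ref{ap-refresh}, the operations appended to $Ops(C', x)$ by this CAS are exactly the operations in $Ops(*, x_\ell)$ and $Ops(*, x_r)$ not already present in $Ops(C, x)$; by Invariant~\ref{inv:opsk_prfx} applied just before the CAS, $Ops(C, x)$ was already a per-key prefix of the children's $Ops$ sequences, so appending these operations yields exactly the multiset of keys represented by the combined version tree, and the responses are those previously recorded in the children's version trees and are therefore consistent with sequential execution.

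The main obstacle I expect is case (v), the recursive refresh that first sets an internal node's version from \nil\ to non-\nil, because the node's arrival-point history may have been built up via the transfer rules in parts~\ref{ap-insert-new}, \ref{ap-delete}, and especially \ref{ap-rotate} of Definition~\ref{ap-definition} long before the version is materialized. I would handle this by showing that the recursive \op{Refresh} descends through all \nil\ descendants until it reaches nodes with non-\nil\ versions (leaves or previously-refreshed internal nodes), then rebuilds versions bottom-up. By induction on the depth of recursion and the inductive hypothesis at the leaves of the recursion, each recursively installed version faithfully encodes the $Ops$ sequence of its node. The subtle point is to argue that the arrival points transferred en bloc by the rebalancing SCX are reflected: by the construction of the transfer rule in part~\ref{ap-rotate}, any operation that arrived at a fringe node of the old patch is recorded both at the corresponding descendant in the new patch and at each new ancestor on its search path, and the recursive refresh reads children before parents, so when a non-\nil\ version is installed at a new patch node, it draws on already-correct subtree versions and thus captures exactly the $Ops$ sequence. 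Combining these cases completes the induction.
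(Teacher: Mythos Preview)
Your overall plan---induction on configurations, case split on the arrival-point rules, and the use of Invariants~\ref{inv:subtree_opskeys} and~\ref{inv:opsk_prfx} for the \op{Refresh} case---matches the paper's proof, and your treatment of case~(iv) is essentially correct. However, there is a real gap in your dismissal of the cases where an $Ops$ sequence grows without a version change. You assert that ``in each such case the affected node also has a \nil\ version after the step,'' but this fails for the delete case (part~\ref{ap-delete} of Definition~\ref{ap-definition}). When \op{Delete}($k$) replaces $p$ by $sib'$, the arrival points from the removed leaf $\ell$ are transferred not only to the freshly created node $sib'$ but also to all \emph{existing} descendants of $sib'$ that lie on the relevant search paths---and those descendants (which are the old children of $sib$) may well have non-\nil\ versions. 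Their $Ops$ sequences grow while their version trees stay put, so you must actually argue that the appended operations leave the encoded key set unchanged; the \nil-exemption does not apply.

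The paper's Case~3 handles this by observing, via the BST property of $Ops$ (Invariant~\ref{inv:subtree_opskeys}) applied at the now-removed node $p$, that for any such descendant $d$ and any key $k'$ transferred from $\ell$, the pre-delete sequence $Ops(C,d,k')$ is empty; and that the transferred block on key $k'$ either nets to nothing (for $k'\neq k$, since by the induction hypothesis on $\ell$ the sequence $Ops(C,\ell,k')$ leaves $k'$ absent) or is immediately followed by $\langle\op{delete}(k):\true\rangle$ (for $k'=k$). In either case the key set at $d$ is unchanged, and the recorded responses are consistent. You should also explicitly cover part~\ref{ap-failed-delete} of Definition~\ref{ap-definition}: a failed \op{Insert} or \op{Delete} assigns an arrival point at an \emph{existing} leaf $\ell$ with a non-\nil\ version and no accompanying SCX. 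That case is easy (the appended operation has response \false\ and does not alter the key set), but your enumeration of SCX and CAS steps does not include it.
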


\begin{proof}
In the initial configuration, both claims hold trivially because each Node’s version tree is empty, and thus all corresponding $Ops$ sequences are empty.

Lets consider a step $s$ of an update operation that either change the number of keys or the $Ops$ sequence of a Node. The step $s$ leads from a configuration $C$ to a configuration $C'$. Lets assume that both the claims hold up to $C$ and we will prove that both the claims will still hold in $C'$ that results from the execution of step $s$. The step $s$ could be due to several cases and we will show by induction that in each of the case the claims hold.

\textbf{Case 1:} Let $s$ be due to either $delete(k)$ or $insert(k)$ in part~\ref{ap-failed-delete} of Definition~\ref{ap-definition}. In this case, $s$ modifies the $Ops$ sequence of leaf $\ell$ with some key $k'$. Specifically, 

$Ops(C', l)$ $=$ $Ops(C, l). \langle delete(k), false\rangle$ and

$Ops(C', l)$ $=$ $Ops(C, l). \langle insert(k), false\rangle$, for the delete and insert operation, respectively. 
Since both operations return false, it does not change the number of keys stored at $\ell$ in $C$ and $C'$.
In both the configurations the set remains $\{k'\}$. So the claims hold.



\textbf{Case 2:} Let $s$ be due to the insert on key $k$ in part~\ref{ap-insert-new} of Definition~\ref{ap-definition}.
In this case, $s$ replaces $\ell$ having key $k'$ by a Node $new$ whose two children are $\ell'$ and $newLeaf$. $W.l.o.g$, assume $\ell'$ is the left child of $new$ and $newLeaf$ is the right child. Then, $new$ and $newLeaf$ are assigned the key $k$ and $\ell'$ gets the key $k'$.

For all keys $k''$ $<$ $k$, $Ops(C', \ell', k'')$ $\geq$ $Ops(C, l, k'')$. Note that delete ops returning false can have arrival points at $C'$ and not $s$.
The $Ops$ sequence at $\ell'$, for $k''$ $=$ $k'$ can either result in $\ang{Ops(insert(k''):true)}$ or $\ang{Ops(insert(k''):false)}$. For $k''$ $\neq$ $k'$, it will end in $\ang{Ops(delete(k''):true)}$
Since the claim holds for $Ops(C, l, k'')$ and since the set of keys do not change in $Ops(C', l, k'')$, the claim holds in $C'$, where the set of keys in $C$ and $C'$ is  $\{k'\}$.

For all keys $k'''$ $\ge$ $k$, $Ops(C', newLeaf, k''')$ $\geq$ $Ops(C, l, k''')\cdot \langle insert(k): true\rangle$.
For $k'''$ $\neq$ $k$, performing $Ops(C, newLeaf, k''')$ will yield same set as performing $Ops(C', newLeaf, k''')$, by induction hypothesis. 
The corresponding $Ops(C', newLeaf, k''')$ will end in $\ang{delete(k'''):false}$ or $\ang{delete(k'''):true}$.
Appending $\langle insert(k): true\rangle$ to $Ops(C', newLeaf, k''')$, will yield a set $\{k\}$, which would be same as if $Ops(C, l, k''')\cdot \langle insert(k): true\rangle$ is performed sequentially.

Remaining is proving that claim holds at $new$ as well. By the definition, at $new$, $Ops(C', new)$ $=$ $Ops(C, l)\cdot \langle insert(k): true\rangle$. 
The $Ops(C', new)$ for all keys $k''$ other than $k'$ and $k$, will end with either $\ang{delete(k''):false}$ or $\ang{delete(k''):true}$. 
For $k'$, $Ops(C', new)$ will either end $\ang{Ops(insert(k'):true)}$ or $\ang{Ops(insert(k'):false)}$
For $k$, $Ops(C', new)$ will end in $\ang{Ops(insert(k):true)}$.
Performing $Ops(C', new)$ sequentially, will yield a set $\{k', k\}$. 
Thus the claims hold.

\textbf{Case 3:} Let $s$ be due to the delete on key $k$ in part\ref{ap-delete} of Definition~\ref{ap-definition}.
In this case, $s$ replaces $p$, which has two children, $\ell$ with key $k$ and $sib$ with key $k'$, with $sib'$ having key $k'$. 
\textit{W.l.o.g.}, assume $\ell$ is left child of $p$ and $p$ is left child of its parent (other cases can be argued similarly).

For any key $k''$, all operations in the $Ops$ sequences of the removed nodes $p$, $\ell$ and $sib$ in $C$  are moved to the $Ops$ sequence of $sib'$ and all its descendants up to a leaf in the search path of $k''$ in $C'$. 
Note that such a search path exists by Lemma~\ref{lem:staysCT}.
Let $z$ be any node in the search path for $k''$ $=$ $k$ from the sub tree rooted at $sib'$ in $C'$.
Then, the $Ops$ sequence of $z$ in $C'$, ends with $\ang{delete(k): true}$, 
For the node $z$ in the search path of $k''$ $\neq$ $k$, the $Ops(C', z, k'')$ either remains same as $Ops(C, z, k'')$ or ends with either $\ang{delete(k''): true}$ or $\ang{delete(k''): false}$.
Thus, for all keys, $k''$, $Ops(C', sib', k'')$ yields a set of keys which is same as $Ops(C, sib, k'')$. This reflects the set of keys is the leaves of the version tree at $sib'$.
Additionally, since the set of responses move along with the operations in the $Ops$ sequences claim 2 also holds.





\textbf{Case 4:} Let $s$ be due to the rebalancing in part~\ref{ap-rotate} of Definition~\ref{ap-definition}.
In this case, $s$ replaces a subtree rooted at a Node $old$ (a child of Node $u$) by a new subtree rooted at $new$. 
This may rearrange the keys of the replaced Nodes but does not change the set of keys in the subtree(see proof in ~\cite[Theorem 6.4]{Bthesis17}.
From Invariant~\ref{inv:subtree_opskeys}, the $Ops$ sequences of all the replaced nodes are transferred to newly created nodes in a way that for each newly created node $z$ with any Node $z_L$ and $z_R$ in its left or right subtree, all operations in $Ops(C', z_L)$ have key less than $z.key$ and all operations in $Ops(C', z_R)$. Moreover, from Invariant~\ref{inv:opsk_prfx}, for all keys $k$ less than $z.key$, $Ops(C', z, k)$ is a prefix of $Ops(C', z_L, k)$ and for all $k$ greater than or equal to $z.key$, $Ops(C', z, k)$ is a prefix of $Ops(C', z_R, k)$. By observation, rebalance step's net effect is that it changes the search paths of existing keys at the leaf level without changing the number of keys at the leaf level. Therefore, in $C'$, both the claims hold.


\textbf{Case 5:} Let $s$ be a successful CAS step of a refresh in part\ref{ap-rotate} of Definition~\ref{ap-definition}.
Let $\sigma_L$ be all operations in $Ops(C, x_L)$ that do not already have an arrival point at $x$ and $\sigma_R$ be all operations in $Ops(C, x_R)$ that do not have an arrival point at $x$.
In this case, $s$ is the arrival point of all operations in $\sigma_L$ and $\sigma_R$ at $x$, such that $Ops(C', x)$ = $Ops(C, x) \cdot \sigma_L \cdot \sigma_R$.

In $C$, Let $v_L$ and $v_R$ be the Versions stored in $x_L.version$ and $x_R.version$, respectively. These two version trees represent sets $\kappa_L$ and $\kappa_R$. 
By \obsref{rd-chd-ver}, $v_L$ was read at some configuration in or before $C$ when $x_L$ was reachable from $x$. 
Consequently, the set $\kappa_L$ consists of keys obtained from sequentially executing operations for the keys in $x_L$'s previous version and for the new keys in $v_L$. Same applies for $v_R$ and $\kappa_R$. 


When $s$ changes $x.version$ to a new Version $v$ with children $v_L$ and $v_R$., then in $C'$, $v$ represents a set of keys $\kappa$ $=$ $\kappa_L$ $\cup$ $\kappa_R$. This is the set of keys which will result in sequentially executing operation in $Ops(C', x)$.
Additionally, since the step $s$ does not change responses of operations that arrived from its children the claim 2 is preserved.



\end{proof}


\begin{corollarynonum}
Each update that terminates returns a response consistent with the linearization ordering.    
\end{corollarynonum}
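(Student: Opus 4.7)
The plan is to combine the linearization scheme (an update is linearized at its arrival point at the root) with the claim (2) of the preceding lemma (responses recorded in $Ops(C, x)$ are consistent with sequential execution). Let $op$ be a terminating update operation that returns response $r$, and let $s$ be the step designated as $op$'s arrival point at the root. By the linearization scheme, $op$ is placed in the linearization order at $s$, so I must show that the response the sequential execution would assign to $op$ (when run in linearization order) equals~$r$.

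First, I would trace through Definition~\ref{ap-definition} to verify that the response recorded alongside $op$ in the $Ops$ sequence at $op$'s \emph{initial} arrival point (at a leaf, or at a newly created leaf/internal node) already equals $r$: in part~\ref{ap-failed-delete}, $op$ returns \textsf{false} and the response recorded is \textsf{false}; in parts~\ref{ap-insert-new} and~\ref{ap-delete}, the successful SCX gives $op$ its arrival point with response \textsf{true}, and in these cases $op$ does return \textsf{true}. Next, I would observe that parts~\ref{ap-insert-new}--\ref{ap-refresh} only ever \emph{transfer} arrival points from one node to another, and the definition explicitly states that responses are copied unchanged during these transfers. By induction on the sequence of transfers leading from $op$'s initial arrival point to its arrival point at the root, the response recorded for $op$ in $Ops(C, \x{Root})$ immediately after $s$ is exactly~$r$.

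Finally, I would invoke claim (2) of the lemma applied to the root at the configuration immediately after $s$: the responses recorded in $Ops(C, \x{Root})$ are consistent with executing those operations sequentially in the order of their arrival points at the root. Since the linearization order of update operations is defined to be precisely the order of their arrival points at the root, the sequential execution in linearization order assigns $op$ the response recorded with it in $Ops(C, \x{Root})$, which by the previous paragraph equals~$r$. This completes the argument.

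The main obstacle is the bookkeeping in the inductive step that responses are preserved through every kind of transfer, especially in the rebalancing case (part~\ref{ap-rotate}), where arrival points from several fringe nodes are interleaved into the $Ops$ sequences of new ancestors. One must be careful that the same-key ordering imposed by parts~\ref{ap-insert-new}, \ref{ap-delete}, \ref{ap-rotate} and~\ref{ap-refresh} is preserved, so that the \emph{order} of responses (and hence the result of the sequential replay) is not disturbed. Once this is checked, the corollary follows immediately from the lemma.
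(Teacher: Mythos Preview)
Your proposal is correct and follows essentially the same approach as the paper: match the recorded response to the actual return value via Definition~\ref{ap-definition}, observe that responses are carried unchanged as arrival points propagate to the root, and then apply claim~(2) of Lemma~\ref{inv:version_keys_resp} at the root. The paper's proof is just a terser version of yours; your explicit attention to the response-preservation bookkeeping in the transfer cases (especially part~\ref{ap-rotate}) is a welcome clarification rather than a departure.
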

\begin{proof}
    By Definition~\ref{ap-definition}, for an insert or delete that returns false, the response value associated with it is \texttt{false}. Similarly, for an insert or delete that returns true, the response value associated with it is \texttt{true}.
    In both cases, the responses are carried up along with their corresponding operations when they arrive at root. 
    From Invariant~\ref{inv:version_keys_resp}, it follows that responses of operations are consistent with performing the operations sequentially in their linearization order.    
\end{proof}


The following invariant follows from the way \op{Refresh} works.
Essentially, all the Version nodes are immutable and they satisfy the Invariant at the time they are created at Line ~\ref{lin:allocateVersion} in \op{Refresh}. 
\begin{invariant}
    For every Version v that has children, v.size = v.left.size + v.right.size
\end{invariant}

\begin{corollary}
For every Version v, v.size is the number of leaves in the tree rooted at v that contains key from the universe of keys.
\label{cor:correctsize}
\end{corollary}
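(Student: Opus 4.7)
The plan is to establish the corollary by structural induction on the version tree rooted at $v$, leveraging the preceding invariant that $v.\f{size} = v.\f{left}.\f{size} + v.\f{right}.\f{size}$ for every internal Version, together with the Version Initialization Rules of \Cref{def:vpr}.

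For the base case, I would consider a Version $v$ with no children, which can only arise from a leaf Node $x$ (since \op{Refresh} always constructs Versions with two children at line~\ref{lin:allocateVersion}, so childless Versions can only be created by the initialization in \Cref{def:vpr}). If $x$ is a non-sentinel leaf, then by \Cref{def:vpr}, $v.\f{size} = 1$, which correctly counts the single universe key stored at $x$. If $x$ is a sentinel leaf (with key $\infty$), then $v.\f{size} = 0$, which correctly reflects that the sentinel key does not belong to the universe of keys. Either way, the claim holds.

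For the inductive step, consider a Version $v$ with children $v.\f{left}$ and $v.\f{right}$. By the preceding invariant, $v.\f{size} = v.\f{left}.\f{size} + v.\f{right}.\f{size}$. By the induction hypothesis applied to the (strictly smaller) subtrees rooted at $v.\f{left}$ and $v.\f{right}$, each child's \f{size} equals the number of leaves containing a universe key in its subtree. Since the leaves in the subtree rooted at $v$ are exactly the disjoint union of the leaves of its two child subtrees, the sum gives the desired count.

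The only subtle point, and thus the one I would state carefully, is why the induction is well-founded: a Version tree is finite because each Version is immutable once created (new Versions are only allocated at line~\ref{lin:allocateVersion} pointing to existing Version objects) and each Version's children pointers are assigned exactly once at construction, so there can be no cycles. This, combined with the fact that \op{Refresh} only builds a new Version from already-existing child Versions, guarantees the subtree of any Version is finite, and induction on its depth completes the argument. No real obstacle arises; the corollary is essentially a direct bookkeeping consequence of the invariant plus the initialization rules.
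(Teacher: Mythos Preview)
Your proposal is correct and is exactly the natural argument: the paper states this result as an immediate corollary of the preceding invariant (without giving a proof), and your structural induction using that invariant together with the Version Initialization Rules of \Cref{def:vpr} is precisely how one would fill in the details.
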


\begin{lemma}
    The result returned by each query operation is consistent with the linearization.
\end{lemma}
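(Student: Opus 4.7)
The plan is to linearize each query at the configuration $C_q$ in which it reads $\x{Root}.\f{version}$ (e.g., \Lineref{read-root-find} of \op{Find}) and then argue that the value returned is exactly the answer one would get by running the query's sequential algorithm against the abstract set determined by the updates linearized at or before $C_q$. Since all other (read-only) queries behave analogously, it suffices to give the argument once, parameterized by the sequential query algorithm.

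First I would observe that Version objects are immutable: \op{Refresh} only ever allocates a fresh Version at \Lineref{allocateVersion} and installs it via a single CAS at \Lineref{refreshCAS}, and no field of an existing Version is ever written. Consequently, once a query reads a pointer $v = \x{Root}.\f{version}$, the entire version tree rooted at $v$ that it subsequently traverses is frozen and equal to the version tree rooted at $v$ in $C_q$. Combined with \Cref{inv:version_keys_resp}, this means that the tree the query traverses is a legal augmented BST whose leaves store exactly the set $S$ of keys obtained by sequentially executing $\x{Ops}(C_q, \x{Root})$, and whose augmented fields (e.g., \x{size}, by \Cref{cor:correctsize}) are computed correctly from that tree.

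Next I would match $S$ with the linearization. Update operations are linearized at their arrival points at the root, and queries at their read of $\x{Root}.\f{version}$, so the prefix of the linearization strictly preceding the query consists exactly of the updates whose arrival points at the root lie at or before $C_q$, i.e., $\x{Ops}(C_q, \x{Root})$. By the ``consistent with responses'' corollary, executing $\x{Ops}(C_q, \x{Root})$ sequentially produces the same set of keys (and the same update responses) as executing the linearization prefix. Therefore the abstract set that the query must answer against is exactly the set $S$ represented by the leaves of the snapshot rooted at $v$.

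Finally I would invoke correctness of the sequential query algorithm: since any standard sequential BST query on an augmented BST that represents $S$ returns the correct answer for $S$, and our query runs such an algorithm on a snapshot that is a legal augmented BST representing $S$, the response is consistent with the linearization. The main obstacle is really just the first step—justifying that what the query sees \emph{is} a snapshot. The immutability of Versions plus the fact that \op{Refresh} copies child version pointers into the new Version means the subtree reachable from $v$ is fixed forever after $v$ is created, so a concurrent refresh of \x{Root} only installs a new Version while leaving $v$'s subtree untouched; nothing subtle beyond this is needed, and the rest follows from \Cref{inv:version_keys_resp} and the linearization convention already used for updates.
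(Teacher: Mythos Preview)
Your proposal is correct and follows essentially the same approach as the paper: linearize the query at its read of $\x{Root}.\f{version}$, invoke immutability of Versions to get a snapshot, apply \Cref{inv:version_keys_resp} at the root together with \Cref{cor:correctsize}, and conclude that the sequential query algorithm run on this snapshot returns the answer consistent with the linearization. Your write-up is more explicit than the paper's (in particular you spell out why the linearization prefix preceding the query coincides with $\x{Ops}(C_q,\x{Root})$), but the argument is the same.
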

\begin{proof}

Queries are defined to be linearlized at the moment they read $root.version$. This provides a query with an immutable snapshot of the Version tree. When Invariant~\ref{inv:version_keys_resp} is applied at root, the Version tree obtained by the query is a BST and accurately reflects the precise set of keys that would result from sequentially performing all update operations in the $Ops$ sequence of the root.
Additionally, Corollary~\ref{cor:correctsize} ensures that the number of leaves at a Version tree rooted at any node $v$ taken from the snapshot is same as the value of $size$ in $v$.

\end{proof}

\end{document}